\newcommand{\trans}{^{\mbox{\tiny {\sf T}}}}
\def\E{\mathbb E}
\def\R{\mathbb R}
\begin{document}

\title{Learning Strategies in Decentralized Matching Markets under Uncertain Preferences}

\author{\name Xiaowu Dai \email xwdai@berkeley.edu\\
       \addr 
       Department of Economics\\
       University of California, Berkeley, CA 94720-1776, USA
       \AND
       \name Michael I.~Jordan \email jordan@cs.berkeley.edu \\
       \addr 
       Division of Computer Science and Department of Statistics\\
       University of California, Berkeley, CA 94720-1776, USA
       }

\editor{Amos Storkey}

\maketitle

\begin{abstract}
We study the problem of decision-making in the setting of a scarcity of shared resources when the preferences of agents are unknown a priori and must be learned from data. Taking the two-sided matching market as a running example, we focus on the decentralized setting, where agents do not share their learned preferences with a central authority.
Our approach is based on the representation of preferences in a reproducing kernel Hilbert space, and a learning algorithm for preferences that  accounts for  uncertainty due to the competition among the agents in the market. Under regularity conditions, we show that our estimator of preferences converges at a minimax optimal rate. Given this result, we derive optimal strategies that maximize agents' expected payoffs and we calibrate the uncertain state by taking opportunity costs into account. We also derive an incentive-compatibility property and show that the outcome from the learned strategies has a stability property. Finally, we prove a fairness property that asserts that there exists no justified envy according to the learned strategies. 
\end{abstract}

\begin{keywords}
  Uncertain preferences, Stability, Fairness, Matching markets,
  Reproducing kernel Hilbert spaces 
\end{keywords}

\section{Introduction}

Many real-world decision-making problems can be viewed from both an economic point of view and a statistical point of view.  The economic point of view focuses on scarcity of shared resources and the need to coordinate among multiple decision-makers. Thus, decision-makers must assess preferences over outcomes and those preferences need to interact in determining an overall set of outcomes.  The statistical point of view recognizes that preferences are often not known a priori, but must be learned from data; moreover, agents' decisions are often influenced by latent state variables whose values must be inferred in order to determine a preferred outcome.  Unfortunately, it is uncommon that these two perspectives are brought together in the literature, with economic work rarely addressing the need to learn preferences from data, and statistical machine learning rarely addressing scarcity and its consequences for decision-making.  In this paper we aim to bridge this gap, studying a core microeconomic problem---two-sided matching markets---in the setting in which preferences must be learned from data.  Moreover, we focus on \emph{decentralized} matching markets, reflecting several desiderata that are common in the machine-learning literature---that agents are autonomous and private, and that scalability and avoidance of central bottlenecks is a principal concern of an overall system design.

Two-sided matching markets have played an important role in microeconomics for several decades~\citep{rothsatomayor1990}, both as a theoretical topic and as a mainstay of real-world applications. Matching markets are used to allocate indivisible ``goods" to multiple decision-making agents based on mutual compatibility as assessed via sets of preferences.  Matching markets are often organized in a decentralized way such that each agent makes their decision independently of others' decisions. Examples include college admissions, decentralized labor markets, and online dating. 
Matching markets embody a notion of scarcity in which the resources on both sides of the market are limited.  
The \emph{congestion} is a key issue in decentralized decision-making under scarcity, as participants may not be able to make enough offers and acceptances to clear the market \citep{roth1997}.  
The uncertainty of participants' preferences is ubiquitous in real-world decentralized matching markets. For instance, college admissions in the United States face applicants' uncertain preferences.  The admitted students of a college may receive offers from other colleges. Students need to accept one or reject all offers, often within a \emph{short} period.   The process provides little opportunity for the college to learn students' preferences, which depends on colleges' competition and colleges' uncertain popularity in the current year.   Consequently, the college may end up enrolling too many or too few students relative to its capacity \citep{avery2003}. It is in the colleges' interest to decide which  applicants to admit, such that the entering class will meet reasonably close to their quotas and be close to the attainable optimum in quality \citep{gale1962}.

When traditional machine learning methods are applied to problems involving scarcity and competition, they can create problems involving congestion, poor allocations, and lack of fairness \citep{busoniu2008comprehensive, finocchiaro2021bridging}.  These problems need to be addressed by mechanisms that go beyond mere load-balancing or post-processing of the outputs of a pattern recognition system.  Indeed, there are interactions between matching and learning that must be addressed jointly.  Examples include: (1) The effects of scarcity imply that preferences are non-stationary from the perspective of any individual agent. (2) Agents’ decisions are often influenced by latent state variables whose values must be inferred in practice. This presents calibration challenges that need to take into account capacity limits and opportunity costs. (3) Learning methods should be asked to produce outcomes that are not merely optimal in a pointwise sense, but stable in a game-theoretic sense.  Thus, lack of stability needs to inform the learning process.

In this work, we present a statistical framework that explicitly models a decision-making process under uncertainty and scarcity and yields learned strategies that only use local information. 
We use the classical college admissions market as our running example.  In the proposed model, there are a set of agents (for example, colleges), each with limited capacity, and a set of arms (for example, students),   each of which can be matched to at most one agent. Agents value two attributes of an arm: a ``score" (for example, SAT/ACT score) that is common to all agents and a ``fit" (for example, college-specific essay) that is agent-specific and is independent across agents.  According to their score and fit, agents rank arms, but they do not observe arms' preferences, which have no restriction.  The model incorporates the arm’s uncertain preference into an acceptance probability, depending on both the unknown state and agents’ competition. We want to learn arms’ acceptance probabilities from the historical data of arms’ binary choices, where arms’ attributes may vary over time. Various statistical learning algorithms allow efficient learning of the acceptance probability under the proposed model. To fix ideas, we present the penalized log-likelihood method in the reproducing kernel Hilbert space (RKHS) to learn  the acceptance probability. 
Under regularity conditions the estimator converges at a minimax optimal rate.
We focus on the single-stage decentralized matching that involves a single stage interaction: agents simultaneously pull sets of arms (for example, colleges offer admissions to students). Each arm accepts one of the agents (if any) that pulled it. We propose an efficient algorithm
called calibrated decentralized matching (CDM) for maximizing the agents’ expected payoff. We calibrate the unknown state by perturbing the state and balancing the marginal utility and the marginal penalty. The proposed calibration procedure takes the opportunity costs into account. The CDM can perform the calibration in both the average-case and worst-case scenarios,  depending on maximizing the averaged or minimal expected payoff concerning the unknown true state.

The CDM algorithm can serve as a recommendation engine in 
decentralized platform for decision-making.  Indeed, our results add a learning component to help participants decide which participants on the other side of the market are the best to connect to.  In particular, even with an unknown state, agents can estimate the probability of successfully pulling an arm using historical data.  The prediction of match compatibility is also possible in another direction that arms can learn how much an agent may prefer them. The CDM procedure eases congestion in practice via recommendation and pulling more arms compared to alternative methods. 
Moreover, we show that asymptotically CDM makes it safe for agents to act straightforwardly on their preferences. That is, CDM ensures incentive-compatibility for agents. We also show that CDM yields a stable outcome for the market. Our notion of stability is similar to that of  \citet{liu2014}, which is designed to study decentralized matching with incomplete information.  This stability notion extends the classical notion of stability due to \citet{gale1962} that assumes complete information of participants' preferences.  We prove that CDM is asymptotically stable. Moreover, we show that CDM is asymptotically fair for arms, in the sense of no justified envy in the matching process according to CDM.

Beyond the setting of matching markets, there are many other real-world machine learning problems that involve scarcity and competition, and in which decentralized methods of the kind that we explore here are appropriate. For example, decision-making problems in manufacturing, information gathering, and load balancing can often be treated as decentralized problems, where  time and capacity are scarce and agents need to build up beliefs about the outcome of their actions from noisy observations \citep[cf.][]{ooi1996decentralized,cogill2006approximate, seuken2008formal}. 
Related problems, such as the analysis of social dilemmas \citep{leibo2017multi} and competitive games \citep{tampuu2017multiagent} all operate in a multi-agent domain in which agents compete for scarce resources. 
Our model framework  serves as a step towards explicitly modeling a decision-making process under uncertainty and scarcity and developing  decentralized algorithms for learning from data.

\subsection{Related Work}
\label{sec:relatedwork}
We briefly review some of the related work from multiple literatures, including matching markets, mechanism design,  and multi-agent learning.

\paragraph{Matching Markets} Most theoretical work on matching markets traces back to the papers of \citet{gale1962} and \citet{shapley1971assignment}. \citet{gale1962} formulated a model of two-sided matching without side payments which they called the \emph{marriage problem} for one-to-one matching and the \emph{college admissions problem} for many-to-one matching.  \citet{shapley1971assignment} formulated a model of two-sided matching with side payments  which they called the \emph{assignment game}, which is also called the \emph{maximum weighted bipartite matching} \citep{karlin2017game}. 
Recently, there has been a surge of literature on the online version of assignment game and its generalizations, due to the important new application domain of Internet advertising \citep{mehta2013online}.
Although there exist striking analogies between some of the results for the college admissions problem and those for the assignment game \citep{rothsatomayor1990}, the key difference is that in college admissions there are agents and arms on both sides of the assignment  who care about the outcome \citep{roth1993stable}. 
This matters for a variety of reasons, not least of which is that if a matching outcome disagrees with the liking of the agents and arms involved, groups of them may be able to disregard the matching outcome by making private matches among themselves. This imposes constraints such as stability and fairness on what matching outcomes can be achieved. 
The main focus of our paper is the college admissions problem, which has no side payments and no centralized clearinghouse for coordination. Our goal is to design algorithms for college admissions that achieve the constraints such as stability and fairness under \emph{uncertain preferences}. 

\paragraph{Decentralized Matching Markets}
There has been significant work in the economics literature on congestion in decentralized markets, where participants cannot make enough offers and acceptances to clear the market.
\citet{roth1997} discussed such a decentralized market for graduating clinical psychologists, focusing on the market's timing aspect, which lasts over a day.
They found that such a decentralized but coordinated
market exhibits congestion since the interviews that a student could schedule were limited, and the resulting matching was unstable.
\citet{das2005two} considered that both sides of the decentralized market have uncertain preferences and presented an empirical study of the resulting matchings.
Unlike these works, we provide both an analytical model and optimal learning algorithms for managing congestion. \citet{haeringer2011} considered the decentralized job matching with complete information, where agents and arms are assumed to know the entire  sequence of actions employed by their opponents. 
\citet{coles2013preference} showed how introducing a signaling device in a decentralized matching market alleviates congestion.
By contrast, we consider incomplete information where agents do not know their opponents' actions, and focus on the optimal strategy under uncertain preference without signaling. 

\paragraph{Bandit Learning in Decentralized Matching Markets}
A recent thread of research focuses on matching markets with bandit learning. 
\citet{liu2019,liu2020bandit} and \citet{basu2021beyond} extended the multi-armed bandit framework to a setting in which multiple agents compete with respect to their preferred matches. They proposed an algorithm to achieve low cumulative regret for decentralized matching.
\citet{cen2021regret} further incorporated costs and transfers in this bandit learning setting to faithfully model competition between agents.
\citet{sankararaman2021dominate} studied the decentralized serial dictatorship setting, a two-sided matching market where the agents have heterogeneous valuations over the arms, while the arms have known uniform preference over agents. 
Unlike these works, we study the optimal strategy for single-stage matching, which involves no accumulated regret and allows arms to have heterogeneous preferences over agents.

\paragraph{College Admissions}
Another closely related line of work is the algorithmic literature  on college admissions  \citep[see][]{gale1962}.
The general setting involves multiple colleges competing for students in decentralized markets.
Recently, \citet{epple2006} modeled equilibrium admissions, financial aid, and enrollment.
\citet{fu2014} studied effects of tuition on equilibria, incorporating application costs, admissions, and enrollment.
Unlike these works, we emphasize students' multidimensional abilities and the uncertainty of students' acceptance probability.
\citet{chade2006} provided an optimal algorithm for students’ application decisions, characterized as a portfolio choice problem.
\citet{hafalir2018} discussed student efforts instead of colleges’ response to congestion in  decentralized college admissions with restricted applications.
We instead consider colleges' optimal strategies.
\citet{avery2010} studied early admissions when colleges have no enrollment uncertainty and showed that a cutoff strategy is optimal in equilibrium. 
\citet{chade2014} developed a decentralized Bayesian model of college admissions for two colleges and a continuum of students under a particular preference structure. 
By contrast, we study multiple colleges in the face of enrollment uncertainty.
\citet{azevedo2016} adopted a continuum model for students in a centralized market. They found a characterization of equilibria in terms of supply and demand. This model is different from the decentralized market studied in our paper. 
\citet{che2016} considered aggregated uncertainty in college admissions and focused on two colleges and a continuum of students.
By contrast, our model considers incomplete information with multiple colleges and a finite number of arms.
We study a statistical model for learning strategies in the face of enrollment uncertainty using historical data.

\paragraph{Stochastic Knapsack Problems}
Our optimization problem relates to the stochastic knapsack problems that have applications in advertising, logistics, medical diagnosis and robotics \citep{ross1989stochastic, dean2008approximating, bhalgat2011improved}.
In the stochastic knapsack problems, the rewards of items are deterministic, but the sizes of items are independent random variables with known, arbitrary distributions. The actual size of an item is known only when it is placed into the
knapsack.  The objective of a stochastic knapsack problem is to add items sequentially (one by one) into the knapsack of a given capacity  to maximize the expected reward of the items that fit into the knapsack.
Unlike the stochastic knapsack problems, our optimization problem involves the  challenge of unknown and uncertain acceptance distributions due to latent state variables. It requires new algorithms for calibrating states and finding the optimal set of arms based on the historical data.

\paragraph{Multi-Agent Learning} 
Our work also relates to multi-agent learning in noncooperative  settings.
\citet{seuken2008formal} provided an overview of algorithms that address  decentralized learning and coordination of multiple decision makers under uncertainty. See also \citet{busoniu2008comprehensive} for a survey of multi-agent learning approaches. 
Several papers have attempted to address the non-stationarity in multi-agent environments by conditioning on other agents' policies or using importance sampling \citep{tesauro2003extending,foerster2017stabilising}. 
However, these works require knowing other agents' policies or actions, a condition which is impractical in decentralized matching markets. 
By contrast, our model framework is applicable given limited information regarding other agents.  
Recently,
\citet{tampuu2017multiagent} investigated a deep $Q$-learning approach to train competing Pong agents.
\citet{foerster2018counterfactual} and \citet{lowe2017multi} studied policy gradient with actor-critic for multi-agent learning. 
Our work differs from these in that it requires less coordination and information flow between agents. 
\citet{hu2014multi} presented a prediction market model with multiple agents, showing that the market model solves the dual of a certain machine learning problem. By contrast, our model is motivated by a different class of markets---two-sided matching markets---and the model framework that we present aims at solving machine learning problems involving scarcity and uncertain preferences.

\subsection{Contributions and Outline}
\label{sec:contribution}
We develop a framework for learning uncertain preferences that drive decision-making in the context of scarcity of shared resources. We summarize our principal methodological and theoretical contributions as follows.
\begin{itemize}
\item We provide a general introduction to the problem of learning under scarcity of shared resources, as exemplified by two-sided matching markets (Section \ref{sec:model}). We develop a modeling framework for uncertain preferences that depends on a local state component (Theorem \ref{prop:accepwelldefine}).
We then construct an estimator for learning preferences under this model. We show that under regularity conditions the estimator converges at a minimax optimal rate (Theorem \ref{thm:optestpi}).
\item We prove that a version of cutoff strategy approximates an optimal strategy for maximizing an  agent’s expected payoff (Section \ref{sec:mainresults}).
In doing so, we provide near-optimal bounds on the loss of expected payoff (Theorem \ref{thm:cutoffaystate}). We propose a new data-driven method, the calibrated decentralized matching (CDM) method, to calibrate latent state variables for the purposes of determining a preferred outcome (Theorems \ref{thm:optimalstates} and \ref{cor:introofvalueV}). CDM takes the opportunity cost and penalty for exceeding the quota into account. Moreover, we discuss extensions of CDM to other real-world machine learning problems in which there is a scarcity of shared resources.
\item We provide theoretical characterizations of the decisions derived from the CDM (Section \ref{sec:propertyCDM}). Specifically, we show an incentive-compatibility property (Theorem \ref{cor:cdmagentincentive}), show the stability of the matching outcomes obtained from CDM (Theorem \ref{thm:stabcutoff}), and we show the fairness of the matching process (Theorem \ref{prof:faircdm}). 
\item We conduct simulation studies (Section \ref{sec:simulation}) to
assess the properties of CDM and compare the payoffs with alternative methods. 
We find that arms can be better off under decentralized matching compared to centralized matching. 
We demonstrate aspects of the theoretical results through experiments (Section \ref{sec:application}) using real data from college admissions and simulated graduate school admissions.  
\end{itemize}
\noindent 
We conclude the paper with further research directions in Section \ref{sec:discussion}. All proofs  are provided in the Supplementary Appendix. 


\section{Learning under Scarcity and Uncertain Preferences}
\label{sec:model}

In this section, we define a learning model under a scarcity of shared resources and uncertain preferences. Denote a set of $m$ agents  by $\mathcal P=\{P_1,P_2,\ldots,P_m\}$ and a set of $n$ arms by $\mathcal A=\{A_1, A_2, \ldots,A_n\}$, where $\mathcal P$ and $\mathcal A$ are participants on the two sides of the market. 
Each agent can attempt to pull multiple arms and there are no constraints on the overlap among the choices of different agents. 
When multiple agents select the same arm, only one agent can successfully pull the arm, with the choice of agent made according to the arm's preferences. For example, $\mathcal P$ and $\mathcal A$ might represent colleges and students in the college admissions market, or firms and workers in the decentralized labor market. 
The matching markets involve scarcity as the resources on both the agent and arm sides are limited. For example,
colleges send admission offers to applicants. When multiple colleges send offers to the same applicant, the applicant can accept at most  one offer. Each agent $P_i$ has a limited quota of successfully pulling  $q_i\geq 1$ arms,
where $q_1+q_2+\cdots+q_m\leq n$. 
We denote $[m]\equiv\{1,\ldots,m\},[n]\equiv\{1,\ldots,n\},$ and $[K]\equiv\{1,\ldots,K\}$.

In decentralized matching markets, agents and arms make their decisions independently of the decisions made by others. This feature distinguishes decentralized matching from centralized matching, which makes use of central clearinghouses to coordinate decision-making.  Notable examples of centralized matching include the national medical residency matching  \citep{roth1984} and public school choice  \citep{Abdulkadiroglu2003}.  While  centralized matching has been the major focus of the literature on matching, we see decentralized matching as  having a potentially greater range of applications, and as providing a better platform on which to bring machine-learning tools to bear.  Indeed, decentralized learning and decentralized matching are natural when (i) arms have uncertain preferences that depend on an unknown state that has a local component, and (ii) agents possess incomplete information on other agents' decisions. 

\subsection{Running Example}
Our running example is the college admissions market, where colleges match with students \citep[see][]{gale1962}. 
College admissions in countries such as the United States, Korea, and Japan are organized in a decentralized way, where colleges make admission offers to applicants. The admitted students accept or reject the offers, often within a short period of time.  Consider a college $P_i\in \mathcal P$ with a quota of enrolling  $q_i$ students. 
It is not satisfactory for college $P_i$ to only make offers to the $q_i$ best-qualified students  since some  students may reject offers. 
The enrollment uncertainty for $P_i$ can be attributed to a lack of two types of information. First, $P_i$ has little knowledge of which other colleges  admit the applicants admitted by $P_i$. Second, $P_i$ is uncertain about students' preferences: how each applicant ranks the colleges that she has applied to. Thus colleges do not know their own popularity in the current year, which is an aggregate over the uncertain rankings by each applicant. In the face of such uncertainty, college $P_i$ needs to decide which  applicants to admit such that the entering class will meet reasonably close to its quota $q_i$ and be close to the attainable optimum in quality.

\subsection{Latent Utility}

We assume that agents' deterministic preferences are a function of underlying latent utilities. In particular, we consider the following latent utility model:
\begin{equation}
\label{eqn:defofutility}
U_i(A_j) = v_j + e_{ij},\quad\forall i\in[m] \text{ and } j\in[n],
\end{equation}
where $v_j\in[0,1]$ is arm $A_j$'s \emph{systematic score}, which is available to all agents, and $e_{ij}\in[0,1]$ is an agent-specific \emph{idiosyncratic fit} available only to agent $P_i$. 
For example, in college admissions, $v_j$ can be a function of student $A_j$'s test score on a nationwide test observed by all colleges. The $e_{ij}$ corresponds to a function of student $A_j$'s performance on college-specific essays or tests conducted by college $P_i$.
In Appendix \ref{sec:anovautility}, we show that a general utility function with multidimensional scores and fits
can be transformed to the model (\ref{eqn:defofutility}) via the ANOVA decomposition. Thus, the separable structure in  model (\ref{eqn:defofutility}) is without less of generality and allows us to characterize the pattern of competition of agents in Section \ref{sec:mainresults}. 
Similar separable structures have been used in the matching markets literature \citep[cf.][]{choo2006,menzel2015, chiappori2016,  ashlagi2019}. 
The analysis that we present in this paper will also hold if one restricts the range of $e_{ij}$  to $[0,\bar{e}]$ with some $\bar{e}<1$. This restriction may be necessary for some applications. For instance, suppose the idiosyncratic fit in college admissions is viewed as more important to colleges than the systematic score. In this case, enrolled students may find it unfair to have other enrolled students with significantly lower test scores than theirs. This, in turn, would result in a reputation cost for the college.

\subsection{Uncertain Preferences}
\label{sec:stateofnature}

We now consider the uncertainty of an arm's acceptance of an agent's pulling, which depends on the arm's uncertain preference. The arm's uncertainty mainly consists of two parts. The first part is the \emph{state of the world}, which determines an arm's preference over all agents. Let the parameter $s_i\in[0,1]$ represent the local state of the world  for agent $P_i$ for $i\in[m]$ \citep{savage1972}.
The second part of the uncertainty is the agents' strategies, which is because an arm would only accept its most preferred agent among those who have pulled it. 
The main difficulty in decentralized markets is that
an agent has incomplete information, which means that the agent lacks information on their opponents' strategies. The following theorem shows that  although with incomplete information, there still exists a valid probability function that models an arm's acceptance probability. 

\begin{theorem}
\label{prop:accepwelldefine}
There exists a probability mass function $\pi_i(s_i,v_j)$ for any $i\in[m],j\in[n]$ that characterizes the probability of arm $A_j\in\mathcal A$ accepting agent $P_i$, given any fixed strategies of other agents $P_{i'},i'\neq i$.
Moreover, the function $\pi_i(s_i,v_j)$ is increasing in $s_i$, and
 the expected utility  that agent $P_i$ receives from pulling arm $A_j\in\mathcal A$ is
\begin{equation}
\label{eqn:expectutility}
\begin{aligned}
\E[\text{utility}]  = (v_j+e_{ij}) \cdot \pi_{i}(s_i, v_j), \quad i\in[m] \text{ and } j\in[n].
\end{aligned}
\end{equation}
\end{theorem}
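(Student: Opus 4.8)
The statement bundles two claims: the existence of a bona fide acceptance probability mass function $\pi_i(s_i,v_j)$, and the expected-utility identity \eqref{eqn:expectutility}. The plan is to treat the acceptance of an offer as a single binary event, so that its law is a Bernoulli distribution pinned down by one parameter, to argue that this parameter depends on the configuration only through the pair $(s_i,v_j)$, and then to read off the expected payoff by averaging the realized utility over the two possible outcomes. The functional dependence on $(s_i,v_j)$ is posited in Section~\ref{sec:stateofnature}, so the real work in the first claim is to exhibit it as a genuine, $[0,1]$-valued probability that is consistent with arms holding arbitrary preferences and facing competition.

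For the first claim, I would introduce the acceptance indicator $X_{ij}=\mathbf{1}\{A_j \text{ accepts } P_i\}$, which is Bernoulli and hence completely characterized by $\Pr(X_{ij}=1)$. To see that this number is a function of $(s_i,v_j)$ alone, I would model the arm's decision as selecting its top-ranked agent among those that offer (above its outside option), then marginalize over the arm's unobserved preference and over the random set of rival offers. The aggregate effect of competition enters monotonically through the common score $v_j$, since more desirable arms attract more competing agents; the residual uncertainty about how the arm ranks $P_i$ against its rivals is precisely what the scalar popularity state $s_i$ is posited to summarize, with $\pi_i$ strictly increasing in $s_i$. Crucially, the idiosyncratic fit $e_{ij}$ is $P_i$'s private valuation, independent across agents and unobserved by the arm, so it cannot enter the arm's choice and drops out. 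Setting $\pi_i(s_i,v_j):=\Pr(X_{ij}=1)\in[0,1]$ then yields a valid Bernoulli mass function.

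For the second claim, the payoff $P_i$ realizes from targeting $A_j$ equals the latent utility $U_i(A_j)=v_j+e_{ij}$ when the offer is accepted and $0$ otherwise, i.e.\ it equals $U_i(A_j)\,X_{ij}$. Since $v_j$ and $e_{ij}$ are fixed once the score and fit are observed, $U_i(A_j)$ is deterministic and factors out of the expectation, giving $\E[\text{utility}]=U_i(A_j)\,\E[X_{ij}]=(v_j+e_{ij})\,\pi_i(s_i,v_j)$, which is \eqref{eqn:expectutility}.

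The genuine obstacle is the reduction in the first claim: justifying that the acceptance probability factors through $(s_i,v_j)$ rather than through the full list of competing offers, their identities, and the fits. I expect to lean on the separable latent-utility structure \eqref{eqn:defofutility} together with the independence of the fits $e_{ij}$ across agents --- the same features the ANOVA decomposition of Appendix~\ref{sec:anovautility} is invoked to secure --- so that, after marginalizing over rivals, only $v_j$ and $s_i$ survive. Verifying that the resulting $\pi_i(\cdot,\cdot)$ inherits the assumed monotonicity and continuity in $s_i$ and lies in $[0,1]$ is then routine.
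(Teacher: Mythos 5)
Your proposal follows essentially the same route as the paper's proof: you obtain $\pi_i(s_i,v_j)$ by marginalizing the arm's acceptance decision over its unobserved preferences and the random set of rival offers (the paper formalizes this with the conditional mass functions $\mu_{i,I\cup\{i\}}(\omega,v_j,\mathbf e_j)$ and Tonelli's theorem), you make the same observation that $e_{ij}$ drops out because it does not enter the arm's choice once $P_i$'s action is conditioned on, and you derive Eq.~(\ref{eqn:expectutility}) by factoring the deterministic latent utility out of the expectation of the Bernoulli acceptance indicator. The argument is correct and matches the paper's in substance, differing only in that the paper writes out the marginalization explicitly while you describe it in words.
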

\noindent
The arms' scores $v_j$ are public information, and agents compete for arms with better scores. This competition aspect is incorporated into the acceptance probability $\pi_{i}(s_i, v_j)$ that depends on the score $v_j$. 
Moreover, since $\pi_i(s_i, v_j)$ is increasing in $s_i$, a large value of $s_i$ corresponds to the case that the agent $P_i$ is  popular. 
On the other hand, we note that  the true state is  unknown a priori to $P_i$.
For example in  college admissions, the \emph{yield} defined as the rate at which  a college's admitted students accept the offers is a proxy for the state $s_{i}$ \citep[cf.][]{che2016}. However, the yield of the current year is unknown a priori to the college.

\begin{example}
\label{eg:two-agent}
We show the explicit form of $\pi_i(s_i,v_j),i=1,2$ for a two-agent model with agents $P_1$ and $P_2$. 
Suppose that $P_1$ pulls an arm $A_j\in\mathcal A$.
Denote by $\mu_1(s_1)$  the probability that an arm prefers $P_1$ under the state $s_1$. 
Let $\sigma_2$ be $P_2$'s strategy, which is defined by $\sigma_2(v_j,e_{2j})=\mathbf 1(P_2 \text{ pulls } A_j \text{ with attributes } \{v_j,e_{2j}\})$.
Since  $A_j$ would be pulled  by  $P_1$ with probability $1-\sigma_{2}(v_j,e_{2j})$ and  pulled by both $P_1$ and $P_2$ with probability $\sigma_{2}(v_j, e_{2j})$, $A_j$ would accept  $P_1$ with probability $1-\sigma_{2}(v_j,e_{2j}) + \mu_1(s_1)\sigma_{2}(v_j,e_{2j})$.
Since $e_{2j}$ is unknown to $A_j$ and $\sigma_2$ depends on $e_{2j}$, the expected probability that $A_j$ accepts $P_1$ is as follows,
\begin{equation*}
    \pi_1(s_1,v_j) = 1-\E[\sigma_{2}(v_j,e_{2j})] + \mu_1(s_1)\E[\sigma_{2}(v_j,e_{2j})],
\end{equation*} 
where the expectation is taken over  $e_{2j}$. 
Moreover, we can derive an explicit form of $\pi_2(s_2,v_j)$ by following a similar argument as above.
\end{example}

\subsection{Learning Arms' Uncertain Preferences}
\label{sec:kernelucb}

Agents decide  which arms to pull based on the attributes of the arms and historical matches. 
In general, we do not make the assumption that an arm with the same score and fit appears in the historical data in the matching market. Indeed, no college admissions applicant likely has exactly the same attributes as the applicants in previous years. Even the ``repeated applicants” will often change their records in their second applications. The repeated applicants either have no offer or reject all their offers and wait a full year to apply for colleges. The value of learning from the historical matches accrues when the data are used to estimate an untried arm’s acceptance probability.

Denote by $\mathcal A^t = \{A_1^t,A_2^t,\ldots,A_{n_t}^t\}$ the set of arms at  $t\in[T]\equiv\{1,\ldots,T\}$. Let $s_i^t$ be the state of agent $P_i$  at time $t$. The state $s_i^t$ is \emph{unknown} until time $t+1$, and it is non-stationary and varies over $t$. For instance,  the yield rate  of a college may change over the years.  
 For any arm $A_j^t\in\mathcal A^t$, there are an associated pair of score and fit values, $(v_j^{t}, e_{ij}^{t})$, obtained from Eq.~(\ref{eqn:defofutility}), where $ i\in[m], j\in[n_t].$
We refer to $(v_j^t,e_{ij}^t)$ as the attributes of arm $A_j^t$. 
Define the set
\begin{equation*}
\mathcal B^t_i = \{j \ | \  P_i\text{ pulls arm } A_j^t \text{ at time } t \text{ for } 1\leq j\leq n_t\},
\end{equation*}
where $|\mathcal B^t_i| = n_{it}$ and $n_{it}\leq n_t$.
For any $j\in\mathcal B^t_i$,  the outcome that $P_i$ observes is whether an arm $A_j^t$ accepted $P_i$, that is, $y_{ij}^t  = \mathbf 1\{A_j^t \text{ accepts } P_i\}$. Here, the iid outcome $y_{ij}^t$ has the likelihood $\pi_i(s_i^t,v_j^t)$ that $y_{ij}^t = 1$ and the likelihood $1-\pi_i(s_i^t,v_j^t)$ that $y_{ij}^t=0$.


We estimate the function $\pi_i$ based on historical data.  Denote  by $\mathcal D=\{(s_i^t,v_j^t,e_{ij}^t,y_{ij}^t):i\in[m]; j\in \mathcal B_{i}^t; t\in[T]\}$ the training data. 
Let the log odds ratio be
\begin{equation*}
f_i(s_i,v) = \log\left(\frac{\pi_i(s_i,v)}{1-\pi_i(s_i,v)}\right).
\end{equation*}
There exist a variety of methods in supervised learning that can efficiently estimate the log odds ratio \citep{hastie2009elements}. For concreteness, we focus on using a penalized log-likelihood method for the log odds ratio estimation in a reproducing kernel Hilbert space (RKHS) \citep{wahba1999}.  To this end, we assume that $f_i$ is from RKHS  $\mathcal H_{K_i}$ with the reproducing kernel $K_i$. Then we find $f_i \in\mathcal H_{K_i} $ to minimize
\begin{equation}
\label{eqn:minklr}
\sum_{t=1}^T\sum_{j\in \mathcal B_{i}^t}\left[-y_{ij}^tf_i(s_i^t,v^t_j)+\log\left(1+\exp\left(f_i(s_i^t,v^t_j)\right)\right)\right]+\frac{1}{2}\sum_{t=1}^Tn_{it}\lambda_i \|f_i\|_{\mathcal H_{K_i}}^2,
\end{equation}
where $\|\cdot\|_{\mathcal H_{K_i}}$ is a RKHS norm and $\lambda_i\geq 0$ is a tuning parameter. 
Consider  a tensor product structure of the RKHS $\mathcal H_{K_i}$ defined  $K_i((s_i,v),(s_i',v')) = K_i^s(s_i,s_i')K_i^v(v,v')$ based on kernels $K_i^s$ and $K_i^v$ \citep{wahbawang1995}. Assume a random feature expansion of the following form: 
\begin{equation*}
    K_i^s(s_i,s_i') = \E_{w_s}[\phi_i^s(s_i,w_s)\phi_i^s(s_i',w_s)] \text{ and } K_i^v(v,v') = \E_{w_v}[\phi_i^v(v,w_v)\phi_i^v(v',w_v)],
\end{equation*}
where $\phi_i^s(\cdot,w_s)$ and $\phi_i^v(\cdot,w_v)$ are random features  \citep{rahimi2008}.
Denote by $\{w_{s1},w_{s2},\ldots,w_{sp}\}$ and $\{w_{v1},w_{v2},\ldots,w_{vp}\}$ the sets of $p$ independent copies of $w_s$ and $w_v$, respectively. 
Write the  feature vector as $\psi_i(s_i,v)\in\R^{p}$ with the $l$th entry equal to $\frac{1}{\sqrt{p}}\phi_i^s(s_i,w_{sl})\phi_i^v(v,w_{vl})$, for $l=1,\ldots,p$.
Then $K_i((s_i,v),(s_i',v'))$ can be approximated by the product $\psi_i(s_i,v)\trans\psi_i(s_i',v')$.
Let the matrix $\Phi_i$ have rows $\psi_i(s_i^t,v_j^t)\trans$, where $j\in \mathcal B_{i}^t$ and $t\in[T]$.
By  the representer theorem in  \citet{kimeldorf1971},
the solution to Eq.~(\ref{eqn:minklr}) has the form
$\widehat{f}_i(s_i,v) = \psi_i(s_i,v)\trans\Phi_i\trans c_i$ for some vector $c_i$.  We only need to find $\theta_i= \Phi_i\trans c_i\in\R^p$ to obtain a solution to Eq.~(\ref{eqn:minklr}):
\begin{equation}
\label{eqn:predfisve}
\begin{aligned}
&\widehat{f}_i(s_i,v) = \psi_i(s_i,v)\trans\theta_i, \quad \forall i\in[m].
\end{aligned}
\end{equation}
Denote by $Y_i$ the response vector with entries $y_{ij}^t$, $j\in\mathcal B_i^t$ and $t\in[T]$. 
Applying the Newton-Raphson method to Eq.~(\ref{eqn:minklr}) 
yields the following iterative updates for vector $\theta_i$:
\begin{equation*}
\theta_i^{(\nu+1)} = \left(\Phi_i\trans W^{(\nu)}_i\Phi_i+\sum_{t=1}^Tn_{it}\lambda_i \mathbf I\right)^{-1}\Phi_i\trans W^{(\nu)}_i\left\{\Phi_i\theta^{(\nu)}+(W_i^{(\nu)})^{-1}[Y_i-\pi^{(\nu)}_i]\right\},\ \nu\geq 1.
\end{equation*}
Here, $\theta_i^{(\nu)}$ is the $\nu$th update of $\theta_i$, and $W^{(\nu)}_i=\text{diag}[\pi^{(\nu)}_i(s_i^t,v_j^t)(1-\pi^{(\nu)}_i(s_i^t,v_j^t))]_{j\in \mathcal B_{i}^t,1\leq t\leq T}$ is a weight matrix with
 $\pi_i^{(\nu)} = [1+\exp(-f_i^{(\nu)})]^{-1}$ and $f_i^{(\nu)} = \phi_i\trans\theta_i^{(\nu)}$.
The tuning parameter $\lambda_i\geq 0$  can be selected using cross validation or 
GACV \citep[see][]{wahba1999}.
\begin{theorem}
\label{thm:optestpi}
The integrated squared error of the estimate in Eq.~(\ref{eqn:predfisve}) satisfies the following inequality:
\begin{equation*}
\E[(\widehat{f}_i - f_i)^2]\leq c_f \left[T(\log T)^{-1}\right]^{-2r/(2r+1)},  \quad \forall i\in[m],
\end{equation*}
for sufficiently large $T$, where $\lambda_i\leq c_{\lambda}[T(\log T)^{-1}]^{-2r/(2r+1)}$ and $p\geq c_p[T(\log T)^{-1}]^{-2r/(2r+1)}$. Here the constants $c_f, c_\lambda,c_p>0$ are independent of $T$, and
$r\geq 1$ denotes the smoothness of kernels  such that $K_i^s(s,\cdot)$ and $K_i^v(v,\cdot)$ have squared integrable $r$th-order derivatives. Moreover, the estimate in Eq.~(\ref{eqn:predfisve}) is minimax rate-optimal.
\end{theorem}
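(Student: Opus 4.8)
The plan is to treat the penalized log-likelihood in Eq.~(\ref{eqn:minklr}) as a penalized quasi-likelihood problem and reduce it to a weighted nonparametric regression, so that classical RKHS rate theory can be brought to bear. First I would exploit the convexity of the logistic loss: expanding the empirical log-likelihood to second order about the true $f_i$, the Hessian is the Fisher information, a weighted quadratic form whose weights are the Bernoulli variances $\pi_i(1-\pi_i)$ appearing on the diagonal of $W_i$. Since $v_j^t\in[0,1]$ and the states are bounded, these weights stay bounded away from $0$, so the quadratic term is equivalent to the $L^2$ metric up to absolute constants, and the cubic Taylor remainder can be absorbed once the estimation error is shown to be $o(1)$. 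This linearization turns the analysis into the control of a penalized weighted least-squares estimator of $f_i$ in $\mathcal H_{K_i}$.

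Next I would perform a bias-variance decomposition in the eigenbasis of the tensor-product kernel operator. The $r$th-order smoothness assumption gives eigenvalue decay $\mu_k\asymp k^{-2r}$, and this polynomial rate is inherited by the tensor product of two one-dimensional $r$-smooth kernels. Expanding $\widehat f_i-f_i$ in this basis, the squared bias due to the penalty $\lambda_i\|f_i\|^2_{\mathcal H_{K_i}}$ is $O(\lambda_i)$, while the variance scales like $N^{-1}\sum_k\mu_k/(\mu_k+\lambda_i)\asymp N^{-1}\lambda_i^{-1/(2r)}$ for an effective count $N$. Because the state $s_i^t$ is held fixed within a period and only varies across the $T$ periods, the $s$-coordinate---not the abundant arms within a period---is the statistical bottleneck, so the governing count is $N\asymp T$; substituting the prescribed $\lambda_i\asymp[T(\log T)^{-1}]^{-2r/(2r+1)}$ and balancing the two terms reproduces the stated rate. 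The logarithmic deflation of the count is precisely what makes the empirical-process fluctuations negligible uniformly over the time-varying design $\{(s_i^t,v_j^t)\}$, and it is tracked through a chaining bound on the unit ball of $\mathcal H_{K_i}$.

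A genuinely nonstandard ingredient is the random-feature approximation, since the computed estimator in Eq.~(\ref{eqn:predfisve}) minimizes over the finite-dimensional span of $\psi_i$ rather than over all of $\mathcal H_{K_i}$. Here I would invoke concentration results for random features \citep{rahimi2008} to show that $\psi_i(s,v)\trans\psi_i(s',v')$ approximates $K_i((s,v),(s',v'))$ uniformly with error $O_P(p^{-1/2})$ up to logarithmic factors, and then argue that perturbing the kernel by this amount perturbs the penalized solution, and hence the integrated squared error, by a quantity dominated by the main rate as soon as $p\gtrsim[T(\log T)^{-1}]^{-2r/(2r+1)}$---exactly the stated requirement. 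I expect this coupling to be the main obstacle: one must propagate a uniform kernel-approximation error through the nonlinear logistic objective while simultaneously keeping the Taylor remainder and the empirical process under control, and it is this triple bookkeeping that dictates the joint conditions on $\lambda_i$ and $p$.

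Finally, for minimax rate-optimality I would establish a matching lower bound by an information-theoretic reduction. Restricting $f_i$ to a Sobolev-type ball of $r$-smooth functions, I would build a finite packing $\{f^{(1)},\dots,f^{(M)}\}$ that is $\delta$-separated in $L^2$ with $\delta^2\asymp[T(\log T)^{-1}]^{-2r/(2r+1)}$, using localized bumps whose number and amplitude are calibrated so that the pairwise Kullback--Leibler divergences between the induced Bernoulli product laws remain below $c\log M$. Fano's inequality then bounds the worst-case risk of any estimator from below by a constant multiple of $\delta^2$, which matches the upper bound and establishes rate-optimality.
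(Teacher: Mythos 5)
The paper's own proof is far shorter than what you propose: it introduces the exact RKHS minimizer $\widetilde f_i$ of Eq.~(\ref{eqn:minklr}), quotes the penalized-likelihood convergence rate and minimax optimality for tensor-product spaces from \citet{lin1998} to get $\E[(\widetilde f_i-f_i)^2]\leq c_1[T(\log T)^{-1}]^{-2r/(2r+1)}$, quotes the random-feature generalization bound of \citet{rudi2017} to control $\E[(\widehat f_i-\widetilde f_i)^2]$ at the same rate under the stated condition on $p$, and concludes by the triangle inequality. Your proposal is essentially an attempt to reprove those two cited ingredients from scratch, and in outline (quadratic approximation of the logistic loss with Bernoulli-variance weights bounded away from zero, bias--variance trade-off in the kernel eigenbasis, a perturbation argument for random features, Fano for the lower bound) it is the right skeleton.

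There are, however, concrete gaps in the reconstruction. First, the eigenvalue decay $\mu_k\asymp k^{-2r}$ is \emph{not} inherited by the tensor product of two $r$-smooth one-dimensional kernels: counting pairs $(j,l)$ with $j^{-2r}l^{-2r}\geq t$ shows the product kernel decays like $k^{-2r}(\log k)^{2r}$, and it is precisely this extra logarithm that produces the $[T(\log T)^{-1}]^{-2r/(2r+1)}$ rate. Your attribution of the $\log T$ to an ``effective count'' and to empirical-process chaining is a misdiagnosis; if the decay really were $k^{-2r}$, your balancing would give $T^{-2r/(2r+1)}$ with no logarithm. The same issue undermines your lower bound: a standard bump-packing over an $r$-smooth ball yields $\delta^2\asymp T^{-2r/(2r+1)}$, which is strictly smaller than the stated upper bound, so the two would not match; a matching lower bound for the tensor-product class needs a hyperbolic-cross-type construction that reproduces the logarithmic factor. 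Second, the random-feature step via uniform kernel approximation at rate $O_P(p^{-1/2})$ followed by naive perturbation of the penalized solution forces $p$ to be polynomially larger than the effective dimension $\lambda_i^{-1/(2r)}$ that governs the analysis of \citet{rudi2017}, and so does not recover the stated threshold on $p$; the cited result avoids uniform kernel approximation entirely. You correctly flag this coupling as the main obstacle, but the route you propose for it does not close.
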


\noindent
We make three remarks on the theorem. First,
although $f_i$ depends on both covariates $s$ and $v$, 
the optimal rate given in the theorem is very close to the minimax rate for the one-dimensional model, with the only difference in the logarithm term. 
Second, it is possible to extend the theorem to allow different orders of the smoothness of the kernels $K_i^s$ and $K_i^v$. Finally, we can predict the acceptance probability for a new arm. 
Let  $\mathcal A^{T+1} = \{A_{1},\ldots,A_{n}\}$ be the set of arms at a new time point $T+1$, where each arm  $A_j$ has  attributes obtained from Eq.~(\ref{eqn:defofutility}).
Using the estimated log odds ratio $\widehat{f}_i$ in Eq.~(\ref{eqn:predfisve}) and given the state $s_i$ at time $T+1$, we have the estimate
\begin{equation}
\label{eqn:estpiisvj}
\widehat{\pi}_i(s_i,v_j) =\left[1+\exp(-\widehat{f}_i(s_i,v_j))\right]^{-1},
\end{equation}
which serves as the prediction of the probability that $A_j$ would accept $P_i$ at time $T+1$.

\section{Optimal Strategies in Decentralized Matching}
\label{sec:mainresults}

We now study the optimal strategies in decentralized matching at a new time point $T+1$, where the matching  involves  a  single stage interaction. First, Nature draws a state denoted by $s_i^*$ for agent $P_i$ such that the arms' preferences are realized. Next,  arms simultaneously show their interests to all agents. For example, students apply to colleges. Under the assumption that students face negligible application costs, the students'  dominant strategy is submitting applications to all colleges as they do not know how colleges evaluate their academic ability or personal essays \citep{avery2010, che2016}. 
Then, agents simultaneously decide which arms to pull based on the arm's attributes. Finally, each arm accepts one of the agents (if any) that pulled it.  

Since agents simultaneously pull sets of arms once at the new time point $T+1$ and each arm accepts one of the agents that pulled it, we refer to such matching as the \emph{single-stage decentralized matching}. Each agent's goal is to learn the optimal strategies that maximize agents’ expected payoffs based on historical data at $t=1,\ldots,T$.
In the repeated game, the number of agents are the same at $t=1,\ldots, T$.  However, our model  applies to a repeated game with an arbitrary number of agents.
See the left plot of Figure \ref{fig:inexpensive} for an illustration of the single-stage decentralized matching. 

 \begin{figure}[!ht]
    \centering
    \includegraphics[width=\textwidth]{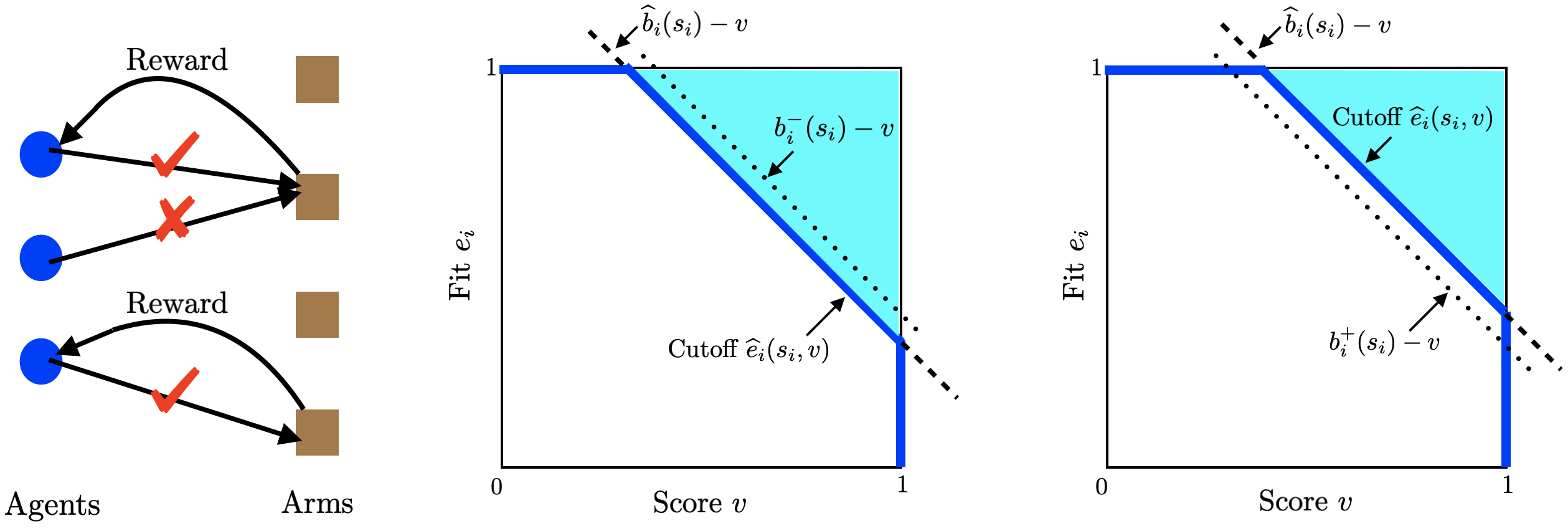}
    \caption{The left plot shows the process of the single-stage decentralized matching, with no centralized clearinghouse for coordination.
    The middle plot shows the cutoff $\widehat{e}_i(s_i,v)$ of Theorem \ref{thm:cutoffaystate} when Eq.~(\ref{eqn:jbim1+}) holds. 
    The shaded area represents $\widehat{\mathcal B}_i(s_i)$.
    The dotted line represents  $b_i^{-}(s_i)-v$ in verifying Eq.~(\ref{eqn:jbim1+}).
    The dashed  line represents $\widehat{b}_i(s_i)-v$, and if thresholding to $e_i\in[0,1]$, it yields the cutoff $\widehat{e}_i(s_i,v)$ in Eq.~(\ref{eqn:choiceofeim}) and is denoted by the  blue solid line segments.   The right plot shows the cutoff $\widehat{e}_i(s_i,v)$ when Eq.~(\ref{eqn:jbim1+}) does not hold.}
    \label{fig:inexpensive}
\end{figure}

\subsection{Algorithm}
\noindent
We propose the calibrated decentralized matching (CDM) procedure for the single-stage decentralized matching. The CDM consists of seven main steps. 
Step 1 is to predict the arms' acceptance probability. Step 2 is to estimate the distribution of the true state. Step 3 is to construct the cutoffs for a given state and its perturbation. Step 4 is to calculate the arm sets for the given state and its perturbation. Step 5 is to obtain the marginal set for the given state. In addition, Steps 3 to 5 are carried out repeatedly over a number of states. Step 6 is to calibrate the state for maximizing the agent’s average-case expected payoff, which calibration is referred to as the mean calibration. The key idea of the mean calibration is to balance the tradeoff between the opportunity cost and the penalty for exceeding the quota. 
Finally, Step 7 is to produce the final arm set under the calibrated state. 
We summarize our procedure in Algorithm \ref{alg:decentralizedcdm} first, then discuss each step in detail.

\begin{algorithm}
\caption{ \normalsize{{Calibrated decentralized matching (CDM) under the mean calibration}}}\label{alg:decentralizedcdm}
\begin{algorithmic}[1]
\State  \normalsize{\textbf{Input:} Historical data $\mathcal D=\{(s_i^t,v_j^t,e_{ij}^t,y_{ij}^t):i\in[m]; j\in \mathcal B_{i}^t; t\in[T]\}$; New arm set $\mathcal A^{T+1}$  with attributes $\{(v_j,e_{ij}):i\in[m];j\in[n]\}$ at time $T+1$; Penalty $\{\gamma_i:i\in[m]\}$ for exceeding the quota.}
\State \textbf{for} $i=1$ to $m$ \textbf{do}
\State \quad \textbf{Step 1:} Predict the acceptance probability $\widehat{\pi}_i(s_i,v_j)$ by Eq.~(\ref{eqn:estpiisvj});
\State \quad \textbf{Step 2:} Estimate the probability density function of the state $s_i^*$ by Eq.~(\ref{eqn:kdefors});
\State \quad  \textbf{for} $s_i$ in a linearly spaced vector in $[0,1]$  \textbf{do}
\State \quad\quad  \textbf{Step 3:} Construct the cutoffs $\widehat{e}_i(s_i,v)$ and $\widehat{e}_i(s_i+\delta s_i,v)$ by Eq.~(\ref{eqn:choiceofeim}) with a small $\delta s_i>0$;
\State \quad\quad  \textbf{Step 4:} Calculate the arm sets $\widehat{\mathcal B}_i(s_i)$ and $\widehat{\mathcal B}_i(s_i+\delta s_i)$ by Eq.~\eqref{eqn:armsethatbisi};
\State \quad\quad  \textbf{Step 5:} Obtain the marginal set $\partial\widehat{\mathcal B}_i(s_i)=\{\widehat{\mathcal B}_i(s_i-\delta s_i)\setminus \widehat{\mathcal B}_i(s_i)\}$;
\State \quad \textbf{end for}
\State \quad  \textbf{Step 6:} Calibrate the state $s_i$ such that the difference between LHS and RHS of Eq.~\eqref{eqn:implementmeancalibration} is below a pre-specified tolerance level;
\State \quad  \textbf{Step 7:} Calculate the arm set $\widehat{\mathcal B}_i(s_i)$ by Eq.~\eqref{eqn:armsethatbisi} under the calibrated state $s_i$.
\State \textbf{end for}
\State \textbf{Output:} The arm sets $\widehat{\mathcal B}_1(s_1), \widehat{\mathcal B}_2(s_2),\ldots, \widehat{\mathcal B}_m(s_m)$ for agents.
\end{algorithmic}
\end{algorithm}

\subsection{Agent's Expected Payoff}
\label{sec:agentsexpectedpayoff}

Since the decentralized market has no centralized clearinghouse for coordination, each agent's incentive is to act to maximize the expected payoff.
An agent's expected payoff consists of two parts: the expected utilities of arms that the agent pulls and the penalty for exceeding the quota.
Let $\mathcal B_i(s_i)\subseteq \mathcal A^{T+1}$ be the set of arms that  agent $P_i$  pulls at $T+1$ given the state $s_i$. 
We consider the following form of agent $P_i$'s expected payoff,
\begin{equation}
\label{eqn:totalutility}
\begin{aligned}
 \mathcal U_i[\mathcal B_i(s_i)] & = \sum_{j\in\mathcal B_i(s_i)}\left(v_j+e_{ij}\right)\cdot\pi_i\left(s_i,v_j\right)- \gamma_i\cdot\max\left\{\sum_{j\in\mathcal B_i(s_i)}\pi_i(s_i,v_j)-q_i,\ 0\right\}.
\end{aligned}
\end{equation}
Here, $\gamma_i>0$ denotes the marginal penalty for exceeding the quota, which originates from the scarcity of agents' capacity. The definition of $\gamma_i$ is motivated by numerous applications of decentralized matching markets. For instance, Princeton reported $1,100$ students joining its incoming class in 1995–96, significantly exceeding its capacity. As a result, Princeton had to set up mobile homes and build new dorms to accommodate the students \citep{avery2003}. 
Another example is that Korea faces the challenge of
controlling yield for colleges, since students apply to a department instead of a college, and each department faces a relatively rigid and low quota. A department that exceeds the quota is subject to a rather harsh penalty by the government in the form of sharply reduced capacities in the subsequent year \citep{che2016}. The choice of $\gamma_i$ in our framework is user-specified and it depends on the consequences of exceeding the quota. We assume that $\gamma_i>\max_{j\in\mathcal A^{T+1}}\{v_j+e_{ij}\}$; that is, the penalty is larger than an arm's latent utility.
The expected payoff $\mathcal U_i[\mathcal B_i(s_i)]$ in Eq.~(\ref{eqn:totalutility}) captures agent $P_i$'s incentive---namely, its benefit from pulling arms from the set $\mathcal B_i(s_i)$,  holding fixed its opponents' decisions.
As a result, in the optimization problem in Eq.~(\ref{eqn:totalutility}), taking the optimal action is aligned with the agent's incentive. 
Eq.~(\ref{eqn:totalutility}) excludes the situation when $P_i$ faces extra uncertainty in receiving the reward even if $P_i$ has successfully pulled an arm. For example in the dating market, Eq.~(\ref{eqn:totalutility}) models the agent's expected payoff from the dates, instead of a subsequent relationship that may eventually result from a date.

Finding and checking an optimal solution to Eq.~(\ref{eqn:totalutility}) is NP-complete; roughly because we need to individually check a significant fraction of the combinations of $\text{card}(\mathcal A^{T+1})$ arms to determine the optimal solution for Eq.~(\ref{eqn:totalutility}). Since the number of  combinations grows exponentially with the number $\text{card}(\mathcal A^{T+1})$, the complexity of any systematic algorithm becomes impractically large. 
We propose a \emph{cutoff strategy}  which greedily pulls arms according to the latent utilities and show that it is near-optimal for agents in the following theorem. 
\begin{theorem}
\label{thm:cutoffaystate}
Agent $P_i$'s cutoff strategy which pulls arm from the set
\begin{equation}
\label{eqn:armsethatbisi}
    \widehat{\mathcal B}_i(s_i) = \left.\left\{j \ \right|  \text{$A_j\in\mathcal A^{T+1}$ whose attributes $(v_j,e_{ij})$ satisfy }e_{ij}\geq \widehat{e}_i(s_i, v_j)\right\}
\end{equation}
is near-optimal as its expected payoff satisfies
\begin{equation*}
    0\leq \max_{\mathcal B_i\subseteq\mathcal A^{T+1}}\mathcal U_i[\mathcal B_i] -
    \mathcal U_i[\widehat{\mathcal B}_i(s_i)]\leq \text{UE}^\dagger.
\end{equation*}
Here the expected payoff $\mathcal U_i$ is defined in Eq.~(\ref{eqn:totalutility}), and  $\text{UE}^\dagger$ is defined by
$\text{UE}^\dagger\equiv [\min_{j\in\mathcal B_i^-(s_i)}(v_j+e_{ij})]
[q_i-\sum_{j\in\mathcal B_i^-(s_i)}\pi_i(s_i,v_j)]$.
The cutoff $\widehat{e}_i(s_i,v)$ is chosen according to Eq.~(\ref{eqn:choiceofeim}), where $\widehat{e}_i(s_i,v)$ is decreasing in $v\in[0,1]$ and satisfies $\partial\widehat{e}_i(s_i,v)/\partial v = -1$ when $\widehat{e}_i(s_i,v)\in(0,1)$.
Moreover, if there is a continuum of arms and $\pi_i(\cdot,v)$ is continuous in $v$, then $\text{UE}^\dagger=0$.
\end{theorem}
\noindent
We now specify the cutoff $\widehat{e}_i(s_i,v)$. Suppose that arms on the cutoff have the latent utility $b_i\geq 0$. The expected number of arms in $\widehat{\mathcal B}_i(s_i)$ that would accept $P_i$ is:
\begin{equation*}
\Pi_i(b_i) \equiv \sum_{j\in\mathcal A^{T+1}}\mathbf{1}\left(e_{ij}\geq \min\{\max\{b_i-v_j, 0\}, 1\}\right) \pi_i(s_i,v_j).
\end{equation*}
If there exists some $b_i\geq 0$ such that $\Pi_i(b_i) = q_i$, we let $\widehat{b}_i(s_i) = b_i$ and we have $\widehat{e}_i(s_i, v) = \min\{\max\{\widehat{b}_i(s_i)-v, 0\}, 1\}$. On the other hand, if there is no solution to  $\Pi_i(b_i) = q_i$, we let
\begin{equation*}
 b_i^{+}(s_i) = \underset{b_i\geq 0}{\arg\max}\left\{\Pi_i(b_i)> q_i\right\}\quad \text{and}\quad  b_i^{-}(s_i) = \underset{b_i\geq 0}{\arg\min}\left\{\Pi_i(b_i)< q_i\right\}.
\end{equation*}
To choose between $ b_i^{+}(s_i) $ or $b_i^{-}(s_i)$, it is necessary to balance the expected utility and the expected penalty for exceeding the quota  due to pulling arms on the \emph{boundary}.
Define two cutoffs $e_i^{+}(s_i, v) \equiv \min\{\max\{b^{+}_i(s_i)-v, 0\}, 1\}$ and $e^{-}_i(s_i, v) \equiv \min\{\max\{b^{-}_i(s_i)-v, 0\}, 1\}$,
which correspond to arm sets $\mathcal B^{+}_i(s_i) = \{j\ |\ e_{ij}\geq e_i^{+}(s_i, v_j)\}$ and $\mathcal B^{-}_i(s_i) = \{j \ | \ e_{ij}\geq e_i^{-}(s_i, v_j)\}$, respectively.
 Then the arms on the boundary are those in the discrete set $\{\mathcal B^{+}_i(s_i)\setminus
 \mathcal B^{-}_i(s_i)\}$, whose expected utility is larger than the expected penalty if
\begin{equation}
\label{eqn:jbim1+}
\sum_{j\in \mathcal B^{+}_i(s_i)\setminus \mathcal B^{-}_i(s_i)}(v_j+e_{ij})\cdot\pi_i(s_i,v_j)\geq  \gamma_i\sum_{j\in\mathcal B^{+}_i(s_i)}\pi_i(s_i,v_j)-\gamma_iq_i.
\end{equation}
Here set difference  $\{\mathcal B^{+}_i(s_i)\setminus \mathcal B^{-}_i(s_i)\}$ consists of all elements in $\mathcal B^{+}_i(s_i)$ that are not in $\mathcal B^{-}_i(s_i)$.
If Eq.~(\ref{eqn:jbim1+}) holds, 
let $\widehat{b}_i(s_i)=b_i^{+}(s_i)$; otherwise, let $\widehat{b}_i(s_i)=b_i^{-}(s_i)$. Finally, we define the cutoff
\begin{equation}
\label{eqn:choiceofeim}
\widehat{e}_i(s_i, v) = \min\left\{\max\left\{\widehat{b}_i(s_i)-v, 0\right\}, 1\right\},
\end{equation}
Here we require $\widehat{e}_i(s_i, v)\in[0,1]$ due to the assumption that $e_i$ in Eq.~(\ref{eqn:defofutility}) lies in the interval $[0,1]$.
It is clear that $\partial\widehat{e}_i(s_i, v)/\partial v=-1$ when $\widehat{e}_i(s_i, v)\in(0,1)$. Thus, $P_i$ prefers  the arms with larger latent utilities defined by Eq.~(\ref{eqn:defofutility}).
Figure \ref{fig:inexpensive} illustrates this cutoff strategy.
Although arms \emph{independently} accept or reject agents  in decentralized markets, the cutoff $\widehat{e}_i(s_i, v) $ in Eq.~(\ref{eqn:choiceofeim}) ensures that the expected number of arms accepting $P_i$ excluding those on the boundary is bounded by the quota $q_i$. According to Eq.~(\ref{eqn:jbim1+}), the arm on the boundary is pulled if the expected utility is larger than the expected penalty.

The cutoff strategy in Theorem \ref{thm:cutoffaystate} relates to the straightforward behavior \citep[see][]{fisman2006gender}, where  agents pull arms that they value more than those they do not pull.  Theorem \ref{thm:cutoffaystate} shows that although agents face uncertainty with respect to acceptance of their offers in decentralized markets, the straightforward behavior suffices.

\subsection{Calibrated Decentralized Matching (CDM)}
\label{sec:cdm}
Let $s_i^*$ be the true state for agent $P_i$ at time $T+1$.
Theorem \ref{thm:cutoffaystate} shows that the cutoff strategy maximizes $P_i$'s expected payoff if $s_i^*$ is known. However, the true state $s_i^*$ is generally unknown in practice. 
A natural question is how to calibrate the state  $s_i$ in Theorem \ref{thm:cutoffaystate}. We propose a calibration method for $s_i$ that maximizes the average-case expected payoff 
$\E_{s_i^*}\{\mathcal U_i[\widehat{\mathcal B}_i(s_i)]\}$
 over the uncertain  true state $s_i^*$. 
To formulate the theorem, we introduce some additional notation. Let $\partial \widehat{\mathcal B}_i(s_i) $  be the \emph{marginal set}, defined as
the change of the discrete set $\widehat{\mathcal B}_i(s_i)$    with respect to  a perturbation of $s_i$:
\begin{equation*}
\partial \widehat{\mathcal B}_i(s_i) \equiv \lim_{\delta s_i\to 0_+} \Big\{\widehat{\mathcal B}_i(s_i-\delta s_i)\setminus \widehat{\mathcal B}_i(s_i)\Big\}.
\end{equation*}

\begin{theorem}
\label{thm:optimalstates}
The average-case expected payoff, $\E_{s_i^*}\{\mathcal U_i[\widehat{\mathcal B}_i(s_i)]\}$, is maximized if $s_i\in(0,1)$ is  chosen as the solution to the following equation,
\begin{equation}
\label{eqn:quans}
\begin{aligned}
& \mathbb P(s_i^*\neq s_i)\sum_{j\in\partial\widehat{\mathcal B}_i(s_i)}(v_j+e_{ij})\cdot\E_{s_i^*}[\pi_i(s_i^*,v_j)\ | \ s_i^*\neq s_i] \\
&\quad\quad\quad= \gamma_i \mathbb P(s_i<s_i^*\leq 1) \sum_{j\in\partial\widehat{\mathcal B}_i(s_i)}\E_{s_i^*}[\pi_i(s_i^*,v_j)\ |\ s_i<s_i^*\leq 1].
\end{aligned}
\end{equation}
\end{theorem}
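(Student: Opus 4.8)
The plan is to treat the calibration as a one-dimensional optimization of the scalar objective $G(s_i)\equiv\E_{s_i^*}\{\mathcal U_i[\widehat{\mathcal B}_i(s_i)]\}$ and to derive \eqref{eqn:quans} as its first-order stationarity condition. The crucial point is the two distinct roles played by the state: the chosen $s_i$ enters only through the pulled set $\widehat{\mathcal B}_i(s_i)$ determined by the cutoff of Theorem \ref{thm:cutoffaystate}, whereas the acceptance probabilities inside $\mathcal U_i$ (see Eq.~\eqref{eqn:totalutility}) are evaluated at the realized true state $s_i^*$. I would first write
\[
G(s_i) = \sum_{j\in\widehat{\mathcal B}_i(s_i)}(v_j+e_{ij})\,\E_{s_i^*}[\pi_i(s_i^*,v_j)] - \gamma_i\,\E_{s_i^*}\!\left[\max\Big\{\sum_{j\in\widehat{\mathcal B}_i(s_i)}\pi_i(s_i^*,v_j)-q_i,\,0\Big\}\right],
\]
separating the utility term, which is linear in the pulled set, from the penalty term.

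Second, I would establish the sign identity that linearizes the penalty. By construction of the cutoff in Eq.~\eqref{eqn:choiceofeim}, the set $\widehat{\mathcal B}_i(s_i)$ is calibrated so that $\sum_{j\in\widehat{\mathcal B}_i(s_i)}\pi_i(s_i,v_j)=q_i$; since $\pi_i(\cdot,v_j)$ is strictly increasing in the state, the quantity $\sum_{j\in\widehat{\mathcal B}_i(s_i)}\pi_i(s^*,v_j)-q_i$ is strictly increasing in $s^*$ and vanishes exactly at $s^*=s_i$. Hence the operator $\max\{\cdot,0\}$ is active precisely on the event $\{s^*>s_i\}$, so the penalty term equals $\gamma_i\int_{s_i}^1\big(\sum_{j\in\widehat{\mathcal B}_i(s_i)}\pi_i(s^*,v_j)-q_i\big)\,dF_{s_i^*}(s^*)$. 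This is exactly why the conditional expectation $\E_{s_i^*}[\,\cdot\mid s_i<s_i^*\le 1]$ and the factor $1-F_{s_i^*}(s_i)$ appear on the right-hand side of \eqref{eqn:quans}.

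Third, I would differentiate $G$ with respect to $s_i$ via the marginal set. Decreasing $s_i$ lowers $\widehat{b}_i(s_i)$ and enlarges the pulled set by the sliver $\partial\widehat{\mathcal B}_i(s_i)$, so the utility term contributes $\sum_{j\in\partial\widehat{\mathcal B}_i(s_i)}(v_j+e_{ij})\,\E_{s_i^*}[\pi_i(s_i^*,v_j)]$, which under continuity of $F_{s_i^*}$ equals the left side of \eqref{eqn:quans} because $\mathbb P(s_i^*=s_i)=0$. For the penalty term I would apply Leibniz's rule: the contribution from the moving lower limit $s_i$ carries the factor $\sum_{j\in\widehat{\mathcal B}_i(s_i)}\pi_i(s_i,v_j)-q_i$, which is zero by the calibration identity, so this boundary term cancels (an envelope-type simplification); the remaining contribution from enlarging the set gives $\gamma_i[1-F_{s_i^*}(s_i)]\sum_{j\in\partial\widehat{\mathcal B}_i(s_i)}\E_{s_i^*}[\pi_i(s_i^*,v_j)\mid s_i<s_i^*\le 1]$. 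Equating marginal utility and marginal penalty yields \eqref{eqn:quans}, and a short monotonicity argument (marginal utility decreasing while marginal penalty increases as $s_i$ grows) confirms the stationary point is the maximizer.

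The step I expect to be the main obstacle is making the derivative of the set-valued map $s_i\mapsto\widehat{\mathcal B}_i(s_i)$ rigorous, since the pulled set changes in discrete jumps as arms cross the cutoff; both the clean Leibniz computation and the exact cancellation of the boundary term rely on $\sum_{j}\pi_i(s_i,v_j)=q_i$ holding with equality, which is exact only in the continuum/large-market limit and otherwise holds up to the discrepancy between $b_i^{+}(s_i)$ and $b_i^{-}(s_i)$. I would therefore carry out the differentiation in that limiting regime, interpreting $\partial\widehat{\mathcal B}_i(s_i)$ as the infinitesimal marginal set defined in the statement, and argue that the boundary correction is of higher order and does not affect the first-order condition \eqref{eqn:quans}.
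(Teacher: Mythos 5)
Your proposal is correct and follows essentially the same route as the paper: the paper's proof also reduces the problem to balancing the marginal utility of the marginal set $\partial\widehat{\mathcal B}_i(s_i)$ against the marginal penalty for exceeding the quota, and obtains Eq.~(\ref{eqn:quans}) as the first-order condition (it merely packages the computation as minimizing a weighted sum of over-enrollment and under-enrollment costs rather than differentiating $\E_{s_i^*}\{\mathcal U_i[\widehat{\mathcal B}_i(s_i)]\}$ directly, and additionally checks existence of a solution, rules out the boundary cases $s_i\in\{0,1\}$, and treats discrete $F_{s_i^*}$ separately). Your identification of the active region $\{s_i^*>s_i\}$ for the penalty, the envelope-type cancellation of the boundary term via $\sum_{j\in\widehat{\mathcal B}_i(s_i)}\pi_i(s_i,v_j)=q_i$, and the caveat about the discreteness of the pulled set all match the structure and the (acknowledged) level of rigor of the paper's argument.
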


\noindent
We refer to the calibration by Theorem \ref{thm:optimalstates}
as the \emph{mean calibration} since it maximizes the agent's average-case expected payoff. 
We show in the proof that 
the average-case expected payoff has a unique maximizer that is given by the solution to Eq.~(\ref{eqn:quans}).
The key idea of the mean calibration is to balance the tradeoff between the opportunity cost and the penalty for exceeding the quota.
In particular, the left side of Eq.~(\ref{eqn:quans}) considers the opportunity cost and the marginal utility, and  the right side of Eq.~(\ref{eqn:quans}) estimates
the marginal penalty for  exceeding the quota.
We make four remarks on Theorem \ref{thm:optimalstates}.  
First, the calibrated state in Theorem  \ref{thm:optimalstates} is different from an naive estimate of the mean  $\E[s_i^*]$. 
We show in the simulation of Section \ref{sec:simulation} that the latter estimate is inefficient in maximizing the agent's expected payoff compared to the calibratin by Theorem  \ref{thm:optimalstates}. 

Second, the calibration in Theorem \ref{thm:optimalstates} takes agents' competition into account.
Note that  $\E_{s_i^*}[\pi_i(s_i^*,v_j) | s_i<s_i^*\leq 1]$  in Eq.~(\ref{eqn:quans}) is strictly increasing in $s_i$ due to the monotonicity of $\pi_i(s_i,v_j)$ in $s_i$. Thus, it is more costly for an agent to pull an arm when the agent is popular.
This result is intuitive because when an agent $P_i$ is popular, it is more likely that an arm $A_j$ pulled by multiple agents would accept $P_i$. Since $P_i$ has a larger probability of exceeding the quota when $P_i$ is popular, it is more costly for $P_i$ to pull $A_j$ compared to the case when $P_i$ is not popular.

Third, we discuss the algorithm for implementing the mean calibration in Theorem \ref{thm:optimalstates}. 
A key quantity is the probability density function of $s_i^*$, which can be estimated based on the historical data of the states $\{s_i^1,\ldots,s_i^T\}$. By the kernel density method, we obtain an  estimator for the probability density function of $s_i^*$ as follows,
\begin{equation}
\label{eqn:kdefors}
\widehat{p}_{s_i^*}(\cdot) = \frac{1}{T}\sum_{t=1}^TK_{i}^s\left(\frac{\cdot-s_i^t}{h}\right), 
\end{equation}
where $h$ is a bandwidth parameter and $K_{i}^s$ is the kernel function introduced in Section \ref{sec:kernelucb}. It is well known that the estimator $\widehat{p}_{s_i^*}(\cdot)$ achieves the minimax rate-optimality \citep{silverman1986}.
Another key quantity for implementing Theorem \ref{thm:optimalstates} is the acceptance probability $\pi_i(s_i,v)$, which can also be estimated based on the historical data as shown in Eq.~(\ref{eqn:estpiisvj}), $\widehat{\pi}_i(s_i,v) =[1+\exp(-\widehat{f}_i(s_i,v))]^{-1}$.
The last key quantity is the marginal set $\partial\widehat{\mathcal B}_i(s_i)$, which can be computed as follows. For any $s_i$ and a small perturbation $\delta s_i$, we choose the cutoffs $\widehat{e}_i(s_i,v)$ and $\widehat{e}_i(s_i+\delta s_i,v)$ according to Eq.~(\ref{eqn:choiceofeim}) and calculate the arm sets $\widehat{\mathcal B}_i(s_i)$ and $\widehat{\mathcal B}_i(s_i+\delta s_i)$ by Eq.~\eqref{eqn:armsethatbisi}, which together give the marginal set $\partial\widehat{\mathcal B}_i(s_i)=\{\widehat{\mathcal B}_i(s_i-\delta s_i)\setminus \widehat{\mathcal B}_i(s_i)\}$.
Given the estimators $\widehat{F}_{s_i^*}(s_i), \widehat{\pi}_i(s_i,v)$ and the marginal set $\partial\widehat{\mathcal B}_i(s_i)$, we calibrate $s_i$  in Theorem \ref{thm:optimalstates} as the solution to the following equation,
\begin{equation}
\label{eqn:implementmeancalibration}
\begin{aligned}
& \sum_{j\in\partial\widehat{\mathcal B}_i(s_i)}(v_j+e_{ij})\cdot\left[\int_{0}^1\widehat{\pi}_i(t,v_j)\widehat{p}_{s_i^*}(t)dt - \widehat{\pi}_i(s_i,v_j)\widehat{p}_{s_i^*}(s_i)\right]\\
&\quad\quad\quad\quad= \gamma_i\sum_{j\in\partial\widehat{\mathcal B}_i(s_i)}\left[\int_{s_i}^1\widehat{\pi}_i(t,v_j)\widehat{p}_{s_i^*}(t)dt - \widehat{\pi}_i(s_i,v_j)\widehat{p}_{s_i^*}(s_i)\right].
\end{aligned}
\end{equation}
We summarize the CDM under the mean calibration in Algorithm \ref{alg:decentralizedcdm}. It is possible to incorporate side information into  the estimation of $F_{s^*}(\cdot)$. For example, one can incorporate the belief that an agent tends to be popular at $T+1$ by overweighting the popular states in Eq.~\eqref{eqn:kdefors}.
Moreover, if Eq.~(\ref{eqn:implementmeancalibration}) has more than one solution, then $s_i$ is chosen as the largest one. 
If the probability density function of $s_i^*$ has a discrete support,  we change the objective in Eq.~\eqref{eqn:implementmeancalibration} to choosing the minimal $s_i\in[0,1]$ such that the left side of Eq.~(\ref{eqn:implementmeancalibration}) is not less than the right side of Eq.~(\ref{eqn:implementmeancalibration}). Here the search of $s_i$ starts from the maximum value in the support and decreases to the minimal value.

Finally, besides the average-case expected payoff in Theorem  \ref{thm:optimalstates}, we also consider the worst-case expected payoff concerning $s_i^*$. In the following theorem, we propose a \emph{maximin calibration}, where the calibration maximizes the minimal expected payoff $\min_{s_i^*}\{\mathcal U_i[\widehat{\mathcal B}_i(s_i)]\}$ 
over the uncertain true state $s_i^*$. 
\begin{theorem}
\label{cor:introofvalueV}
The  worst-case expected payoff  $\min_{s_i^*}\{\mathcal U_i[\widehat{\mathcal B}_i(s_i)]\}$ is maximized if $s_i\in[0,1]$ is chosen as the solution to the following equation,
\begin{equation*}
\label{eqn:jbimmaxmin}
\begin{aligned}
& \sum_{j\in\widehat{\mathcal B}_i(s_i)}(v_j+e_{ij})\cdot[\pi_i(1,v_j)-\pi_i(0,v_j)]- \gamma_i\sum_{j\in\widehat{\mathcal B}_i(s_i)}\pi_i(1,v_j)+\gamma_iq_i \\
& \quad\quad\quad\quad = \sum_{j\in\widehat{\mathcal B}_i(1)}(v_j+e_{ij})\cdot\pi_i(1,v_j)-\sum_{j\in\widehat{\mathcal B}_i(0)}(v_j+e_{ij})\cdot\pi_i(0,v_j).
\end{aligned}
\end{equation*}
\end{theorem}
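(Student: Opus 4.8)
The plan is to fix the calibrated state $s_i$, hence the pulled set $B\equiv\widehat{\mathcal B}_i(s_i)$, and to study the realized payoff as a function of the true state $s_i^*$. Writing $\pi_i(s_i^*,\cdot)$ for the true acceptance probabilities, this payoff is
\[
g_{s_i}(s_i^*)=\sum_{j\in B}(v_j+e_{ij})\,\pi_i(s_i^*,v_j)-\gamma_i\max\Big\{\sum_{j\in B}\pi_i(s_i^*,v_j)-q_i,\,0\Big\}.
\]
First I would show $g_{s_i}$ is unimodal in $s_i^*$. Because $\pi_i$ is continuous and strictly increasing in its first argument, the sum $\sum_{j\in B}\pi_i(s_i^*,v_j)$ crosses $q_i$ at a single threshold; below it there is no penalty and $g_{s_i}$ is increasing, while above it $g_{s_i}=\sum_{j\in B}(v_j+e_{ij}-\gamma_i)\pi_i(s_i^*,v_j)+\gamma_i q_i$, whose coefficients are negative since $\gamma_i>\max_j\{v_j+e_{ij}\}$, so $g_{s_i}$ is decreasing there. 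Hence the worst case over $s_i^*\in[0,1]$ is attained at an endpoint, and it suffices to compare $s_i^*=0$ and $s_i^*=1$.

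Next I would evaluate the endpoints using the cutoff construction. By definition $\widehat{\mathcal B}_i(s_i)$ meets the quota in expectation at its own state, $\sum_{j\in B}\pi_i(s_i,v_j)\approx q_i$, so monotonicity gives $\sum_{j\in B}\pi_i(0,v_j)\le q_i$ (penalty inactive) and $\sum_{j\in B}\pi_i(1,v_j)\ge q_i$ (penalty active), whence $g_{s_i}(0)=\sum_{j\in B}(v_j+e_{ij})\pi_i(0,v_j)$ and $g_{s_i}(1)=\sum_{j\in B}(v_j+e_{ij})\pi_i(1,v_j)-\gamma_i\sum_{j\in B}\pi_i(1,v_j)+\gamma_i q_i$. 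I would then measure the worst-case payoff against the state-aware optimum: by Theorem~\ref{thm:cutoffaystate} the set $\widehat{\mathcal B}_i(s_i^*)$ maximizes the payoff at state $s_i^*$ and again exhausts the quota with no penalty, so the oracle values are $\mathrm{OPT}(1)=\sum_{j\in\widehat{\mathcal B}_i(1)}(v_j+e_{ij})\pi_i(1,v_j)$ and $\mathrm{OPT}(0)=\sum_{j\in\widehat{\mathcal B}_i(0)}(v_j+e_{ij})\pi_i(0,v_j)$, which are exactly the two terms on the right-hand side of the asserted identity.

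The final step is the balancing argument. As $s_i$ increases the popular agent pulls fewer arms, so $B=\widehat{\mathcal B}_i(s_i)$ shrinks monotonically (its decrements are the marginal sets $\partial\widehat{\mathcal B}_i(s_i)$); since each summand $(v_j+e_{ij})\pi_i(0,v_j)$ is nonnegative, $g_{s_i}(0)$ is decreasing in $s_i$, whereas each summand $(v_j+e_{ij}-\gamma_i)\pi_i(1,v_j)$ is negative, so $g_{s_i}(1)$ is increasing in $s_i$. The guaranteed payoff relative to the oracle is therefore the minimum of $\mathrm{OPT}(0)-g_{s_i}(0)$ (increasing) and $\mathrm{OPT}(1)-g_{s_i}(1)$ (decreasing) up to sign, and is optimized where these coincide: $\mathrm{OPT}(1)-g_{s_i}(1)=\mathrm{OPT}(0)-g_{s_i}(0)$. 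Rearranging to $g_{s_i}(1)-g_{s_i}(0)=\mathrm{OPT}(1)-\mathrm{OPT}(0)$ reproduces the displayed equation, and the opposite monotonicities of the two sides in $s_i$ together with continuity yield a solution in $[0,1]$ by the intermediate value theorem. I expect the principal obstacle to be the rigorous reduction to the two endpoints---handling the $\max\{\cdot,0\}$ kink and the discrete jumps of $B$ across $\partial\widehat{\mathcal B}_i(s_i)$---and, tied to it, verifying that $\widehat{\mathcal B}_i(s_i)$ shrinks monotonically in $s_i$, since both the endpoint comparison and the balancing rest on this fact.
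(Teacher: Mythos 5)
Your proposal is correct and follows essentially the same route as the paper: reduce the worst case over $s_i^*$ to the endpoints $s_i^*\in\{0,1\}$ via monotonicity of $\pi_i$ in the state, express the two endpoint losses as regrets against the state-aware sets $\widehat{\mathcal B}_i(1)$ and $\widehat{\mathcal B}_i(0)$ (the paper's maximal over- and under-enrollment costs), and equalize them, with the opposite monotonicities in $s_i$ and the intermediate value theorem giving existence. The one feature you flag ``up to sign'' is also present in the paper's own argument---both balance oracle-relative regrets rather than the raw payoffs $g_{s_i}(0)=g_{s_i}(1)$ that a literal reading of $\max_{s_i}\min_{s_i^*}\mathcal U_i$ would suggest---and the set-monotonicity $\widehat{\mathcal B}_i(s_i)\subseteq\widehat{\mathcal B}_i(s_i')$ for $s_i\geq s_i'$ that you identify as the main obstacle is likewise asserted there from the cutoff construction of Theorem~\ref{thm:cutoffaystate}.
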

\noindent
We also summarize in Algorithm \ref{alg:decentralizedcdmmaximin} of Appendix \ref{sec:CDMwithmaximin} the CDM algorithm under the maximin calibration.

The calibrations in Theorems \ref{thm:optimalstates} and  \ref{cor:introofvalueV}
take into account the opportunity cost and a penalty for exceeding the quota in order to ensure maximal payoff under scarcity. These analyses contribute to the emerging calibration literature in machine learning \citep{kennedy2001bayesian, guo2017calibration}.
Beyond decentralized matching markets, the CDM framework has applications to other noncooperative multi-agent learning problems.
For example in decision-making problems in manufacturing, information gathering, and load balancing, scarcity often arises from capacity and time constraints \citep[see][]{ooi1996decentralized,cogill2006approximate, seuken2008formal}. 
In these applications, agents have partial, heterogeneous information regarding the state of the world and the other agents.  Moreover, they need to cope with uncertainty about the outcomes arising from their actions. Related problems, such as the analysis of social dilemmas \citep{leibo2017multi} and competitive games \citep{tampuu2017multiagent} also require agents  to compete  for  scarce  resources.
The CDM framework can be extended in several ways to be suitable for these applications. 
For instance, one may be able to replace the expected payoff in Eq.~(\ref{eqn:totalutility})  with other reward functions. A primary motivation behind CDM is that defining the reward function does not require knowing the observations and actions of other agents. This is not the case for most traditional learning methods in multi-agent domains \citep{foerster2018counterfactual,lowe2017multi}. 
Moreover, it is possible to modify the single-stage payoff in Eq.~(\ref{eqn:totalutility}) to represent discounted multi-stage payoffs, a topic that we are actively investigating.

\section{Properties of CDM: Incentives, Stability and Fairness}
\label{sec:propertyCDM}

Each participant alone, on both sides of the market, knows their own preference. However, the decentralized market has no centralized system for eliciting the participants' preferences. The proposed CDM algorithm provides a mechanism to aggregate the historical data of the arms' choices. It also provides a framework for the learning of the agent strategies. This section addresses the question of whether CDM gives the agents incentives to act according to their true preferences. 
Such an incentive-compatibility property for agents is desired for the design of matching markets~\citep{roth1982}.

A related problem to incentives is the stability of the matching outcomes that we obtain. We show that any pair of agent and arm has no incentive to disregard the CDM matching and seek an alternative outcome in the context of incomplete information in decentralized markets. 

Finally, a different kind of problem concerning the matching process is fairness. For example, it is crucial to have a matching process that is fair for students in college admissions. We show that the CDM is fair for arms.

\subsection{Incentive-Compatibility}

We first need to define incentive-compatibility in decentralized matching. The payoffs of a matching problem are defined by the outcome of the matching process. 
A procedure that gives agents an incentive to act according to true preferences is defined as one that aggregates participants' actions so  that, in the resulting game, it is a dominant strategy for each agent to act on their preferences honestly. That is, agents pull arms according to the arm's latent utility in Eq.~(\ref{eqn:defofutility}).
Such a procedure is called incentive-compatible when no matter what strategies other players may play, an agent who deviates from the true preference can achieve no better outcome than if the agent had acted on their preferences honestly.

\begin{theorem}
\label{cor:cdmagentincentive}
As $T\to\infty$, the CDM gives agents an incentive to act straightforwardly on their true preferences. That is, agents pull arms according to the arm's latent utility in Eq.~(\ref{eqn:defofutility}).
\end{theorem}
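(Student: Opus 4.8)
The plan is to reduce the incentive-compatibility claim to the pointwise optimality of the cutoff strategy established in Theorem~\ref{thm:cutoffaystate}, combined with the consistency of CDM's estimation and calibration steps as $T\to\infty$. The starting observation is that, from the perspective of agent $P_i$, the strategies of all competing agents enter its expected payoff $\mathcal U_i$ only through the state $s_i$ and the acceptance probabilities $\pi_i(s_i,v_j)$; the agent's own choice of which arms to pull does not alter $s_i$, which encodes how arms rank $P_i$ relative to its competitors (as in the two-agent computation of $\pi_1$ from $\sigma_2$). Thus, fixing any profile of opponent strategies is equivalent to fixing the distribution of the realized state $s_i^*$, and $P_i$'s problem reduces to choosing the pulling set $\mathcal B_i$ that maximizes $\mathcal U_i[\mathcal B_i(s_i)]$.

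First I would invoke Theorem~\ref{thm:cutoffaystate}: for every fixed state, the expected payoff is maximized if and only if $P_i$ uses the cutoff strategy with respect to the true latent utilities $v_j+e_{ij}$. Any non-straightforward deviation---pulling a lower-utility arm in place of a higher-utility one---produces a set that is not of the optimal cutoff form and therefore yields a weakly smaller payoff. Since this optimality holds for every realization of the state, and opponent strategies affect $P_i$ only through the state, the cutoff strategy is a best response against every opponent profile, i.e., a dominant strategy---provided the cutoff is computed with the true $\pi_i$ and a correctly calibrated state.

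The asymptotic qualifier is then handled by the consistency of CDM's two estimation components. By Theorem~\ref{thm:optestpi}, the estimated log-odds $\widehat{f}_i$, and hence $\widehat{\pi}_i$ from Eq.~(\ref{eqn:estpiisvj}), converge to the truth in integrated squared error at the minimax rate. Likewise the kernel density estimate in Eq.~(\ref{eqn:kdefors}) is rate-optimal, so $\widehat{F}_{s_i^*}\to F_{s_i^*}$. I would argue that these convergences propagate through the calibration equation~(\ref{eqn:quans}) (or its maximin analogue in Theorem~\ref{cor:introofvalueV}) and through the cutoff definition~(\ref{eqn:choiceofeim}), using continuity and monotonicity of $\pi_i$ in $s_i$ and of $\Pi_i$ in $b_i$, so that the calibrated state and the cutoff function $\widehat{e}_i(s_i,v)$ computed by CDM converge to their population counterparts; consequently the output set $\widehat{\mathcal B}_i(s_i)$ converges to the true optimal cutoff set. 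Putting the pieces together, as $T\to\infty$ CDM implements the cutoff strategy that is payoff-maximizing for each realized state, hence a best response to every opponent profile, which is precisely the statement that straightforward behavior on the latent utilities is asymptotically a dominant strategy.

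The main obstacle I anticipate is the propagation-of-error step: one must show that $L^2$ and density-estimation errors translate into payoff-relevant control of the \emph{discrete} cutoff set $\widehat{\mathcal B}_i(s_i)$, which involves the maximization and minimization defining $b_i^{+}(s_i)$ and $b_i^{-}(s_i)$ and the boundary comparison in Eq.~(\ref{eqn:jbim1+}). Care is needed precisely where $\Pi_i(b_i)=q_i$ has no exact solution, since there a small estimation error could in principle flip the inclusion of a boundary arm. I expect this to be controllable because, by construction, a boundary arm is pulled exactly when its marginal expected utility weakly exceeds its marginal expected penalty, so near-indifferent boundary arms contribute negligibly to $\mathcal U_i$; formalizing this negligibility---and hence the closeness of the deviation gap to its population value---is where the careful argument must go.
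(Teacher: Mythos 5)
Your proposal follows essentially the same route as the paper's own proof: it combines the cutoff-strategy optimality of Theorem~\ref{thm:cutoffaystate} (so that acting on latent utilities is payoff-maximizing) with the consistency results of Theorems~\ref{thm:optestpi}, \ref{thm:optimalstates} and \ref{cor:introofvalueV} to conclude that straightforward behavior is asymptotically dominant. If anything you are more careful than the paper, which simply asserts the conclusion from these ingredients and does not address the propagation-of-error step into the discrete cutoff set that you correctly flag as the delicate point.
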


\noindent
Arms pulled by multiple agents self-select based on preference. The CDM algorithm provides a mechanism to aggregate the historical matches  from arms' preference-based sorting in decentralized markets. 
Theorem \ref{cor:cdmagentincentive} affirms an incentive-compatibility  property  for  agents even with  arbitrary uncertain preference profiles of arms. This theorem also distinguishes CDM from  strategies according to arms' expected utilities in Eq.~(\ref{eqn:expectutility}) \citep[see][]{das2005two}.

We also provide a discussion on incentive-compatibility for $T<\infty$.
According to the definition, there are two parts of incentive-compatibility: (i) acting on the agents' true preferences and (ii) it is a dominant strategy in terms of maximizing the payoffs.  We discuss these two parts for the CDM for the finite $T$. First, Theorem \ref{thm:cutoffaystate} asserts that the CDM uses a cutoff strategy that greedily pulls arms according to the latent utilities in Eq.~(\ref{eqn:defofutility}).
Since the arms’ latent utilities determine the agents’ true preferences, the CDM allows the agents to act on the true preferences for the finite $T$. Second, the CDM yields the strategy that approximates the optimal solution to the optimization in Eq.~(6), which is computationally intractable and involves unknown quantities. The proposed CDM learns the unknown quantities from historical data with the finite $T$,  and achieves the minimax convergence rate, shown in Theorem \ref{thm:optestpi}. Moreover, the CDM calibrates the uncertain state by taking opportunity costs into account. Together, the CDM yields the strategy that well approximates the optimal strategy for the finite $T$. The numerical examples in Section \ref{sec:agentbetteroff} demonstrate that no matter what strategies other players may play, the alternative methods that deviate from the true preferences would achieve worse outcomes than the CDM that allows agents to act on their preferences honestly. In summary, for the finite $T$, the CDM guarantees approximate incentive-compatibility in that agents can act on their true preferences and achieve the almost optimal payoffs, which are better than alternative methods  that deviate from the true preferences.

\subsection{Stability}
\label{sec:stabilitycdm}

An agent-arm pair $(P_i,A_j)$ is referred to as \emph{matched} if $P_i$ pulls $A_j$, and $A_j$ accepts $P_i$.  A \emph{blocking pair} is an agent-arm pair  that is not  matched, but  both the agent and arm prefer to be matched together.
\citet{gale1962} defined a matching of arms to agents to be \emph{stable} if there is no blocking pair.
This definition was motivated by the concern that following a matching process, some agent-arm pairs with both the incentive and the power may deviate together,
thus hindering the implementation of the intended  matching outcome. This form of stability is widely considered to be a key factor in a successful implementation of a centralized clearinghouse \citep{roth1990}.

\citet{gale1962}  also proved that the set of stable matching is never empty.  Among all stable matchings, the one that is the most preferred by all arms is called an \emph{arm-optimal matching}. The one that is the most preferred by all agents is called an \emph{agent-optimal matching}. The arm- (or agent-) optimal matching is also agents' (or arms') least-preferred matching among all stable matchings \citep{knuth1997}.

The stability of a centralized matching relies on the stringent assumption of complete information on the preferences of the participants. In our decentralized matching formulation, it is more natural to assume that agents only have incomplete information on preferences. Thus we require a modification of the stability notion in \citet{gale1962}.
In particular, it is necessary to specify the uninformed agents' beliefs that might block a candidate stable matching \citep[cf.][]{liu2014}.
We formulate the following notion of \emph{individual rationality} to specify the beliefs of uninformed agents.
Given any $P_i$'s strategy,
let $\mathcal N_i$ be the expected number of arms that would accept $P_i$. 
An additional arm $A_j$ with attributes $(v_j,e_{ij})$, which has a smaller latent utility than those currently pulled by $P_i$, is  acceptable to $P_i$ if and only if individual rationality holds:
 \begin{equation}
\label{eqn:accepcond}
(v_j+e_{ij})\cdot\pi_i(s_i,v_j) \geq \gamma_i\cdot\max\left\{\mathcal N_i+\pi_i(s_i,v_j)-q_i,0\right\}.
\end{equation}
In other words, under $P_i$'s current strategy, $A_j$ is acceptable to $P_i$ if  and only if $A_j$'s expected utility is at least the expected penalty for exceeding the quota. 
By Eq.~(\ref{eqn:jbim1+}), the proposed CDM satisfies the individual rationality condition~(\ref{eqn:accepcond}).

A matching in a decentralized market is defined as stable  if there is no blocking pair under the individual rationality condition~(\ref{eqn:accepcond}). This stability notion is motivated by considering feasibility constraints (for example, single-stage interaction) in  decentralized markets. 
In practice, participants are often compelled to accept the outcomes in such markets. For instance, the procedure in which some high school athletes are matched to colleges involves signing ``letters of intent," which prohibits athletes from further negotiating with other colleges. 
Suppose that an arm $A_j$ does not satisfy Eq.~(\ref{eqn:accepcond}). Then $P_i$ finds that $A_j$ only has a small latent utility but pulling $A_j$ may incur a large penalty due to exceeding the quota. 
It is individually rational that $P_i$ does not pull $A_j$ in order to receive at least zero payoff.  After the matching process, suppose $P_i$ finds that $A_j$ is acceptable. That is, $P_i$ has remaining capacity, and $A_j$ prefers to match to $P_i$. Still, $P_i$ has no incentive to disregard the matching and seek an alternative outcome due to the feasibility constraints in decentralized markets. 

\begin{theorem}
\label{thm:stabcutoff}
As $T\to\infty$, the CDM procedure yields a stable matching. That is, agent and arm has no incentive to disregard the CDM matching and seek an alternative outcome in decentralized markets.
\end{theorem}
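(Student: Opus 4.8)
The plan is to show that, as $T\to\infty$, no unmatched agent--arm pair survives the individual-rationality test (\ref{eqn:accepcond}), which is exactly the notion of stability adopted here. First I would invoke the asymptotic consistency established earlier: by Theorem \ref{thm:optestpi} the estimate $\widehat{\pi}_i$ converges to $\pi_i$, and by Theorem \ref{cor:cdmagentincentive} CDM becomes incentive compatible, so that in the limit each agent $P_i$ pulls exactly the optimal cutoff set $\widehat{\mathcal B}_i(s_i)$ of Theorem \ref{thm:cutoffaystate} evaluated at its calibrated state $s_i$. It then suffices to reason about this limiting cutoff matching. Fix a candidate blocking pair $(P_i,A_j)$ that is unmatched and split on whether $P_i$ pulled $A_j$.

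If $P_i$ pulled $A_j$ yet the pair is unmatched, then $A_j$ must have accepted some other agent $P_{i'}$ that also pulled it; since an arm accepts the agent it most prefers among those pulling it, $A_j$ strictly prefers $P_{i'}$ to $P_i$, so $(P_i,A_j)$ is not a blocking pair. The substantive case is when $P_i$ did not pull $A_j$, i.e.\ $e_{ij}<\widehat{e}_i(s_i,v_j)$, equivalently $A_j$ lies below the latent-utility cutoff $\widehat{b}_i(s_i)$. Here I would show that $A_j$ fails the acceptability test (\ref{eqn:accepcond}), so $P_i$ does not prefer $A_j$ and again no blocking pair forms.

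The heart of the argument is to identify the marginal-optimality condition of the cutoff with the individual-rationality condition. Writing $\mathcal N_i=\sum_{k\in\widehat{\mathcal B}_i(s_i)}\pi_i(s_i,v_k)$ for the expected number of arms accepting $P_i$, optimality of $\widehat{\mathcal B}_i(s_i)$ in (\ref{eqn:totalutility}), guaranteed by Theorem \ref{thm:cutoffaystate}, means that appending the unpulled arm $A_j$ cannot raise $\mathcal U_i$. If appending $A_j$ kept the expected count at or below $q_i$, the payoff would rise by $(v_j+e_{ij})\pi_i(s_i,v_j)>0$, contradicting optimality; hence every unpulled arm must push the expected count strictly above the quota, so that $\mathcal N_i+\pi_i(s_i,v_j)-q_i>0$, and the non-improvement inequality reads
\begin{equation*}
(v_j+e_{ij})\,\pi_i(s_i,v_j)\ \le\ \gamma_i\big[\mathcal N_i+\pi_i(s_i,v_j)-q_i\big].
\end{equation*}
This is precisely the negation of (\ref{eqn:accepcond}), so $A_j$ is not acceptable to $P_i$. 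The same conclusion follows directly in the regime $\mathcal N_i\ge q_i$, when CDM resolves the boundary toward $b_i^{+}$ under (\ref{eqn:jbim1+}), since then the right-hand side exceeds $\gamma_i\pi_i(s_i,v_j)>(v_j+e_{ij})\pi_i(s_i,v_j)$ using the standing assumption $\gamma_i>\max_k\{v_k+e_{ik}\}$. Combining the two cases shows that no blocking pair exists, while a pair that becomes acceptable only through unrealized remaining capacity cannot deviate because of the single-stage feasibility constraints of the decentralized market; this yields the asserted stability.

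I expect the main obstacle to lie in the asymptotic passage of the first step rather than in the marginal-optimality identity. One must argue that the convergence $\widehat{\pi}_i\to\pi_i$ together with the calibrated state forces the realized CDM cutoff to coincide, in the limit, with the true optimal cutoff set, uniformly enough that the discrete membership $A_j\in\widehat{\mathcal B}_i(s_i)$ and the inequality (\ref{eqn:accepcond}) evaluated at $\pi_i$ are the ones that actually govern blocking; boundary arms for which the optimality inequality holds with equality will require a genericity or tie-breaking argument. A secondary point of care is verifying that the penalty in (\ref{eqn:totalutility}) acts on the \emph{expected} count, so that $\mathcal N_i$ in (\ref{eqn:accepcond}) is exactly $\sum_{k\in\widehat{\mathcal B}_i(s_i)}\pi_i(s_i,v_k)$ and the marginal comparison above is legitimate.
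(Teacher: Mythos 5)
Your proposal is correct and follows essentially the same route as the paper: split a candidate blocking pair on whether $P_i$ pulled $A_j$, dispose of the pulled case by the arm's revealed preference for its accepted agent, and dispose of the unpulled case via the individual-rationality condition (\ref{eqn:accepcond}), with Theorem \ref{thm:optestpi} supplying the asymptotic consistency. Your marginal-optimality derivation showing that every unpulled arm violates (\ref{eqn:accepcond}) is a welcome fleshing-out of a step the paper's proof merely asserts, but it is the same argument in substance.
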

  
\noindent
There exists a lattice structure of stable matchings yielded by the CDM. 
If CDM allows each agent to pull all arms, the outcome is the arm-optimal stable matching. 
If CDM allows each agent to pull arms up to their quota, and those pulled arms are distinct, the outcome is the agent-optimal stable matching \citep{roth2008}. In practice, the stable matching yielded by CDM is generally between the agent-optimal and arm-optimal matchings due to the competition of agents and a lack of coordination. Indeed, agents are incentivized to pull more arms than their quotas in decentralized markets in order to hedge against the uncertainty of whether the pulled arms will accept them.
On the other hand, arms may benefit from this uncertainty. 
For example, suppose the college admissions market uses a clearinghouse to form a centralized market. All students apply to top colleges. In that case, the deferred acceptance algorithm yields the college-optimal matching \citep{gale1962, knuth1997}. Compared to the CDM outcomes, students are worse off in the centralized market.

In contrast to CDM, the strategies of pulling arms according to the expected utilities in
Eq.~(\ref{eqn:expectutility})
may result in an unstable matching.
For instance, if the agent $P_i$ decides according to $A_j$'s expected utility, $\pi_i(s_i,v_j)(v_j+e_{ij})$, then $P_i$ will not pull $A_j$ if  the acceptance probability $\pi_i(s_i,v_j)$ is small enough such that the expected utility of $A_j$ is the smallest among all arms, while the latent utility $(v_j+e_{ij})$ of $A_j$ is the largest. As a result, $(P_i,A_j)$ will form a blocking pair if $A_j$ is unmatched, and the matching is unstable.

We also provide a discussions on stability for $T<\infty$.
There are two cases that an agent-arm pair $(P_i,A_j)$ forms a blocking pair: (i) $P_i$ prefers $A_j$ to some of its matched arms and (ii) $P_i$ has unfilled quota, and $A_j$ is unmatched.   
We discuss these two cases of a blocking pair for the CDM for the finite $T$. 
For the first case, we note that Theorem \ref{thm:cutoffaystate} asserts that $P_i$ must have pulled $A_j$ according to the cutoff strategy of CDM.
However, $P_i$ must have been  subsequently rejected by $A_j$ in favor of some agent that $A_j$ liked better. 
Hence, $A_j$ must prefer its currently matched agent to $P_i$, and there is no instability.  This holds for finite $T$. 
For the second case, we note that $P_i$ did not pull $A_j$ since otherwise, $A_j$ would have accepted $P_i$. This happens when  the individual rationality in Eq.~(13) suggests that  $P_i$ would find $A_j$ unacceptable. 
For finite $T$, since the estimate of the population acceptance probability could be  imperfect, the second case is possible to happen. However, these neglected arms are on the boundary of acceptance. 
In summary, for finite $T$, the CDM achieves the approximate stability  in that there is no agent-arm pair such that the agent prefers the arm to some of its matched arms, and the instability occurs only when an agent missed some arms on  the boundary of acceptance and those arms end up unmatched. However, we argue that in practice, the agent has limited incentive to deviate the intended matching outcome for those arms on the boundary of acceptance. Moreover, the numerical examples in Sections 5.1 and 5.2 demonstrate that for finite $T$, the CDM is likely to pull the arms are on the boundary of acceptance, and there is no instability.

\subsection{Fairness}

We consider fairness in terms of ``no justified envy'' \citep{balinski1999,Abdulkadiroglu2003}. 
An arm $A_j$ has justified envy if there exists an agent $P_i$ such that $(P_i,A_j)$ form a blocking pair.
A matching process in decentralized markets is  fair if no arm has justified envy.

Eliminating justified envy is mathematically the same as stability in college admissions under a centralized clearinghouse \citep{gale1962}.
However, there exists a significant difference between stability and eliminating justified envy in decentralized markets. For example, each school is a strategic agent in college admissions, so stability has strategic implications, whereas eliminating justified envy is motivated by fairness considerations.
Another difference is that stability in our setting of decentralized markets requires the additional individual rationality condition~(\ref{eqn:accepcond}).

\begin{theorem}
\label{prof:faircdm}
As $T\to\infty$, the CDM is fair for arms. That is, no arm has justified envy in the matching process according to CDM. 
\end{theorem}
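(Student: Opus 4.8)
The plan is to argue by contradiction, exploiting the fact established in Theorem~\ref{thm:cutoffaystate} that under CDM each agent pulls an \emph{upper set} of arms with respect to its own latent-utility order. Suppose, toward a contradiction, that some arm $A_j$ has justified envy in the CDM outcome. By the definition of justified envy, there are agents $P_i$ and $P_{i'}$ such that (i) $P_i$ pulls $A_j$ and $A_j$ accepts $P_i$, (ii) $A_j$ prefers $P_{i'}$ to $P_i$, and (iii) $P_{i'}$ pulls some arm $A_{j'}$ with $U_{i'}(A_{j'}) < U_{i'}(A_j)$, where $U_{i'}$ is the latent utility in Eq.~(\ref{eqn:defofutility}). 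I will show that (iii) forces $P_{i'}$ to pull $A_j$ as well, and then (i)--(ii) become inconsistent.

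First I would record the structural consequence of Theorem~\ref{thm:cutoffaystate}. As $T\to\infty$, the estimated acceptance probabilities $\widehat{\pi}_{i'}$ converge to the true $\pi_{i'}$ by the consistency and rate-optimality of Theorem~\ref{thm:optestpi}, so the CDM cutoff $\widehat{e}_{i'}(s_{i'},v)$ in Eq.~(\ref{eqn:choiceofeim}) coincides asymptotically with the optimal cutoff, which is decreasing in $v$ with derivative $-1$ on its interior. Consequently, for every agent the pulled set $\widehat{\mathcal B}_{i'}(s_{i'})$ is, up to arms on the cutoff boundary, exactly $\{\,A : U_{i'}(A)\ge \widehat{b}_{i'}(s_{i'})\,\}$: an arm with score $v\in[\widehat{b}_{i'}-1,\widehat{b}_{i'}]$ and fit $e$ is pulled iff $v+e\ge \widehat{b}_{i'}$; an arm with $v\ge \widehat{b}_{i'}$ has cutoff $0$ and is always pulled while already satisfying $U_{i'}\ge \widehat{b}_{i'}$; and an arm with $v<\widehat{b}_{i'}-1$ has cutoff $1$ and is pulled only in the degenerate event that its fit equals $1$. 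This yields the key monotonicity lemma: if $P_{i'}$ pulls $A_{j'}$ and $U_{i'}(A_j)\ge U_{i'}(A_{j'})$, then $P_{i'}$ also pulls $A_j$.

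Applying the lemma to (iii), since $P_{i'}$ pulls $A_{j'}$ and $U_{i'}(A_j) > U_{i'}(A_{j'})$, agent $P_{i'}$ must also pull $A_j$. Thus $A_j$ is pulled by both $P_i$ and $P_{i'}$. By (ii), $A_j$ prefers $P_{i'}$; since an arm accepts its most-preferred agent among those that pull it, $A_j$ accepts $P_{i'}$ rather than $P_i$, contradicting (i). Hence no arm can have justified envy, and CDM is fair for arms.

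The main obstacle I anticipate is handling the two boundary regions of the cutoff, where the clean characterization ``pulled iff $U_{i'}\ge \widehat{b}_{i'}$'' could fail --- in particular the flat segment $\widehat{e}_{i'}=1$, on which an arm with low score but perfect fit $e=1$ is pulled despite having latent utility below $\widehat{b}_{i'}$, which would break the monotonicity lemma. I would dispose of this either by invoking continuity of the fit distribution (so that such perfect-fit arms occur with probability zero) or by using the restriction $e_{ij}\in[0,\bar e]$ with $\bar e<1$ noted after Eq.~(\ref{eqn:defofutility}), under which the relevant flat segment is never binding; in either case $\widehat{\mathcal B}_{i'}(s_{i'})$ is an exact upper set in the latent-utility order and the lemma holds. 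A secondary point requiring care is that the argument must hold for whichever calibrated state $s_{i'}$ the CDM selects (Theorem~\ref{thm:optimalstates} or Theorem~\ref{cor:introofvalueV}); this is immediate because the cutoff structure with derivative $-1$ on its interior, and hence the monotonicity lemma, holds for every value of $s_{i'}$.
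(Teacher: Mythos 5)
Your proof is correct and follows essentially the same route as the paper's: the paper also derives fairness from the fact that CDM's cutoff strategy makes each agent pull an upper set in its latent-utility order (routed through Theorem~\ref{cor:cdmagentincentive}, which itself rests on Theorem~\ref{thm:cutoffaystate}), so that $P_{i'}$ pulling a lower-ranked $A_{j'}$ would force it to pull $A_j$, whom $A_j$ would then accept. Your treatment is in fact more careful than the paper's terse argument, since you explicitly address the degenerate flat segment $\widehat{e}_{i'}=1$ where the upper-set characterization could fail, a point the paper silently glosses over.
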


\noindent
We now compare CDM with the oracle arm set, where the latter maximizes agents' average-case expected payoff by assuming  complete information on opponent agents' strategies. 
For any arm set $\mathcal B_i\subseteq \mathcal A^{T+1}$, let $O_{\mathcal B_i}$ be the set of true states under which agent $P_i$'s strategy of pulling $\mathcal B_i$ results in the over-enrollment:
\begin{equation*}
O_{\mathcal B_i}\equiv\left\{s_i^*\in[0,1]\ \left|\ \sum_{j\in\mathcal B_i}\pi_i(s_i^*,v_j)>q_i\right.\right\},\quad\forall i\in[m].
\end{equation*}
Under the true state $s_i^*$, agent $P_i$'s expected payoff of pulling $\mathcal B_i$ is
\begin{equation*}
\begin{aligned}
 \mathcal U_i[\mathcal B_i] & = \sum_{j\in\mathcal B_i}\left(v_j+e_{ij}\right)\cdot\pi_i\left(s^*_i,v_j\right)- \gamma_i\cdot\max\left\{\sum_{j\in\mathcal B_i}\pi_i(s^*_i,v_j)-q_i,\ 0\right\}.
\end{aligned}
\end{equation*}
Then $P_i$'s average-case  expected payoff $\E_{s_i^*}\{\mathcal U_i[\mathcal B_i]\}$ is
\begin{equation*}
\begin{aligned}
\sum_{j\in\mathcal B_i}\left(v_j+e_{ij}\right)\cdot\E_{s_i^*}[\pi_i(s_i^*,v_j)]-\gamma_i\cdot\mathbb P\left(s_i^*\in O_{\mathcal B_i}\right)\left\{\sum_{j\in\mathcal B_i}\E_{s_i^*}[\pi_i(s_i^*,v_j)\ |\ s_i^*\in O_{\mathcal B_i}]-q_i\right\}.
\end{aligned}
\end{equation*}
Hence, the oracle arm set $\mathcal B_i^*$ for maximizing $\E_{s_i^*}\{\mathcal U_i[\mathcal B_i]\}$ becomes
\begin{equation}
\label{eqn:oracleset}
\mathcal B_i^* = \left\{j\in\mathcal A^{T+1}\ \left|\ v_j+e_{ij}\geq\gamma_i\cdot\mathbb P(s_i^*\in O_{\mathcal B^*_i})\frac{\E_{s_i^*}[\pi_i(s_i^*,v_j)\ |\ s_i^*\in O_{\mathcal B^*_i}]}{\E_{s_i^*}[\pi_i(s_i^*,v_j)]}\right.\right\}.
\end{equation}
\citet{che2016} discussed a particular case of the oracle arm set in a two-agent model. Eq.~(\ref{eqn:oracleset}) generalizes the oracle set to multiple agents. 
Although $\mathcal B_i^*$ has a closed-form expression in Eq.~(\ref{eqn:oracleset}), 
$\mathcal B_i^*$  is not estimable from training data. This is because  the probability $\mathbb P(s_i^*\in O_{\mathcal B^*_i})$ requires the knowledge of opponent agents' strategies, which are unknown in decentralized markets. 
Moreover, we show that the matching process according to the strategy of $P_i$ pulling arms from $\mathcal B_i^*$ is unfair for arms.

\begin{theorem}
\label{thm:unfairnessoracle}
The strategy corresponding to the oracle set in Eq.~(\ref{eqn:oracleset}) is unfair if for at least one agent $P_i$, there exists an interval $(v',v'')\subset[0,1]$ such that
\begin{equation}
\label{eqn:changeparpisvj}
\E\left[\left.\frac{\partial\pi_i(s_i^*,v)}{\partial v}\ \right|\ s_i^*\not\in O_{\mathcal B^*_i}\right] < \E\left[\left.\frac{\partial\pi_i(s_i^*,v)}{\partial v}\ \right|\ s_i^*\in O_{\mathcal B^*_i}\right]<0,\quad\forall v\in(v',v'').
\end{equation}
\end{theorem}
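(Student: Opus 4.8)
The plan is to show that the condition~(\ref{eqn:changeparpisvj}) forces the oracle cutoff to deviate from the constant-latent-utility contours (slope $-1$) that characterize the fair CDM cutoff of Theorem~\ref{thm:cutoffaystate}, and then to convert that deviation into justified envy. First I would rewrite the oracle set~(\ref{eqn:oracleset}) in cutoff form. Writing
\[ g_i(v) = \gamma_i\, p\,\frac{A(v)}{B(v)}, \qquad p=\mathbb P(s_i^*\in O_{\mathcal B_i^*}), \]
with $A(v)=\E_{s_i^*}[\pi_i(s_i^*,v)\mid s_i^*\in O_{\mathcal B_i^*}]$ and $B(v)=\E_{s_i^*}[\pi_i(s_i^*,v)]$, membership in $\mathcal B_i^*$ is exactly $e_{ij}\ge g_i(v_j)-v_j$, so the cutoff curve is $e=g_i(v)-v$ and, wherever it is interior to $[0,1]$, the latent utility along it is simply $v+(g_i(v)-v)=g_i(v)$. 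Fairness holds when this threshold is constant in $v$ (a slope $-1$ cutoff, as for CDM in Theorem~\ref{prof:faircdm}); hence it suffices to prove that~(\ref{eqn:changeparpisvj}) makes $g_i$ \emph{strictly increasing} on $(v',v'')$.

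The key computation is to differentiate $g_i$. Setting $C(v)=\E_{s_i^*}[\pi_i(s_i^*,v)\mid s_i^*\notin O_{\mathcal B_i^*}]$ and using $B=pA+(1-p)C$ (so $B'=pA'+(1-p)C'$, with $p$ and the conditioning event held fixed by the oracle fixed point), a short cancellation gives
\[ g_i'(v)=\gamma_i\,p\,\frac{A'B-AB'}{B^2}=\gamma_i\,p(1-p)\,\frac{A'(v)C(v)-A(v)C'(v)}{B(v)^2}, \]
so the sign of $g_i'$ equals the sign of $A'C-AC'$.

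To pin down this sign I would use two facts. Since each $\pi_i(\cdot,v_j)$ is increasing in $s_i^*$, the sum $\sum_{j\in\mathcal B_i^*}\pi_i(\cdot,v_j)$ is increasing, so $O_{\mathcal B_i^*}=\{s_i^*:\sum>q_i\}$ is an upper interval of $[0,1]$; conditioning on the larger values then yields $A(v)>C(v)>0$ by monotonicity. The hypothesis~(\ref{eqn:changeparpisvj}) is precisely $C'(v)<A'(v)<0$, i.e. $-C'>-A'>0$. Multiplying the larger first factors gives $A(-C')>C(-C')>C(-A')$, whence $A'C-AC'=A(-C')-C(-A')>0$, and therefore $g_i'(v)>0$ on $(v',v'')$ (here $p\in(0,1)$, which is forced for the conditional expectations in~(\ref{eqn:changeparpisvj}) to be well defined). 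Thus the cutoff latent-utility threshold $g_i$ strictly increases across the interval.

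Finally I would turn the monotonicity of $g_i$ into justified envy. Choosing $v_1<v_2$ in $(v',v'')$ where the cutoff is interior, I take an arm $A_{j'}$ at score $v_1$ with latent utility just above $g_i(v_1)$ (hence $A_{j'}\in\mathcal B_i^*$) and an arm $A_j$ at score $v_2$ with latent utility just below $g_i(v_2)$ (hence $A_j\notin\mathcal B_i^*$). Because $g_i(v_1)<g_i(v_2)$, for small perturbations $A_{j'}$ has strictly smaller latent utility than $A_j$, so $P_i$ pulls the lower-ranked $A_{j'}$ while skipping the higher-ranked $A_j$. Since arms' preferences are unrestricted, I would complete the argument with a preference profile in which $A_j$ ranks $P_i$ above the agent that actually pulls it: then $A_j$ prefers $P_i$, $P_i$ does not pull $A_j$, yet $P_i$ pulls $A_{j'}$ ranked below $A_j$ -- exactly justified envy. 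The main obstacle is the sign step: the conclusion $g_i'>0$ is not implied by~(\ref{eqn:changeparpisvj}) alone and genuinely needs the monotonicity fact $A>C$ coming from $O_{\mathcal B_i^*}$ being an upper set; the envy construction is then routine given unrestricted arm preferences.
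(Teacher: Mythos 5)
Your proposal is correct and follows essentially the same route as the paper: both reduce unfairness to showing that the latent-utility threshold along the oracle cutoff is strictly increasing in $v$ (equivalently, that the cutoff slope exceeds $-1$), and both derive this from monotonicity of $\pi_i$ in $s_i^*$ (which gives $\E[\pi_i\mid s_i^*\in O_{\mathcal B_i^*}]>\E[\pi_i\mid s_i^*\notin O_{\mathcal B_i^*}]$, equivalently $>\E[\pi_i]$) combined with Eq.~(\ref{eqn:changeparpisvj}), before concluding with the same justified-envy construction. The only cosmetic difference is that the paper computes the slope via the implicit function theorem on the zero level set of the expected payoff $\mathcal U_i(v,e)=0$, whereas you differentiate $g_i=\gamma_i p\,A/B$ directly; the resulting inequality $A'B-AB'>0$ is identical.
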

Here, Eq.~(\ref{eqn:changeparpisvj}) means that the probability of  arms with score $v\in(v',v'')$  accepting $P_i$ decreases less when $P_i$ 
is popular (i.e., $s_i^*\in O_{\mathcal B^*_i}$) than  that when $P_i$ is not popular (i.e., $s_i^*\not\in O_{\mathcal B^*_i}$). This condition holds for many decentralized matching market models, including  the two-agent model of Example \ref{eg:two-agent} in Section \ref{sec:stateofnature}.

\begin{example}
Consider the two-agent model of Example \ref{eg:two-agent}. Suppose that agent $P_2$ adopts a cutoff strategy $\sigma_2(v_j,e_{2j}) = \mathbf 1(e_{2j}\geq \widetilde{e}_2(s_1^*,v_j))$, where  $\widetilde{e}_2(s_1^*,v)\in[0,1]$ is a cutoff curve under the true state $s_1^*$. Assume that $\partial\widetilde{e}_2(s_1^*,v_j)/\partial v <0$ for $v_j\in(v',v'')$, which means that the requirement on the fit $e_{2j}$ decreases as the score $v_j\in(v',v'')$ increases. This  is a  natural assumption in college admissions and holds for many cutoffs, including the cutoff  given by Eq.~\eqref{eqn:choiceofeim}.
If $e_{2j}$ follows a uniform distribution on $[0,1]$, then
$\E[\sigma_2(v_j,e_{2j})]=1-\widetilde{e}_2(s_1^*,v_j)$.
Recall that Example \ref{eg:two-agent} gives  that 
$\pi_1(s^*_1,v_j) = 1-\E[\sigma_{2}(v_j,e_{2j})] + \mu_1(s^*_1)\E[\sigma_{2}(v_j,e_{2j})]$.
In the current example, it is equivalent to
\begin{equation*}
    \pi_1(s^*_1,v_j) =  \widetilde{e}_2(s_1^*,v_j) + \mu_1(s_1^*)(1-\widetilde{e}_2(s_1^*,v_j)).
\end{equation*}
Since $O_{\mathcal B^*_1}$ is the set of true states corresponding to the over-enrollments,  \begin{equation*}
    \E[\mu_1(s_1^*)|s_1^*\not\in O_{\mathcal B^*_1}] < \E[\mu_1(s_1^*)|s_1^*\in O_{\mathcal B^*_1}].
\end{equation*}
Together with $\partial\widetilde{e}_2(s_1^*,v_j)/\partial v <0$ for $v_j\in(v',v'')$, we have that for any $v_j\in(v',v'')$,
\begin{equation*}
\E\left[\frac{\partial\widetilde{e}_2(s_1^*,v_j)}{\partial v} (1-\mu_1(s_1^*))\left|s_1^*\not\in O_{\mathcal B_1}\right.\right]<\E\left[\frac{\partial\widetilde{e}_2(s_1^*,v_j)}{\partial v}(1-\mu_1(s_1^*))\left|s_1^*\in O_{\mathcal B_1}\right.\right]<0,
\end{equation*}
which directly implies Eq.~\eqref{eqn:changeparpisvj} for agent $P_1$.
\end{example}

 \begin{figure}[!ht]
    \centering
    \includegraphics[width=0.3\textwidth]{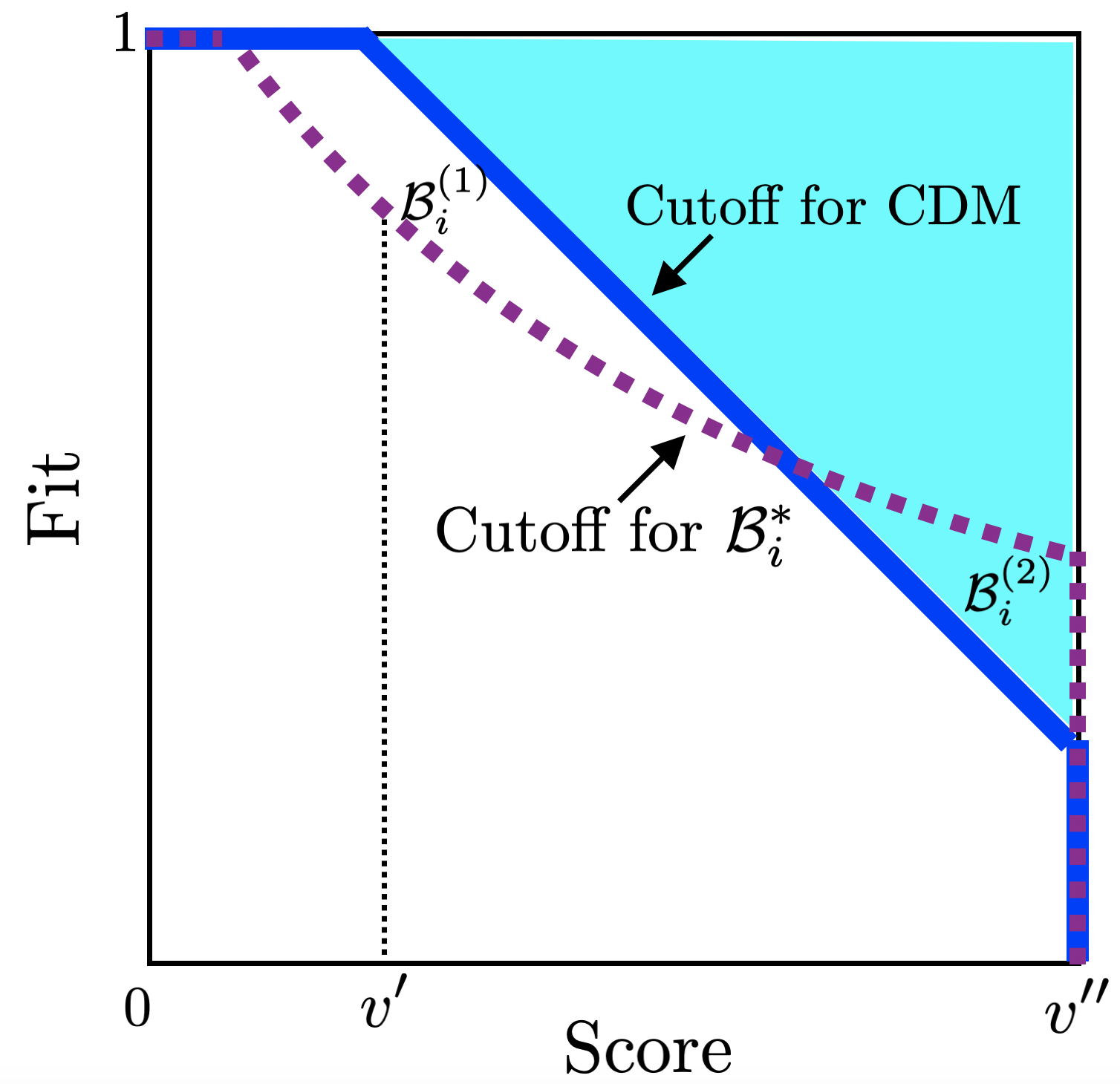}
    \caption{Fairness of CDM  compared to the unfairness of oracle arm set $\mathcal B_i^* $. The dotted curve represents the cutoff for $\mathcal B_i^*$ in Eq.~(\ref{eqn:oracleset}). The solid line segments denote the cutoff for CDM. The arms in $\mathcal B_i^{(2)}$ have justified envy towards arms in $\mathcal B_i^{(1)}$. }
    \label{fig:faircdm}
\end{figure}

Figure \ref{fig:faircdm} illustrates the unfairness of the strategy corresponding to the oracle set.
The proof of Theorem \ref{thm:unfairnessoracle} shows that the slope of $\mathcal B_i^* $'s cutoff curve is in the interval $(-1,0)$, for any $v\in(v',v'')$.
Hence, there are arms not selected to $\mathcal B_i^* $ (for example, those in $\mathcal B_i^{(2)}$). However, they rank higher than some selected arms  (for example, those  in $\mathcal B_i^{(1)}$) according to agent $P_i$'s true preference. Thus, arms in $\mathcal B_i^{(2)}$ have justified envy towards arms in  $\mathcal B_i^{(1)}$. 
On the contrary, Theorem \ref{thm:cutoffaystate} shows that the slope of CDM's cutoff equals $-1$ and hence the agent prefers arms with larger latent utilities. 

The intuition behind the results on fairness is as follows. Using the oracle arm set in Eq.~\eqref{eqn:oracleset},  agents would respond to congestion strategically to avoid head-on competition under the uncertain state. 
Specifically, agents strategically favor arms that rank highly in fits against those that rank highly in scores, for such arms are likely to be overlooked by competitors and their acceptance rates are more stable under the uncertain state. Strategic targeting makes the matching process unfair, in the sense of creating justified envy as illustrated in Figure \ref{fig:faircdm}.
In contrast, using the proposed CDM, agents would firstly calibrate the uncertain state. Under a calibrated state, arms' acceptance rates are predictable, and agents can greedily pull arms according to their true preferences. This straightforward behavior makes the matching process fair.

\section{Empirical Study}
\label{sec:simulation}

In this  section, we provide a numerical investigation of the fairness and stability properties of CDM. We also study the payoffs achieved by CDM compared to alternative methods.
\subsection{Stability and Fairness of CDM}
\label{subsec:fairstability}
Suppose there are three agents, $P_1,P_2,P_3$, $m=3$, and three arms, $A_1,A_2,A_3$, $n=3$. The latent utilities and the arms' true preferences are given in Table \ref{table:eg1}. Arms have scores and fits as follows: $v_1=v_2=v_3=2$, and $e_{11}=e_{21}=e_{32}=0,e_{13}=e_{22}= e_{31}=0.5, e_{12}=e_{23}=e_{33}=1$, respectively. Each agent has quota $q=1$ and the penalty for exceeding the quota is $\gamma=10$.
Agents have to make decisions on which arms to pull without knowing the arms' true preferences. We compare the CDM procedure with the \emph{greedy action}, which chooses arms with maximum expected utilities under the total expected acceptance up to the quota  
\citep{das2005two}.

\begin{table}[ht]
\centering
\begin{tabular}{ ccc }   
 \textbf{\small{(a) Latent utility}}  & \quad\quad\quad  \textbf{\small{(b) Arm's preference}} & \quad\quad\quad  \textbf{\small{(c) Expected utility}} \\[0.5ex]
\begin{tabular}{ c  c c c } 
\toprule
 & $A_1$ & $A_2$ & $A_3$ \\ [0.3ex]  
 \cmidrule(lr){2-2}\cmidrule(lr){3-3}\cmidrule(lr){4-4}
$P_1$ & 2 & 3 & 2.5 \\ [0.3ex]  
$P_2$ & 2 & 2.5 & 3 \\ [0.3ex]  
$P_3$ & 2.5 & 2 & 3 \\ [0.3ex]  
\bottomrule
\end{tabular} &  
\quad\quad\quad
\begin{tabular}{ c  c c c } 
\toprule
 & $P_1$ & $P_2$ & $P_3$ \\ [0.3ex]  
 \cmidrule(lr){2-2}\cmidrule(lr){3-3}\cmidrule(lr){4-4}
$A_1$ & 3 & 2 & 1 \\ [0.3ex]  
$A_2$ & 2 & 3 & 1 \\ [0.3ex]  
$A_3$ & 1 & 3 & 2 \\ [0.3ex]  
\bottomrule
\end{tabular} &  
\quad\quad\quad
\begin{tabular}{ c  c c c } 
\toprule
 & $A_1$ & $A_2$ & $A_3$ \\ [0.3ex]  
 \cmidrule(lr){2-2}\cmidrule(lr){3-3}\cmidrule(lr){4-4}
$P_1$ & $0.52$ & $1.99$ & $2.5$ \\ [0.3ex]  
$P_2$ & $0.67$ & $0$ & $0$ \\ [0.3ex]  
$P_3$ & 2.5 & $2$ & $1.05$ \\ [0.3ex]  
\bottomrule
\end{tabular} 
\end{tabular}
\caption{(a) Arm's latent utilities for each agent, which corresponds to Eq.~(\ref{eqn:defofutility}). For example, $P_1$ receives utility $2.5$ when it successfully pulls $A_3$. 
(b) Arms' preferences with the number indicating the arms' ranking of agents. For example, $A_1$ ranks $P_3$ first, $P_2$ second, $P_1$ third. These preferences are unknown to agents.
(c) Expected utilities, which corresponds to Eq.~(\ref{eqn:expectutility}). For example, $P_1$ expects to receive utility $0.52$ if it pulls $A_1$.}
\label{table:eg1}
\end{table}

\begin{figure}[!ht]
    \centering
    \includegraphics[width=\textwidth]{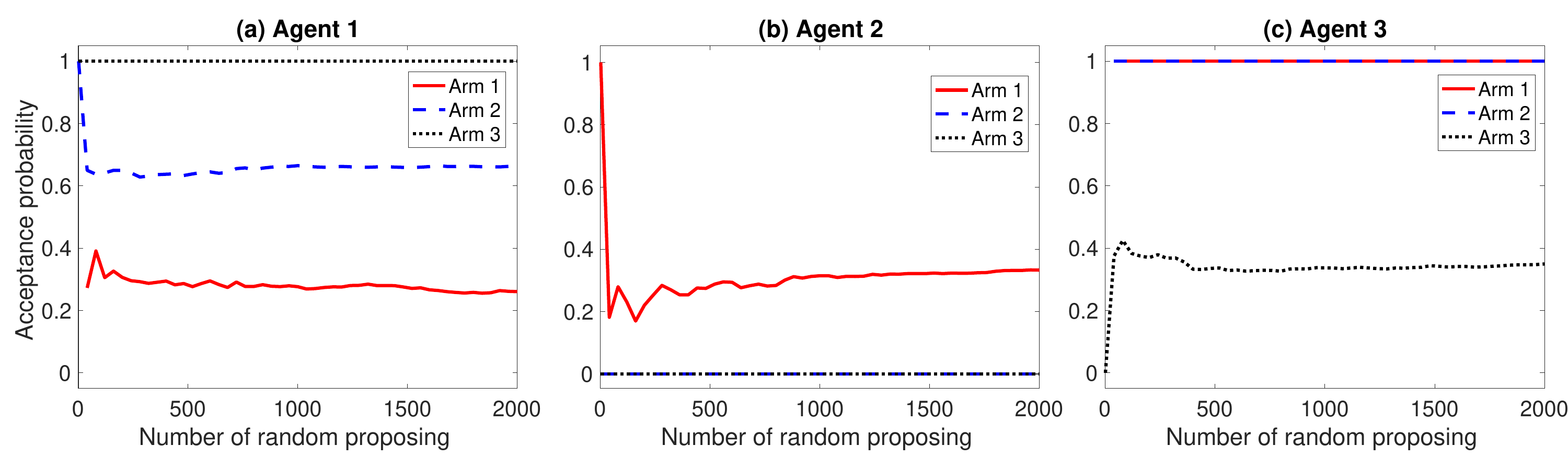}
    \caption{Arms' acceptance probabilities for three agents. (a) $P_1$, (b) $P_2$, (c)  $P_3$.}
    \label{fig:eg1}
\end{figure}

The training data are simulated by having each agent pull a random number of arms according to its latent utilities. Figure \ref{fig:eg1} shows arms' acceptance probabilities $\pi_i(s_i^*,v_j)$ based on a total of $2000$ rounds of random proposals. That is, the sample size is $T=2000$. There is a unique true state. 
We implement CDM according to Algorithm \ref{alg:decentralizedcdm}.
The  CDM procedure suggests that $P_1$ pulls $A_2$, $P_2$ pulls $A_1,A_2,A_3$, and $P_3$ pulls $A_3$, and it gives the stable matching $(A_1,P_2),(A_2,P_1), (A_3,P_3)$.
On the contrary, the greedy action suggests that $P_1$ pulls $A_3$, $P_2$ pulls $A_1,A_2,A_3$, and $P_3$ pulls $A_1$, and it yields the following matching: $(A_1,P_3),(A_2,P_2), (A_3,P_1)$. 
We note three differences between the two matchings. First, the greedy action is \emph{unfair} since  $A_2$ has justified envy towards $A_3$ in the sense that $A_2$  prefers $P_1$ to $P_2$. However, $P_1$ pulls $A_3$ that ranks below $A_2$ according to the true preference of $P_1$.  Second, the greedy action also yields an \emph{unstable} matching  since $(A_2,P_1)$ is a blocking pair. 
On the other hand,  CDM  is fair and yields a stable matching. 
The stable matching is both  agent-optimal and arm-optimal in this example.
Finally, the total payoff that agents receive using CDM equals  $3+2+3=8$, which is larger than the total payoff that agents receive from the greedy action, $2.5+2.5+2.5=7.5$.

\subsection{Lattice Structure for the Stability of CDM}
\label{sec:latstrcdm}

We consider the decentralized matching with four different preference structures: S1, S2, S3, S4. The market consists of three agents, $m=3$, and three arms, $n=3$. 
Each agent has the same quota $q=1$ and penalty $\gamma=5$. 
Table \ref{table:eg2} gives arms' latent utilities and true preferences. 
The training data are simulated by having agents pull random numbers of arms according to their latent utilities. 
The last column of Table \ref{table:eg2} gives the estimates of arms'  acceptance probabilities $\pi_i(s_i^*,v_j)$ after 2000 rounds of random proposals under each of the structures S1---S4. That is, the sample size is $T=2000$ for each of the structures S1---S4. These acceptance probabilities are evaluated at convergence.  

\begin{table}[ht]
\centering
\begin{tabular}{ ccc }   
 \textbf{\small{Latent utility of S1}}  & \quad  \textbf{\small{Arm's preference of S1}} & \quad \textbf{\small{Acceptance probability of S1}} \\[0.5ex]
\begin{tabular}{ c  c c c } 
\toprule
 & $A_1$ & $A_2$ & $A_3$ \\ [0.3ex]  
 \cmidrule(lr){2-2}\cmidrule(lr){3-3}\cmidrule(lr){4-4}
$P_1$ & 2.5 & 3 & 2 \\ [0.3ex]  
$P_2$ & 3 & 2.5 & 2 \\ [0.3ex]  
$P_3$ & 3 & 2.5 & 2 \\ [0.3ex]  
\bottomrule
\end{tabular} &  
\quad
\begin{tabular}{ c  c c c } 
\toprule
 & $P_1$ & $P_2$ & $P_3$ \\ [0.3ex]  
 \cmidrule(lr){2-2}\cmidrule(lr){3-3}\cmidrule(lr){4-4}
$A_1$ & 1 & 2 & 3 \\ [0.3ex]  
$A_2$ & 2 & 3 & 1 \\ [0.3ex]  
$A_3$ & 1 & 2 & 3 \\ [0.3ex]  
\bottomrule
\end{tabular} &  
\quad
\begin{tabular}{ c  c c c } 
\toprule
 & $A_1$ & $A_2$ & $A_3$ \\ [0.3ex]  
 \cmidrule(lr){2-2}\cmidrule(lr){3-3}\cmidrule(lr){4-4}
$P_1$ & $1$ & $0.34$ & $1$ \\ [0.3ex]  
$P_2$ & $0.35$ & $0$ & $0.65$ \\ [0.3ex]  
$P_3$ & $0$ & $1$ & $0.45$ \\ [0.3ex]  
\bottomrule
\end{tabular} 
\vspace{0.2in}
\\ 
 \textbf{\small{Latent utility of S2}}  & \quad  \textbf{\small{Arm's preference of S2}} & \quad  \textbf{\small{Acceptance probability of S2}} \\[0.5ex]
\begin{tabular}{ c  c c c } 
\toprule
 & $A_1$ & $A_2$ & $A_3$ \\ [0.3ex]  
 \cmidrule(lr){2-2}\cmidrule(lr){3-3}\cmidrule(lr){4-4}
$P_1$ & 3 & 2.5 & 2 \\ [0.3ex]  
$P_2$ & 2.5 & 3 & 2 \\ [0.3ex]  
$P_3$ & 2.5 & 2 & 3 \\ [0.3ex]  
\bottomrule
\end{tabular} &  
\quad
\begin{tabular}{ c  c c c } 
\toprule
 & $P_1$ & $P_2$ & $P_3$ \\ [0.3ex]  
 \cmidrule(lr){2-2}\cmidrule(lr){3-3}\cmidrule(lr){4-4}
$A_1$ & 3 & 1 & 2 \\ [0.3ex]  
$A_2$ & 1 & 2 & 3 \\ [0.3ex]  
$A_3$ & 2 & 3 & 1 \\ [0.3ex]  
\bottomrule
\end{tabular} &  
\quad
\begin{tabular}{ c  c c c } 
\toprule
 & $A_1$ & $A_2$ & $A_3$ \\ [0.3ex]  
 \cmidrule(lr){2-2}\cmidrule(lr){3-3}\cmidrule(lr){4-4}
$P_1$ & $0.10$ & $1$ & $0$ \\ [0.3ex]  
$P_2$ & $1$ & $0.35$ & $0$ \\ [0.3ex]  
$P_3$ & $0.32$ & $0$ & $1$ \\ [0.3ex]  
\bottomrule
\end{tabular} 
\vspace{0.2in}
\\ 
 \textbf{\small{Latent utility of S3}}  & \quad  \textbf{\small{Arm's preference of S3}} & \quad  \textbf{\small{Acceptance probability of S3}} \\[0.5ex]
\begin{tabular}{ c  c c c } 
\toprule
 & $A_1$ & $A_2$ & $A_3$ \\ [0.3ex]  
 \cmidrule(lr){2-2}\cmidrule(lr){3-3}\cmidrule(lr){4-4}
$P_1$ & 2 & 3 & 2.5 \\ [0.3ex]  
$P_2$ & 2.5 & 2 & 3 \\ [0.3ex]  
$P_3$ & 3 & 2.5 & 2 \\ [0.3ex]  
\bottomrule
\end{tabular} &  
\quad
\begin{tabular}{ c  c c c } 
\toprule
 & $P_1$ & $P_2$ & $P_3$ \\ [0.3ex]  
 \cmidrule(lr){2-2}\cmidrule(lr){3-3}\cmidrule(lr){4-4}
$A_1$ & 1 & 2 & 3 \\ [0.3ex]  
$A_2$ & 3 & 1 & 2 \\ [0.3ex]  
$A_3$ & 2 & 3 & 1 \\ [0.3ex]  
\bottomrule
\end{tabular} &  
\quad
\begin{tabular}{ c  c c c } 
\toprule
 & $A_1$ & $A_2$ & $A_3$ \\ [0.3ex]  
 \cmidrule(lr){2-2}\cmidrule(lr){3-3}\cmidrule(lr){4-4}
$P_1$ & $1$ & $0.22$ & $0.67$ \\ [0.3ex]  
$P_2$ & $0.66$ & $1$ & $0.24$ \\ [0.3ex]  
$P_3$ & $0.22$ & $0.66$ & $1$ \\ [0.3ex]  
\bottomrule
\end{tabular} 
\vspace{0.2in}
\\ 
 \textbf{\small{Latent utility of S4}}  & \quad  \textbf{\small{Arm's preference of S4}} & \quad  \textbf{\small{Acceptance probability of S4}} \\[0.5ex]
\begin{tabular}{ c  c c c } 
\toprule
 & $A_1$ & $A_2$ & $A_3$ \\ [0.3ex]  
 \cmidrule(lr){2-2}\cmidrule(lr){3-3}\cmidrule(lr){4-4}
$P_1$ & 3 & 2 & 2.5 \\ [0.3ex]  
$P_2$ & 2 & 2.5 & 3 \\ [0.3ex]  
$P_3$ & 2 & 2.5 & 3 \\ [0.3ex]  
\bottomrule
\end{tabular} &  
\quad
\begin{tabular}{ c  c c c } 
\toprule
 & $P_1$ & $P_2$ & $P_3$ \\ [0.3ex]  
 \cmidrule(lr){2-2}\cmidrule(lr){3-3}\cmidrule(lr){4-4}
$A_1$ & 3 & 2 & 1 \\ [0.3ex]  
$A_2$ & 2 & 1 & 3 \\ [0.3ex]  
$A_3$ & 1 & 3 & 2 \\ [0.3ex]  
\bottomrule
\end{tabular} &  
\quad
\begin{tabular}{ c  c c c } 
\toprule
 & $A_1$ & $A_2$ & $A_3$ \\ [0.3ex]  
 \cmidrule(lr){2-2}\cmidrule(lr){3-3}\cmidrule(lr){4-4}
$P_1$ & $0.43$ & $0.29$ & $1$ \\ [0.3ex]  
$P_2$ & $0.69$ & $1$ & $0$ \\ [0.3ex]  
$P_3$ & $1$ & $0.22$ & $0.35$ \\ [0.3ex]  
\bottomrule
\end{tabular} 
\end{tabular}
\caption{The left column shows arms' latent utilities for each agent. 
The middle column shows arms' preferences, where the number indicates the arms' ranking of agents. 
The right column shows arms' acceptance probabilities.}
\label{table:eg2}
\end{table}

We find that CDM in Algorithm \ref{alg:decentralizedcdm} gives stable matchings under all of the structures S1---S4. However, these stable matchings have different optimality in terms of agents' and arms' welfare. In particular, the matching in S1 is $(A_1,P_1),(A_2,P_3),(A_3,P_2)$, which is both agent-optimal and arm-optimal; the matching in S2 is $(A_1,P_2),(A_2,P_1),(A_3,P_3)$, which  is arm-optimal but  not agent-optimal; the matching in S3 is $(A_1,P_2), (A_2,P_3),$ $(A_3,P_1)$, which is not agent-optimal or arm-optimal;  the matching in S4 is $(A_1,P_1)$, $(A_2,P_2)$, $(A_3,P_1)$,  which is not agent-optimal or arm-optimal. This lattice structure corroborates the results in Section \ref{sec:stabilitycdm}.  We further make three remarks on these matchings. First, some arms benefit from the decentralized matching compared with the centralized matching using the agent-proposing DA algorithm. 
For example, the CDM matching in S2 is arm-optimal. The agent-proposing DA would give the agent-optimal matching, that is, the arms’ least-preferred stable matching.

Second, no strategy in decentralized markets guarantees  yielding the agent-optimal matching, mainly due to the competition of agents and a lack of coordination. Consider the S3 structure as an example, where CDM suggests that $P_1$ pulls $A_1,A_2$, $P_2$ pulls $A_2,A_3$ and $P_3$ pulls $A_1,A_3$. 
In this example, the strategy, according to the CDM, is a subgame perfect equilibrium since 
each agent's action is the best response against other agents' actions. For instance, if $P_1$ changes to the strategy by only pulling $A_2$ while other agents' strategies do not change, this gives the following matching: $(A_1,P_2),$ $(A_2,P_3),$ $(A_3,P_2)$, where $P_1$ is worse off due to the unfilled quota. If $P_1$ changes to pull all arms $A_1,A_2,A_3$ while other agents' strategies do not change, then the resulting matching is $(A_1,P_1)$, $(A_2,P_3),$ $(A_3,P_1)$, where $P_1$ is also worse off due to exceeding the quota. A similar argument applies to $P_2$ and $P_3$. 
Hence, no agent  has an incentive to change its strategy. In contrast, the agent-proposing DA
permits each agent to pull one arm each time according to its latent utility. Such coordination results in the agent-optimal matching. However, if there is no coordination, the  competition of agents generates uncertainty on the acceptance of the pulled arms. Hence, agents have the incentive to pull more arms than their quotas to combat the uncertainty.

Third, we discuss individual rationality in Eq.~(\ref{eqn:accepcond}).
Without this condition, any stable matching can be supported by a subgame perfect equilibrium, and only stable matchings can arise in equilibrium  \citep{alcalde2000}. However, under Eq.~(\ref{eqn:accepcond}), the set of stable matchings is enlarged. It includes (but
may not coincide with) the set of subgame perfect equilibria.
Consider S4 as an example, where CDM suggests that $P_1$ pulls $A_1,A_3$, and $P_2$ pulls $A_2,A_3$, and $P_3$ pulls $A_2,A_3$. 
The strategy given by CDM is not a subgame perfect equilibrium. 
However, the resulting matching is stable under Eq.~(\ref{eqn:accepcond}). For instance, $P_3$ finds $A_1$ having a larger expected penalty for exceeding the quota  than its expected utility. That is, $P_3$ finds $A_1$ unacceptable. Hence, there is no blocking pair.

\subsection{Agents' Payoffs}
\label{sec:agentbetteroff}

Consider ten agents $m=10$ and varying numbers of arms, $n=\{50,70,90,110,130,150\}$. Each arm has a score $v_j$ and fits $e_{ij}$ drawn uniformly from $[0,1]$.
The agents' preferences are determined by arms' latent utilities as in Eq.~(\ref{eqn:defofutility}).
Each agent has the same quota $q=5$ and the same penalty $\gamma$ chosen from $\{2,2.5,3\}$. 
The simulation generates random arm preferences with 10 different states from $\{s_1,\ldots,s_{10}\}\subset [0,1]$. The training data are simulated by having agents pull random numbers of arms according to their latent utilities. The training data consists of $20$ rounds of random proposals under each of the arms' preference structures. In total, the sample size is $T=200$.
The testing data draws a random state from $\{s_1,\ldots,s_{10}\}$ and generates the corresponding arms' preferences. This example simulates top colleges competing for top students. Students' preferences are uncertain and depend on colleges'  reputation and popularity in the current year. 

\begin{figure}[!ht]
\centering
\includegraphics[width=\textwidth, height=2.25in]{./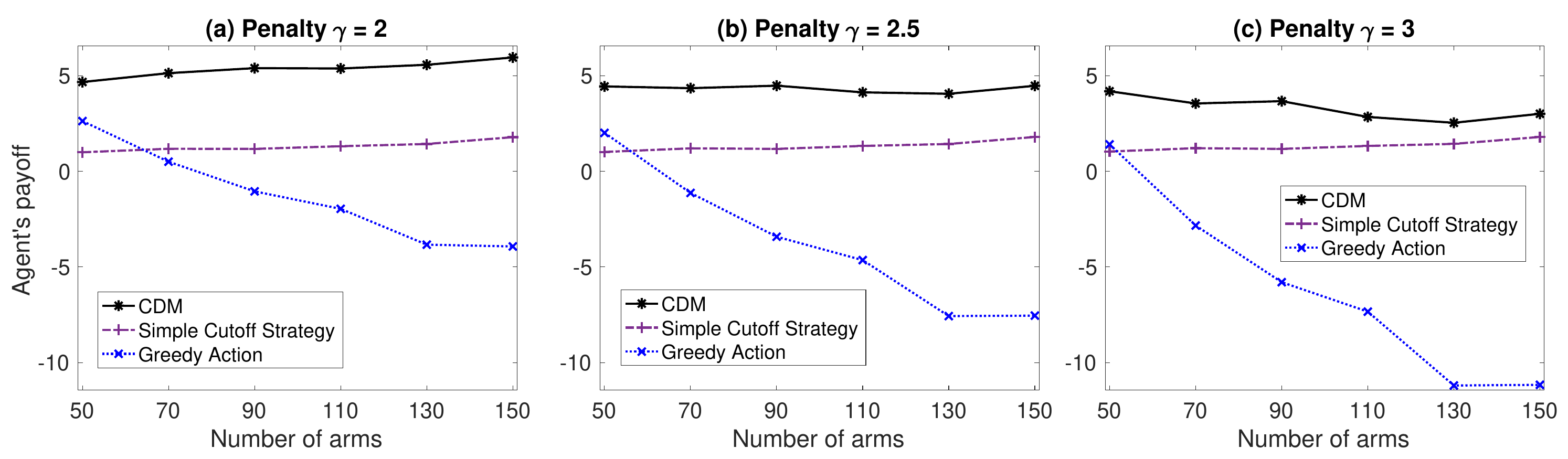}
\caption{Performance of three strategies with varying numbers of arms. The results are averaged over $500$ data replications. Penalty levels (a) $\gamma=2$, (b) $\gamma=2.5$, (c) $\gamma=3$.}
\label{fig:eg3_1}
\end{figure}

We compare the agent's payoff achieved by CDM in Algorithm \ref{alg:decentralizedcdm} with that by other methods. In particular, we  consider two alternative methods: (i) the \emph{simple cutoff strategy} where the agent chooses $q$ best arms according to the latent utility;  (ii) the \emph{greedy action} where the agent chooses arms with maximum expected  utilities and a total expected acceptance up to $q$.
Figure \ref{fig:eg3_1} reports the agent $P_1$'s averaged payoffs over $500$ data replications. Here, all  agents other than $P_1$ use the CDM with mean calibration and  $P_1$ uses one of the three methods: CDM, simple cutoff strategy, and greedy action. It is seen that CDM gives the largest average payoffs compared to alternative methods, and the advantage of CDM is robust to different numbers of arms and penalty levels. We  make two further remarks. First, the state's calibration is useful in improving the agent's utility under uncertain preferences of the arms. For example, CDM outperforms the simple cutoff strategy that has no calibration on the uncertain state.
Moreover, CDM helps to ease congestion as it pulls more arms compared to the simple cutoff strategy. 
Second, CDM performs particularly well if the matching market has  intense competition. In this simulated market where arms' preferences are random,  a smaller number of arms corresponds to a higher  competition level. 
It is seen that CDM is significantly better than other methods in the regime of  small numbers of students. 
On the other hand, the simple cutoff strategy does not work well with small numbers of students. The reason is that under an intense competition among the agents, arms reject most offers sent according to the simple cutoff strategy.

\begin{figure}[!ht]
\centering
\includegraphics[width=\textwidth, height=2.25in]{./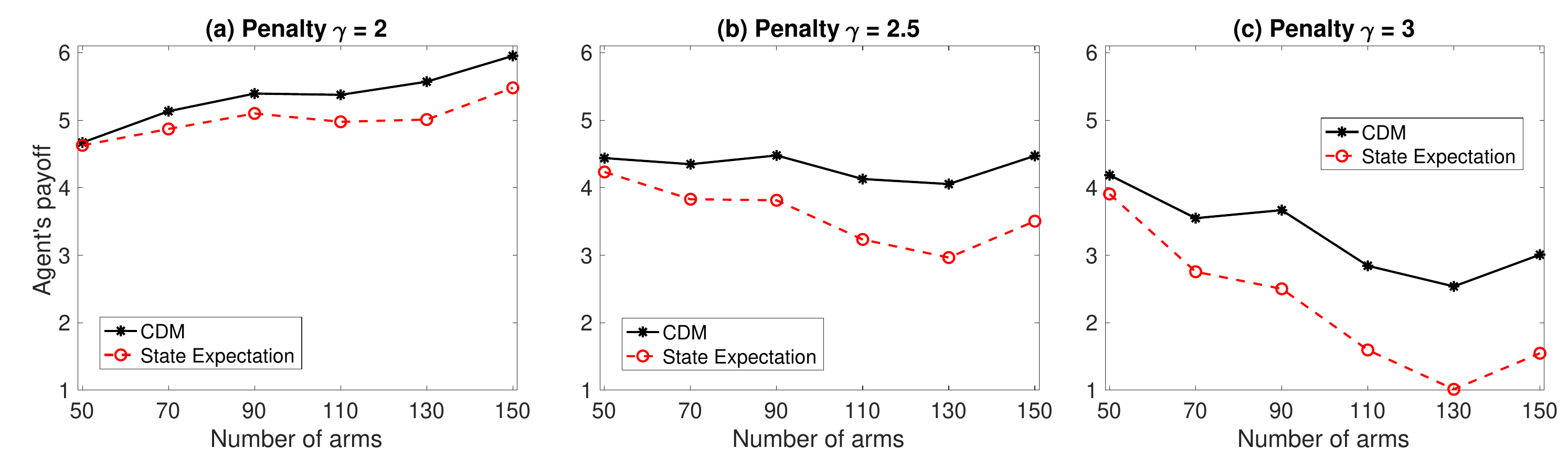}
\caption{Performance of two calibration methods  with varying numbers of arms. The results are averaged over $500$ data replications. Penalty levels (a)  $\gamma=2$, (b)  $\gamma=2.5$, (c) $\gamma=3$.}
\label{fig:eg3_2}
\end{figure}

We also compare two different calibration methods: CDM and \emph{state expectation}. The latter calibrates the unknown state using the naive mean estimate of states, $\E[s_i^*]$, which is discussed in Section \ref{sec:cdm}. 
Figure \ref{fig:eg3_2} shows $P_1$'s averaged payoffs over $500$ data replications. 
Here, all agents other than $P_1$ use the CDM with mean calibration and $P_1$ uses one of the two methods: CDM and state expectation. We observe that CDM is adaptive to different levels of penalty $\gamma$.  However, the state expectation method degrades quickly as $\gamma$ increases. This difference between the two methods is because CDM balances the marginal payoff and the marginal penalty for exceeding the quota. Hence CDM is sensitive to the penalty $\gamma$, while the state expectation method is not.


\section{Application to the Study of College Admissions}
\label{sec:application}
We demonstrate aspects of the theoretical predictions through a real data from college admissions and a simulated example on graduate school admissions.

\subsection{U.S. College Admissions}

In this section we present an analysis of admissions data from the \emph{New York Times} ``The Choice" blog (\url{https://thechoice.blogs.nytimes.com/category/admissions-data}). The data set contains admission yields in 2015-17 for $37$ U.S.\ colleges, including liberal arts colleges, national universities, and other undergraduate programs. 
That is, $m=37$ and $T=3$.
As we discussed in Section \ref{sec:stateofnature}, a college's yield is a proxy for the state $s_{i}$ as an indicator of the college's popularity.
We excluded two colleges, Harvard and Yale, from the sample due to a significant proportion of missing values. 
We test if the yields of regular admissions changed over  2015-17. The null hypothesis is that the state  is the same over 2015-17. We use a simultaneous chi-squared test for all colleges 
under false-discovery rate (FDR) control  at the $.05$ significance level \citep{benjamini1995}. 
There are $13$ colleges tested with significant $p$-values.
Figure \ref{fig:nytdata} shows that  colleges with large numbers of admitted students are more likely to have significantly varied yields compared to colleges with small numbers of admitted students.  Moreover,  top-ranked national universities and liberal arts colleges are likely to have significantly varied yields.
This result corroborates the students' uncertain preferences in college admissions.

\begin{figure}[!ht]
    \centering
    \includegraphics[width=\textwidth,height=2.5in]{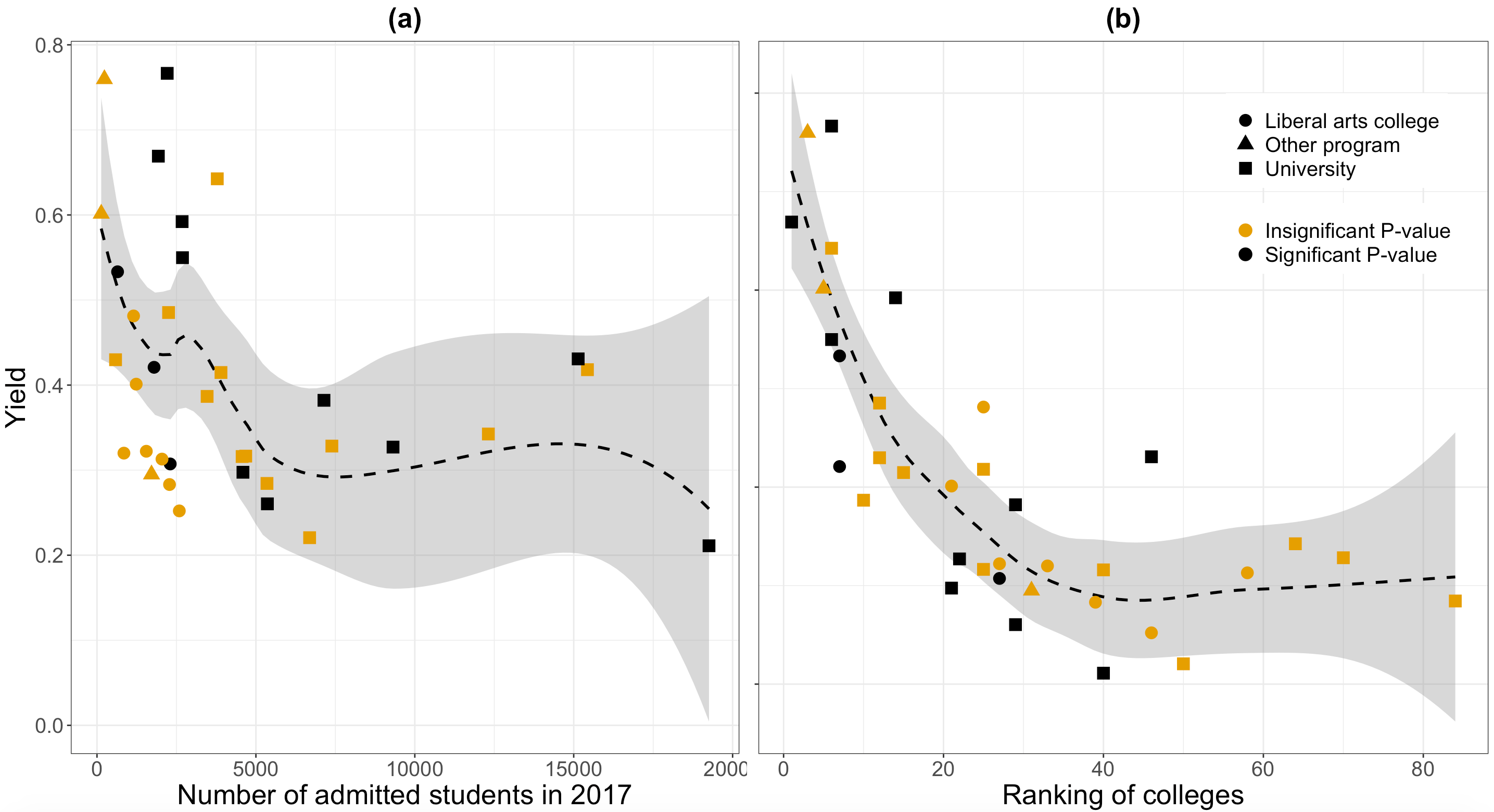}
    \caption{Regression of the yield on (a) the size of admitted class and (b) the ranking of colleges. 
    We fit the dashed curves and the $95\%$ confidence bands using smoothing splines.
     The ranking is according to \emph{U.S. News and World Report},
     where two (or more) colleges might tie, and liberal arts colleges, national universities, and other undergraduate programs  are ranked separately within their categories.}
    \label{fig:nytdata}
\end{figure}
\subsection{Simulated Graduate School Admissions}
\label{sec:hierarchpref}

\begin{figure}[!ht]
\centering
\includegraphics[width=0.9\textwidth, height=2.25in]{./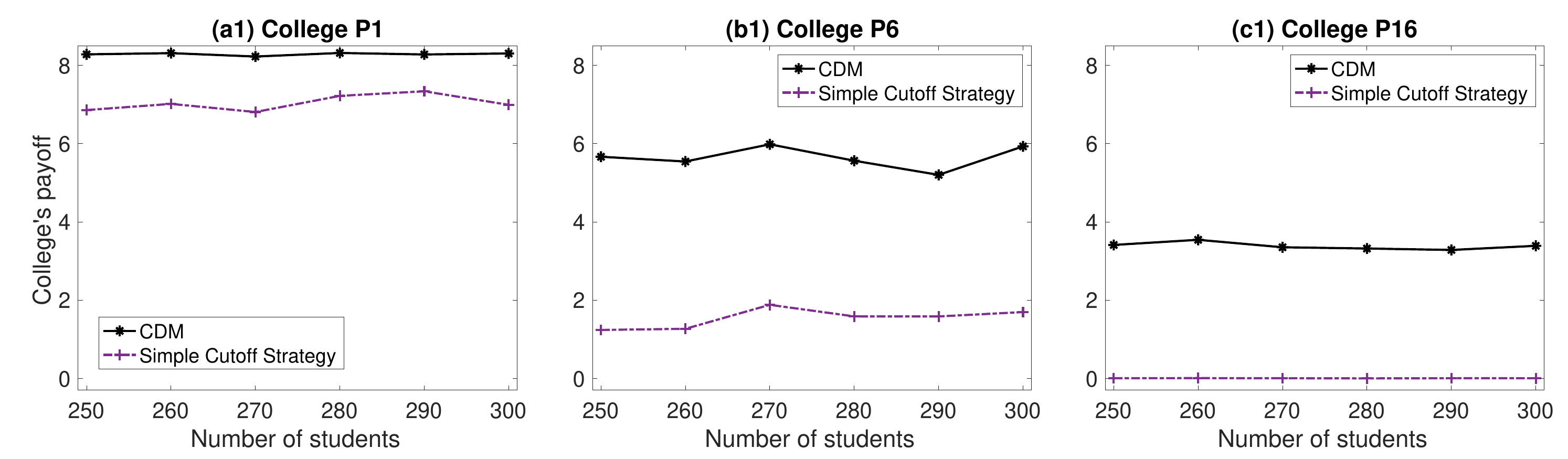}
\includegraphics[width=0.9\textwidth, height=2.25in]{./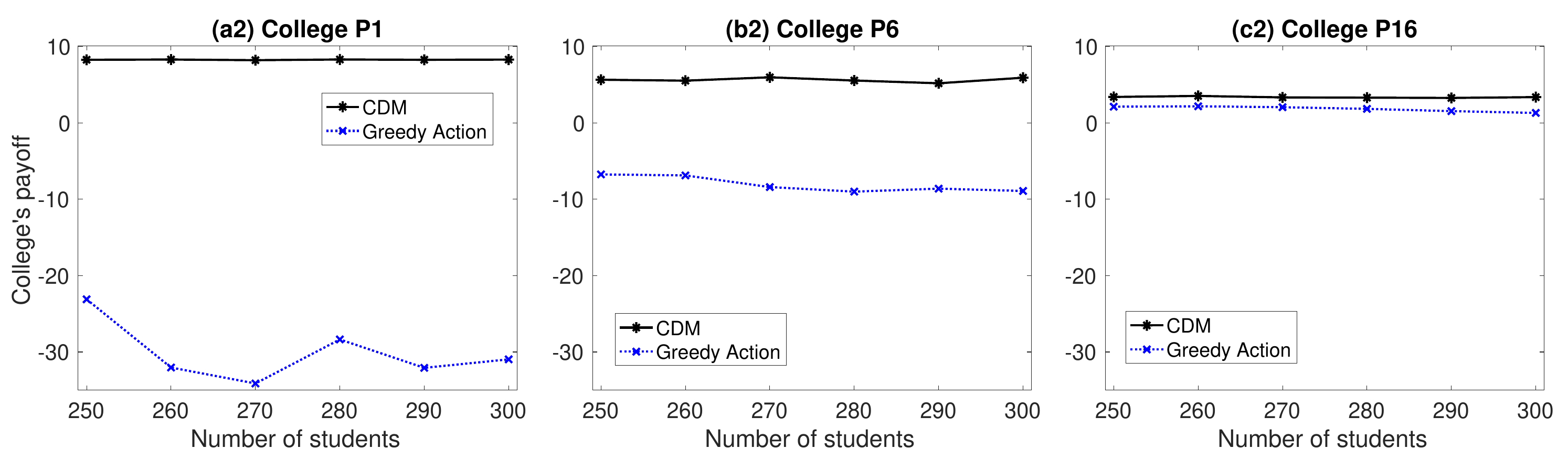}
\caption{Performance of three strategies with varying numbers of students. The results are averaged over $500$ data replications. (a1) and (a2): College $P_1$ from tier 1. (b1) and (b2): College $P_6$ from tier 2. (c1) and (c2): College $P_{16}$ from tier 3.}
\label{fig:eg4_1}
\end{figure}

We study a simulated graduate school admission problem, where each graduate has a limited quota. Suppose there are a total of $m=50$ graduate schools from three tiers of colleges: five top colleges $\{P_1,\ldots,P_5\}$, ten good colleges $\{P_6,\ldots,P_{15}\}$, and $35$ other colleges $\{P_{16},\ldots,P_{50}\}$. Each has the same quota $q=5$ and penalty $\gamma=2.5$. The simulation generates students' preferences with ten different states $\{s_1,\ldots,s_{10}\}\subset[0,1]$. For any state, students' preferences for colleges from the same tier are random. However,  students always prefer top colleges to good colleges, and then the other colleges.
The random preferences depend on the state due to colleges' uncertain reputation and popularity in the current year. 
We consider varying numbers of students  $n=\{250,260,270,280,290,300\}$. 
For each number of students, there are ten students having a score $v_j$ chosen uniformly and independently from $[0.9,1]$ and $100$ students having score $v_j$ drawn uniformly and independently from from $[0.7,0.9)$. The rest of the students have score $v_j$ chosen uniformly and independently from $[0,0.7)$. The fits $e_{ij}$ for all college-student pairs are drawn uniformly from $[0,1]$. The colleges' preferences are determined by students' latent utilities according to Eq.~(\ref{eqn:defofutility}). 
We consider a simple case that students face negligible application costs. 
Since students do not know how colleges evaluate their fits (e.g., personal essays), submitting applications to all colleges is students' dominant strategy.

We compare the colleges' expected payoffs  achieved by three methods: CDM, simple cutoff strategy, and greedy action. The latter two methods are described in Section \ref{sec:agentbetteroff}.  
We implement the CDM by Algorithm \ref{alg:decentralizedcdm}.
The training data are simulated from colleges’ random proposing, where colleges admit random numbers of students according to their latent utilities.
The training data consists of $20$ draws of random proposals under each of the students' preference structures. In total, the sample size $T=200$.
The testing data draws a random state from $\{s_1,\ldots,s_{10}\}$, which gives the corresponding student preferences.  
Figure \ref{fig:eg4_1} reports the averaged payoffs of three colleges $P_1,P_6$ and $P_{16}$ over $500$ data replications. Here colleges $P_1,P_6$ and $P_{16}$ belong to the three different tiers, respectively.
In Figure \ref{fig:eg4_1}(a1)---(a2), all  colleges other than $P_1$ use the CDM with mean calibration, while $P_1$ uses one of the three methods. The same setup applies to Figure \ref{fig:eg4_1}(b1)---(b2) and  Figure \ref{fig:eg4_1}(c1)---(c2). 
It is seen that the CDM gives the largest payoffs for $P_1,P_6$ and $P_{16}$. The CDM performs particularly well for tier 2 and tier 3 colleges  compared to the simple cutoff strategy. 
Moreover, the CDM outperforms the greedy action, especially for tier 1 and 2 colleges. The reason is that the greedy action does not calibrate the state and enrolls too many students.

\section{Discussion}
\label{sec:discussion}
We have presented a statistical framework for decision-making under uncertainty, in settings when preferences are unknown a priori and must be learned from data, and when the output decisions involve the allocation of shared resources. Our approach is decentralized, and thus applicable to agents who may not wish to communicate their preferences or share their decisions. 
In the model, arms have uncertain preferences that depend on the unknown state of the world and the arms' acceptance probabilities also depend on the competition among the agents. 
We construct a minimax optimal estimator for the learning of acceptance probabilities under this model. We propose a specific algorithm, the CDM algorithm, for learning optimal strategies from historical data. The CDM algorithm calibrates the unknown state by perturbing the state and balancing the marginal utility and the marginal penalty for exceeding the capacity. This calibration procedure takes opportunity costs into account. We  show that CDM makes it safe for agents to act straightforwardly on their preferences. The CDM procedure achieves stability under incomplete information, where the formulation of the individual rationality condition models agents' beliefs. 
Moreover, CDM is fair for arms in the sense that arms have no justified envy. 


It is possible to extend the CDM framework to other competition structures under a scarcity of shared resources. 
It is of interest to study multi-stage decentralized matching, which has applications on college admissions with a waiting list. 
However, multi-stage matching involves new challenges. For example, it requires a different model to incorporate the new hierarchical structure arising from the multi-stage decision-making, where arms available at subsequent stages are likely to be worse than those available at the current stage. Hence,  the optimal strategy and economic implications in multi-stage matching will be different from those in single-stage matching, and the analysis would be more technical and tedious. We are currently working on that extension.
Another possible extension is to 
consider algorithmic strategies where agents' preferences exhibit complementarities in decentralized matching. For instance, some firms demand workers that complement one another in terms of their skills and roles. 
Another interesting extension is to consider decentralized markets with indifferent preferences.
For example, many applicants may be indistinguishable for a college. Still, it is necessary to break ties since the college may have insufficient capacity to admit all applicants in the same indifference class.

\acks{The authors thank Robert M. Anderson, two anonymous reviewers, and the Editor for their invaluable feedback on an earlier version of this work.
}

\vskip 0.2in
\bibliography{match}

\begin{thebibliography}{66}
\providecommand{\natexlab}[1]{#1}
\providecommand{\url}[1]{\texttt{#1}}
\expandafter\ifx\csname urlstyle\endcsname\relax
  \providecommand{\doi}[1]{doi: #1}\else
  \providecommand{\doi}{doi: \begingroup \urlstyle{rm}\Url}\fi

\bibitem[Abdulkadiro{\u{g}}lu and S{\"o}nmez(2003)]{Abdulkadiroglu2003}
Atila Abdulkadiro{\u{g}}lu and Tayfun S{\"o}nmez.
\newblock School choice: A mechanism design approach.
\newblock \emph{American Economic Review}, 93\penalty0 (3):\penalty0 729--747,
  2003.

\bibitem[Alcalde and Romero-Medina(2000)]{alcalde2000}
Jos{\'e} Alcalde and Antonio Romero-Medina.
\newblock Simple mechanisms to implement the core of college admissions
  problems.
\newblock \emph{Games and Economic Behavior}, 31\penalty0 (2):\penalty0
  294--302, 2000.

\bibitem[Ashlagi et~al.(2020)Ashlagi, Braverman, Kanoria, and Shi]{ashlagi2019}
Itai Ashlagi, Mark Braverman, Yash Kanoria, and Peng Shi.
\newblock Clearing matching markets efficiently: Informative signals and match
  recommendations.
\newblock \emph{Management Science}, 66\penalty0 (5):\penalty0 2163--2193,
  2020.

\bibitem[Avery and Levin(2010)]{avery2010}
Christopher Avery and Jonathan Levin.
\newblock Early admissions at selective colleges.
\newblock \emph{American Economic Review}, 100\penalty0 (5):\penalty0 2125--56,
  2010.

\bibitem[Avery et~al.(2003)Avery, Fairbanks, and Zeckhauser]{avery2003}
Christopher Avery, Andrew Fairbanks, and Richard~J Zeckhauser.
\newblock \emph{The Early Admissions Game: Joining the Elite}.
\newblock Harvard University Press, Cambridge, MA, 2003.

\bibitem[Azevedo and Leshno(2016)]{azevedo2016}
Eduardo~M Azevedo and Jacob~D Leshno.
\newblock A supply and demand framework for two-sided matching markets.
\newblock \emph{Journal of Political Economy}, 124\penalty0 (5):\penalty0
  1235--1268, 2016.

\bibitem[Balinski and S{\"o}nmez(1999)]{balinski1999}
Michel Balinski and Tayfun S{\"o}nmez.
\newblock A tale of two mechanisms: Student placement.
\newblock \emph{Journal of Economic Theory}, 84\penalty0 (1):\penalty0 73--94,
  1999.

\bibitem[Basu et~al.(2021)Basu, Sankararaman, and Sankararaman]{basu2021beyond}
Soumya Basu, Karthik~Abinav Sankararaman, and Abishek Sankararaman.
\newblock Beyond $\log^2(t)$ regret for decentralized bandits in matching
  markets.
\newblock \emph{arXiv preprint arXiv:2103.07501}, 2021.

\bibitem[Benjamini and Hochberg(1995)]{benjamini1995}
Yoav Benjamini and Yosef Hochberg.
\newblock Controlling the false discovery rate: A practical and powerful
  approach to multiple testing.
\newblock \emph{Journal of the Royal Statistical Society: Series B
  (Methodological)}, 57\penalty0 (1):\penalty0 289--300, 1995.

\bibitem[Bhalgat et~al.(2011)Bhalgat, Goel, and Khanna]{bhalgat2011improved}
Anand Bhalgat, Ashish Goel, and Sanjeev Khanna.
\newblock Improved approximation results for stochastic knapsack problems.
\newblock In \emph{Proceedings of the Twenty-Second Annual ACM-SIAM Symposium
  on Discrete Algorithms}, pages 1647--1665. SIAM, 2011.

\bibitem[Busoniu et~al.(2008)Busoniu, Babuska, and
  De~Schutter]{busoniu2008comprehensive}
Lucian Busoniu, Robert Babuska, and Bart De~Schutter.
\newblock A comprehensive survey of multiagent reinforcement learning.
\newblock \emph{IEEE Transactions on Systems, Man, and Cybernetics, Part C
  (Applications and Reviews)}, 38\penalty0 (2):\penalty0 156--172, 2008.

\bibitem[Cen and Shah(2021)]{cen2021regret}
Sarah~H Cen and Devavrat Shah.
\newblock Regret, stability, and fairness in matching markets with bandit
  learners.
\newblock \emph{arXiv preprint arXiv:2102.06246}, 2021.

\bibitem[Chade and Smith(2006)]{chade2006}
Hector Chade and Lones Smith.
\newblock Simultaneous search.
\newblock \emph{Econometrica}, 74\penalty0 (5):\penalty0 1293--1307, 2006.

\bibitem[Chade et~al.(2014)Chade, Lewis, and Smith]{chade2014}
Hector Chade, Gregory Lewis, and Lones Smith.
\newblock Student portfolios and the college admissions problem.
\newblock \emph{Review of Economic Studies}, 81\penalty0 (3):\penalty0
  971--1002, 2014.

\bibitem[Che and Koh(2016)]{che2016}
Yeon-Koo Che and Youngwoo Koh.
\newblock Decentralized college admissions.
\newblock \emph{Journal of Political Economy}, 124\penalty0 (5):\penalty0
  1295--1338, 2016.

\bibitem[Chiappori and Salani{\'e}(2016)]{chiappori2016}
Pierre-Andr{\'e} Chiappori and Bernard Salani{\'e}.
\newblock The econometrics of matching models.
\newblock \emph{Journal of Economic Literature}, 54\penalty0 (3):\penalty0
  832--61, 2016.

\bibitem[Choo and Siow(2006)]{choo2006}
Eugene Choo and Aloysius Siow.
\newblock Who marries whom and why.
\newblock \emph{Journal of Political Economy}, 114\penalty0 (1):\penalty0
  175--201, 2006.

\bibitem[Cogill et~al.(2006)Cogill, Rotkowitz, Van~Roy, and
  Lall]{cogill2006approximate}
Randy Cogill, Michael Rotkowitz, Benjamin Van~Roy, and Sanjay Lall.
\newblock An approximate dynamic programming approach to decentralized control
  of stochastic systems.
\newblock In \emph{Control of Uncertain Systems: Modelling, Approximation, and
  Design}, pages 243--256. Springer, 2006.

\bibitem[Coles et~al.(2013)Coles, Kushnir, and Niederle]{coles2013preference}
Peter Coles, Alexey Kushnir, and Muriel Niederle.
\newblock Preference signaling in matching markets.
\newblock \emph{American Economic Journal: Microeconomics}, 5\penalty0
  (2):\penalty0 99--134, 2013.

\bibitem[Das and Kamenica(2005)]{das2005two}
Sanmay Das and Emir Kamenica.
\newblock Two-sided bandits and the dating market.
\newblock In \emph{Proceedings of the 19th International Joint Conference on
  Artificial Intelligence}, volume~5, page~19, New York, 2005. AAAI Press.

\bibitem[Dean et~al.(2008)Dean, Goemans, and
  Vondr{\'a}k]{dean2008approximating}
Brian~C Dean, Michel~X Goemans, and Jan Vondr{\'a}k.
\newblock Approximating the stochastic knapsack problem: The benefit of
  adaptivity.
\newblock \emph{Mathematics of Operations Research}, 33\penalty0 (4):\penalty0
  945--964, 2008.

\bibitem[Epple et~al.(2006)Epple, Romano, and Sieg]{epple2006}
Dennis Epple, Richard Romano, and Holger Sieg.
\newblock Admission, tuition, and financial aid policies in the market for
  higher education.
\newblock \emph{Econometrica}, 74\penalty0 (4):\penalty0 885--928, 2006.

\bibitem[Finocchiaro et~al.(2021)Finocchiaro, Maio, Monachou, Patro, Raghavan,
  Stoica, and Tsirtsis]{finocchiaro2021bridging}
Jessie Finocchiaro, Roland Maio, Faidra Monachou, Gourab~K Patro, Manish
  Raghavan, Ana-Andreea Stoica, and Stratis Tsirtsis.
\newblock Bridging machine learning and mechanism design towards algorithmic
  fairness.
\newblock In \emph{Proceedings of the 2021 ACM Conference on Fairness,
  Accountability, and Transparency}, pages 489--503, 2021.

\bibitem[Fisman et~al.(2006)Fisman, Iyengar, Kamenica, and
  Simonson]{fisman2006gender}
Raymond Fisman, Sheena~S Iyengar, Emir Kamenica, and Itamar Simonson.
\newblock Gender differences in mate selection: Evidence from a speed dating
  experiment.
\newblock \emph{The Quarterly Journal of Economics}, 121\penalty0 (2):\penalty0
  673--697, 2006.

\bibitem[Foerster et~al.(2017)Foerster, Nardelli, Farquhar, Afouras, Torr,
  Kohli, and Whiteson]{foerster2017stabilising}
Jakob Foerster, Nantas Nardelli, Gregory Farquhar, Triantafyllos Afouras,
  Philip~HS Torr, Pushmeet Kohli, and Shimon Whiteson.
\newblock Stabilising experience replay for deep multi-agent reinforcement
  learning.
\newblock In \emph{International Conference on Machine Learning}, pages
  1146--1155. PMLR, 2017.

\bibitem[Foerster et~al.(2018)Foerster, Farquhar, Afouras, Nardelli, and
  Whiteson]{foerster2018counterfactual}
Jakob Foerster, Gregory Farquhar, Triantafyllos Afouras, Nantas Nardelli, and
  Shimon Whiteson.
\newblock Counterfactual multi-agent policy gradients.
\newblock In \emph{Proceedings of the AAAI Conference on Artificial
  Intelligence}, 2018.

\bibitem[Fu(2014)]{fu2014}
Chao Fu.
\newblock Equilibrium tuition, applications, admissions, and enrollment in the
  college market.
\newblock \emph{Journal of Political Economy}, 122\penalty0 (2):\penalty0
  225--281, 2014.

\bibitem[Gale and Shapley(1962)]{gale1962}
David Gale and Lloyd~S Shapley.
\newblock College admissions and the stability of marriage.
\newblock \emph{The American Mathematical Monthly}, 69\penalty0 (1):\penalty0
  9--15, 1962.

\bibitem[Guo et~al.(2017)Guo, Pleiss, Sun, and Weinberger]{guo2017calibration}
Chuan Guo, Geoff Pleiss, Yu~Sun, and Kilian~Q Weinberger.
\newblock On calibration of modern neural networks.
\newblock In \emph{International Conference on Machine Learning}, pages
  1321--1330. PMLR, 2017.

\bibitem[Haeringer and Wooders(2011)]{haeringer2011}
Guillaume Haeringer and Myrna Wooders.
\newblock Decentralized job matching.
\newblock \emph{International Journal of Game Theory}, 40\penalty0
  (1):\penalty0 1--28, 2011.

\bibitem[Hafalir et~al.(2018)Hafalir, Hakimov, K{\"u}bler, and
  Kurino]{hafalir2018}
Isa~E Hafalir, Rustamdjan Hakimov, Dorothea K{\"u}bler, and Morimitsu Kurino.
\newblock College admissions with entrance exams: Centralized versus
  decentralized.
\newblock \emph{Journal of Economic Theory}, 176:\penalty0 886--934, 2018.

\bibitem[Hastie et~al.(2009)Hastie, Tibshirani, and
  Friedman]{hastie2009elements}
Trevor Hastie, Robert Tibshirani, and Jerome Friedman.
\newblock \emph{The Elements of Statistical Learning: Data Mining, Inference,
  and Prediction}.
\newblock Springer Science \& Business Media, New York, 2009.

\bibitem[Hu and Storkey(2014)]{hu2014multi}
Jinli Hu and Amos Storkey.
\newblock Multi-period trading prediction markets with connections to machine
  learning.
\newblock In \emph{International Conference on Machine Learning}, pages
  1773--1781. PMLR, 2014.

\bibitem[Karlin and Peres(2017)]{karlin2017game}
Anna~R Karlin and Yuval Peres.
\newblock \emph{Game theory, alive}, volume 101.
\newblock American Mathematical Soc., 2017.

\bibitem[Kennedy and O'Hagan(2001)]{kennedy2001bayesian}
Marc~C Kennedy and Anthony O'Hagan.
\newblock Bayesian calibration of computer models.
\newblock \emph{Journal of the Royal Statistical Society: Series B (Statistical
  Methodology)}, 63\penalty0 (3):\penalty0 425--464, 2001.

\bibitem[Kimeldorf and Wahba(1971)]{kimeldorf1971}
George Kimeldorf and Grace Wahba.
\newblock Some results on tchebycheffian spline functions.
\newblock \emph{Journal of Mathematical Analysis and Applications}, 33\penalty0
  (1):\penalty0 82--95, 1971.

\bibitem[Knuth(1997)]{knuth1997}
Donald~Ervin Knuth.
\newblock \emph{Stable Marriage and Its Relation to Other Combinatorial
  Problems: An Introduction to the Mathematical Analysis of Algorithms},
  volume~10.
\newblock CRM Proceedings and Lecture Notes. American Mathematical Society.,
  1997.

\bibitem[Lee(2016)]{lee2016}
SangMok Lee.
\newblock Incentive compatibility of large centralized matching markets.
\newblock \emph{The Review of Economic Studies}, 84\penalty0 (1):\penalty0
  444--463, 2016.

\bibitem[Leibo et~al.(2017)Leibo, Zambaldi, Lanctot, Marecki, and
  Graepel]{leibo2017multi}
Joel~Z Leibo, Vinicius Zambaldi, Marc Lanctot, Janusz Marecki, and Thore
  Graepel.
\newblock Multi-agent reinforcement learning in sequential social dilemmas.
\newblock In \emph{Proceedings of the 16th Conference on Autonomous Agents and
  MultiAgent Systems}, pages 464--473, 2017.

\bibitem[Lin(1998)]{lin1998}
Yi~Lin.
\newblock \emph{Tensor Product Space ANOVA Models in Multivariate Function
  Estimation}.
\newblock PhD thesis, University of Pennsylvania, Philadelphia, Pennsylvania,
  1998.

\bibitem[Liu et~al.(2020{\natexlab{a}})Liu, Mania, and Jordan]{liu2019}
Lydia~T Liu, Horia Mania, and Michael~I Jordan.
\newblock Competing bandits in matching markets.
\newblock In \emph{International Conference on Artificial Intelligence and
  Statistics}, pages 1618--1628. PMLR, 2020{\natexlab{a}}.

\bibitem[Liu et~al.(2020{\natexlab{b}})Liu, Ruan, Mania, and
  Jordan]{liu2020bandit}
Lydia~T Liu, Feng Ruan, Horia Mania, and Michael~I Jordan.
\newblock Bandit learning in decentralized matching markets.
\newblock \emph{arXiv preprint arXiv:2012.07348}, 2020{\natexlab{b}}.

\bibitem[Liu et~al.(2014)Liu, Mailath, Postlewaite, and Samuelson]{liu2014}
Qingmin Liu, George~J Mailath, Andrew Postlewaite, and Larry Samuelson.
\newblock Stable matching with incomplete information.
\newblock \emph{Econometrica}, 82\penalty0 (2):\penalty0 541--587, 2014.

\bibitem[Lowe et~al.(2017)Lowe, Wu, Tamar, Harb, Abbeel, and
  Mordatch]{lowe2017multi}
Ryan Lowe, Yi~Wu, Aviv Tamar, Jean Harb, Pieter Abbeel, and Igor Mordatch.
\newblock Multi-agent actor-critic for mixed cooperative-competitive
  environments.
\newblock In \emph{Advances in Neural Information Processing Systems},
  volume~30. PMLR, 2017.

\bibitem[Mehta(2013)]{mehta2013online}
Aranyak Mehta.
\newblock Online matching and ad allocation.
\newblock \emph{Foundations and Trends in Theoretical Computer Science},
  8\penalty0 (4):\penalty0 265--368, 2013.

\bibitem[Menzel(2015)]{menzel2015}
Konrad Menzel.
\newblock Large matching markets as two-sided demand systems.
\newblock \emph{Econometrica}, 83\penalty0 (3):\penalty0 897--941, 2015.

\bibitem[Ooi and Wornell(1996)]{ooi1996decentralized}
James~M Ooi and Gregory~W Wornell.
\newblock Decentralized control of a multiple access broadcast channel:
  Performance bounds.
\newblock In \emph{Proceedings of 35th IEEE Conference on Decision and
  Control}, volume~1, pages 293--298. IEEE, 1996.

\bibitem[Rahimi and Recht(2008)]{rahimi2008}
Ali Rahimi and Benjamin Recht.
\newblock Random features for large-scale kernel machines.
\newblock In \emph{Advances in Neural Information Processing Systems}, pages
  1177--1184. PMLR, 2008.

\bibitem[Ross and Tsang(1989)]{ross1989stochastic}
Keith~W Ross and Danny~HK Tsang.
\newblock The stochastic knapsack problem.
\newblock \emph{IEEE Transactions on Communications}, 37\penalty0 (7):\penalty0
  740--747, 1989.

\bibitem[Roth(1982)]{roth1982}
Alvin~E Roth.
\newblock The economics of matching: Stability and incentives.
\newblock \emph{Mathematics of Operations Research}, 7\penalty0 (4):\penalty0
  617--628, 1982.

\bibitem[Roth(1984)]{roth1984}
Alvin~E Roth.
\newblock The evolution of the labor market for medical interns and residents:
  A case study in game theory.
\newblock \emph{Journal of Political Economy}, 92\penalty0 (6):\penalty0
  991--1016, 1984.

\bibitem[Roth(1990)]{roth1990}
Alvin~E Roth.
\newblock New physicians: a natural experiment in market organization.
\newblock \emph{Science}, 250\penalty0 (4987):\penalty0 1524--1528, 1990.

\bibitem[Roth(2008)]{roth2008}
Alvin~E Roth.
\newblock Deferred acceptance algorithms: History, theory, practice, and open
  questions.
\newblock \emph{International Journal of Game Theory}, 36\penalty0
  (3-4):\penalty0 537--569, 2008.

\bibitem[Roth and Sotomayor(1990)]{rothsatomayor1990}
Alvin~E Roth and Marilda Sotomayor.
\newblock \emph{Two-Sided Matching: A Study in Game-Theoretic Modeling and
  Analysis}, volume~18.
\newblock Econometric Society Monographs, Cambridge University Press,
  Cambridge, 1990.

\bibitem[Roth and Xing(1997)]{roth1997}
Alvin~E Roth and Xiaolin Xing.
\newblock Turnaround time and bottlenecks in market clearing: Decentralized
  matching in the market for clinical psychologists.
\newblock \emph{Journal of Political Economy}, 105\penalty0 (2):\penalty0
  284--329, 1997.

\bibitem[Roth et~al.(1993)Roth, Rothblum, and Vande~Vate]{roth1993stable}
Alvin~E Roth, Uriel~G Rothblum, and John~H Vande~Vate.
\newblock Stable matchings, optimal assignments, and linear programming.
\newblock \emph{Mathematics of Operations Research}, 18\penalty0 (4):\penalty0
  803--828, 1993.

\bibitem[Rudi and Rosasco(2017)]{rudi2017}
Alessandro Rudi and Lorenzo Rosasco.
\newblock Generalization properties of learning with random features.
\newblock In \emph{Advances in Neural Information Processing Systems}, pages
  3215--3225. PMLR, 2017.

\bibitem[Sankararaman et~al.(2021)Sankararaman, Basu, and
  Sankararaman]{sankararaman2021dominate}
Abishek Sankararaman, Soumya Basu, and Karthik~Abinav Sankararaman.
\newblock Dominate or delete: Decentralized competing bandits in serial
  dictatorship.
\newblock In \emph{International Conference on Artificial Intelligence and
  Statistics}, pages 1252--1260. PMLR, 2021.

\bibitem[Savage(1972)]{savage1972}
Leonard~J Savage.
\newblock \emph{The Foundations of Statistics}.
\newblock Dover Publications, Inc., New York, 1972.

\bibitem[Seuken and Zilberstein(2008)]{seuken2008formal}
Sven Seuken and Shlomo Zilberstein.
\newblock Formal models and algorithms for decentralized decision making under
  uncertainty.
\newblock \emph{Autonomous Agents and Multi-Agent Systems}, 17\penalty0
  (2):\penalty0 190--250, 2008.

\bibitem[Shapley and Shubik(1971)]{shapley1971assignment}
Lloyd~S Shapley and Martin Shubik.
\newblock The assignment game i: The core.
\newblock \emph{International Journal of Game Theory}, 1\penalty0 (1):\penalty0
  111--130, 1971.

\bibitem[Silverman(1986)]{silverman1986}
Bernard~W Silverman.
\newblock \emph{Density Estimation for Statistics and Data Analysis},
  volume~26.
\newblock Chapman and Hall, London, 1986.

\bibitem[Tampuu et~al.(2017)Tampuu, Matiisen, Kodelja, Kuzovkin, Korjus, Aru,
  Aru, and Vicente]{tampuu2017multiagent}
Ardi Tampuu, Tambet Matiisen, Dorian Kodelja, Ilya Kuzovkin, Kristjan Korjus,
  Juhan Aru, Jaan Aru, and Raul Vicente.
\newblock Multiagent cooperation and competition with deep reinforcement
  learning.
\newblock \emph{PloS One}, 12\penalty0 (4):\penalty0 e0172395, 2017.

\bibitem[Tesauro(2003)]{tesauro2003extending}
Gerald Tesauro.
\newblock Extending q-learning to general adaptive multi-agent systems.
\newblock In \emph{Advances in Neural Information Processing Systems}, page
  None. Citeseer, 2003.

\bibitem[Wahba(1999)]{wahba1999}
Grace Wahba.
\newblock Support vector machines, reproducing kernel hilbert spaces and the
  randomized gacv.
\newblock \emph{Advances in Kernel Methods-Support Vector Learning}, pages
  69--87, 1999.

\bibitem[Wahba et~al.(1995)Wahba, Wang, Gu, Klein, and Klein]{wahbawang1995}
Grace Wahba, Yuedong Wang, Chong Gu, Ronald Klein, and Barbaras Klein.
\newblock Smoothing spline anova for exponential families, with application to
  the wisconsin epidemiological study of diabetic retinopathy.
\newblock \emph{Annals of Statistics}, 23\penalty0 (6):\penalty0 1865--1895,
  1995.

\end{thebibliography}

\newpage

\appendix


\section{Additional Example, Algorithm and Simulation}

\subsection{Three-Agent Model}

We give an additional example of the three-agent model and show the explicit form of the acceptance probability $\pi_i(s_i,v_j),i=1,2,3$ for this model.

\begin{example}
\label{eg:three-agent}
Consider a three-agent model with agents $P_1,P_2$ and $P_3$. Suppose that $P_1$ pulls an arm $A_j\in\mathcal A$.
Denote by $\mu_i(s_1)$  the probability that an arm prefers $P_i$ under the state $s_1$, $i=1,2,3$, where $\mu_1(s_1)+\mu_2(s_1)+\mu_3(s_1)=1$. 
Let $\sigma_i$ be $P_i$'s strategy, which is defined by $\sigma_i(v_j,e_{2j})=\mathbf 1(P_i \text{ pulls } A_j \text{ with attributes } \{v_j,e_{ij}\})$, $i=2,3$.
Note that  $A_j$ would be pulled  by  $P_1$ with probability $p_1\equiv [1-\sigma_{2}(v_j,e_{2j})][1-\sigma_{3}(v_j,e_{3j})]$, and  pulled by both $P_1$ and $P_2$ with probability $p_2\equiv \sigma_{2}(v_j, e_{2j})[1-\sigma_{3}(v_j,e_{3j})]$, and  pulled by both $P_1$ and $P_3$ with probability $p_3\equiv[1-\sigma_{2}(v_j,e_{2j})]\sigma_{3}(v_j, e_{3j})$,  and  pulled by all of $P_1,P_2$ and $P_3$ with probability $p_4\equiv\sigma_{2}(v_j, e_{2j})\sigma_{3}(v_j,e_{3j})$. Hence $A_j$ would accept  $P_1$ with probability 
\begin{equation*}
    p_1+\frac{\mu_1(s_1)}{\mu_1(s_1)+\mu_2(s_1)}p_2+\frac{\mu_1(s_1)}{\mu_1(s_1)+\mu_3(s_1)}p_3 + \mu_1(s_1)p_4.
\end{equation*}
Since $e_{ij}$ are independent and unknown to $A_j$, and $\sigma_i$ depends on $e_{ij}$, $i=2,3$, the expected probability that $A_j$ accepts $P_1$ is as follows,
\begin{equation*}
\begin{aligned}
    \pi_1(s_1,v_j) & = \E\left[p_1+\frac{\mu_1(s_1)}{\mu_1(s_1)+\mu_2(s_1)}p_2+\frac{\mu_1(s_1)}{\mu_1(s_1)+\mu_3(s_1)}p_3 + \mu_1(s_1)p_4\right],\\
    & = \left[1-\E_{e_{2j}}[\sigma_{2}(v_j,e_{2j})]\right]
    \left[1-\E_{e_{3j}}[\sigma_{3}(v_j,e_{3j})]\right]\\
    &\quad\quad + \frac{\mu_1(s_1)}{\mu_1(s_1)+\mu_2(s_1)}\E_{e_{2j}}[\sigma_{2}(v_j,e_{2j})]
    \left[1-\E_{e_{3j}}[\sigma_{3}(v_j,e_{3j})]\right]\\
    &\quad\quad + \frac{\mu_1(s_1)}{\mu_1(s_1)+\mu_3(s_1)}\left[1-\E_{e_{2j}}[\sigma_{2}(v_j,e_{2j})]\right]\E_{e_{3j}}[\sigma_{3}(v_j,e_{3j})]\\
    &\quad\quad + \mu_1(s_1)\E_{e_{2j}}[\sigma_{2}(v_j,e_{2j})]
    \E_{e_{3j}}[\sigma_{3}(v_j,e_{3j})].
\end{aligned}
\end{equation*} 
Moreover, we can derive an explicit form of $\pi_i(s_i,v_j),i=2,3$ by following a similar argument as above.
\end{example}
\noindent
Compared to Example \ref{eg:two-agent} in Section \ref{sec:stateofnature}, the acceptance probability $\pi_1(s_1,v_j)$ in the three-agent model is different from that in the two-agent model, unless $\mu_3(s_1) = 0 $. 
Hence the acceptance probability would  change  with the  number of agents competing for the arm.

\subsection{Algorithm of CDM under the Maximin Calibration}
\label{sec:CDMwithmaximin}

We provide the algorithm for the calibrated decentralized matching (CDM) procedure under the maximin calibration in Theorem \ref{cor:introofvalueV}. The CDM under the maximin calibration consists of five main steps. Step 1 is to estimate the acceptance probability based on the historical data, where we obtain the estimate $\widehat{\pi}_i(s_i,v) =[1+\exp(-\widehat{f}_i(s_i,v))]^{-1}$ by Eq.~(\ref{eqn:estpiisvj}). 
Step 2 is to construct the cutoff $\widehat{e}_i(s_i,v)$ for a given state $s_i$ according to Eq.~(\ref{eqn:choiceofeim}).
Step 3 is to calculate the arm set $\widehat{\mathcal B}_i(s_i)$ for the given  $s_i$ according to Eq.~\eqref{eqn:armsethatbisi}.
In addition, Steps 2 to 3 are carried out repeatedly over a number of states. 
Step 4 is to calibrate the state for maximizing the agent’s worst-case expected payoff, which calibration is referred to as the maximin calibration.
Given the estimator $\widehat{\pi}_i(s_i,v)$ and the arm set $\widehat{\mathcal B}_i(s_i)$, the calibration in Theorem \ref{cor:introofvalueV} can be written the solution to the following equation,
\begin{equation}
\label{eqn:implementmaximincalibration}
\begin{aligned}
& \sum_{j\in\widehat{\mathcal B}_i(s_i)}(v_j+e_{ij})\cdot\left[\widehat{\pi}_i(1,v_j)-\widehat{\pi}_i(0,v_j)\right] - \gamma_i\sum_{j\in\widehat{\mathcal B}_i(s_i)}\widehat{\pi}_i(1,v_j) + \gamma_iq_i
\\
&\quad\quad\quad= \sum_{j\in\widehat{\mathcal B}_i(1)}(v_j+e_{ij})\cdot\widehat{\pi}_i(1,v_j) - \sum_{j\in\widehat{\mathcal B}_i(0)}(v_j+e_{ij})\cdot\widehat{\pi}_i(0,v_j).
\end{aligned}
\end{equation}
Finally, Step 5 is to produce the final arm set under the calibrated state. 
We summarize the CDM under the maximin calibration in Algorithm \ref{alg:decentralizedcdmmaximin}.

\begin{algorithm}
\caption{ \normalsize{Calibrated decentralized matching (CDM) under the maximin calibration}}\label{alg:decentralizedcdmmaximin}
\begin{algorithmic}[1]
\State  \normalsize{\textbf{Input:} Historical data $\mathcal D=\{(s_i^t,v_j^t,e_{ij}^t,y_{ij}^t):i\in[m]; j\in \mathcal B_{i}^t; t\in[T]\}$; New arm set $\mathcal A^{T+1}$  with attributes $\{(v_j,e_{ij}):i\in[m];j\in[n]\}$ at time $T+1$; Penalty $\{\gamma_i:i\in[m]\}$ for exceeding the quota.}
\State \textbf{for} $i=1,2,\ldots, m$ \textbf{do}
\State \quad \textbf{Step 1:} Predict the acceptance $\widehat{\pi}_i(s_i,v_j)$ by Eq.~(\ref{eqn:estpiisvj});
\State \quad  \textbf{for} $s_i$ in a linearly spaced vector in $[0,1]$  \textbf{do}
\State \quad\quad \textbf{Step 2:} Construct the cutoff $\widehat{e}_i(s_i,v)$ by Eq.~(\ref{eqn:choiceofeim});
\State \quad\quad \textbf{Step 3:} Calculate the arm set $\widehat{\mathcal B}_i(s_i)$ by Eq.~\eqref{eqn:armsethatbisi};
\State \quad \textbf{end for}
\State \quad \textbf{Step 4:} Calibrate the state $s_i$ such that the difference between LHS and RHS of Eq.~\eqref{eqn:implementmaximincalibration} is below a pre-specified tolerance level;
\State \quad \textbf{Step 5:} Calculate the arm set $\widehat{\mathcal B}_i(s_i)$ by Eq.~\eqref{eqn:armsethatbisi} with the calibrated state $s_i$;
\State \textbf{end for}
\State \textbf{Output:} The arm sets $\widehat{\mathcal B}_1(s_1), \widehat{\mathcal B}_2(s_2),\ldots, \widehat{\mathcal B}_m(s_m)$ for agents.
\end{algorithmic}
\end{algorithm}


\subsection{Sensitivity Analysis}
\label{sec:sensitivity}

\begin{figure}[!ht]
\centering
\includegraphics[width=\textwidth, height=2.25in]{./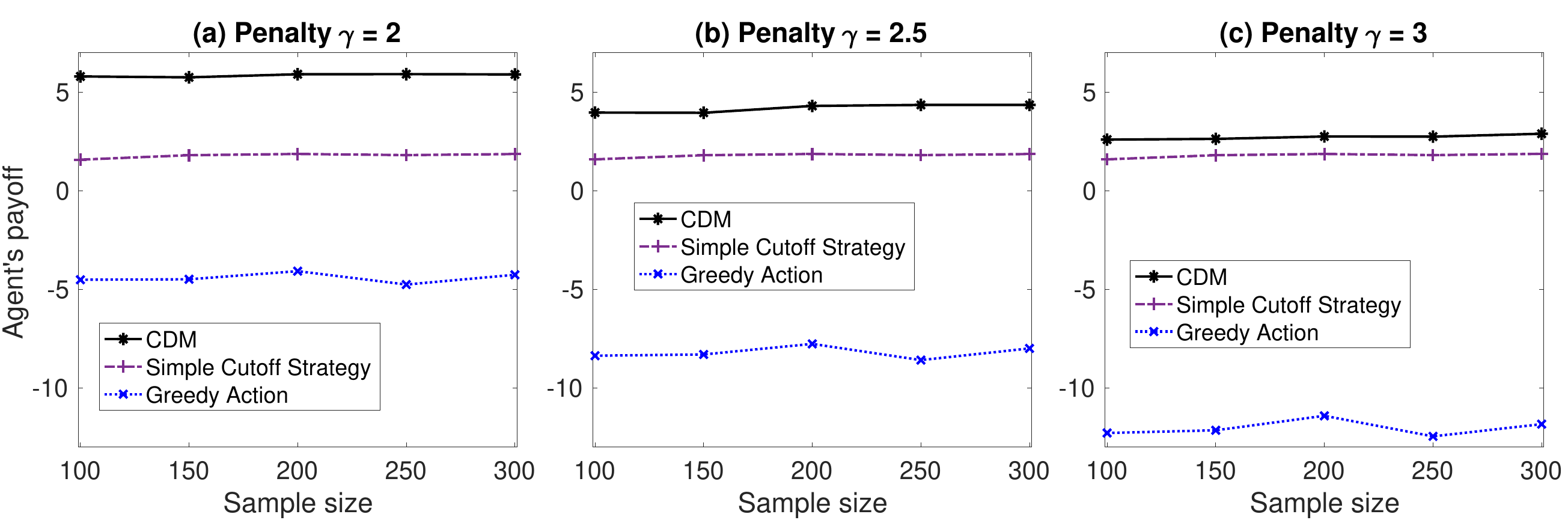}
\caption{Performance of three strategies with varying numbers of sample size. The results are averaged over $500$ data replications. Penalty levels (a) $\gamma=2$, (b) $\gamma=2.5$, (c) $\gamma=3$.}
\label{fig:sensitivity_1}
\end{figure}

\begin{figure}[!ht]
\centering
\includegraphics[width=\textwidth, height=2.25in]{./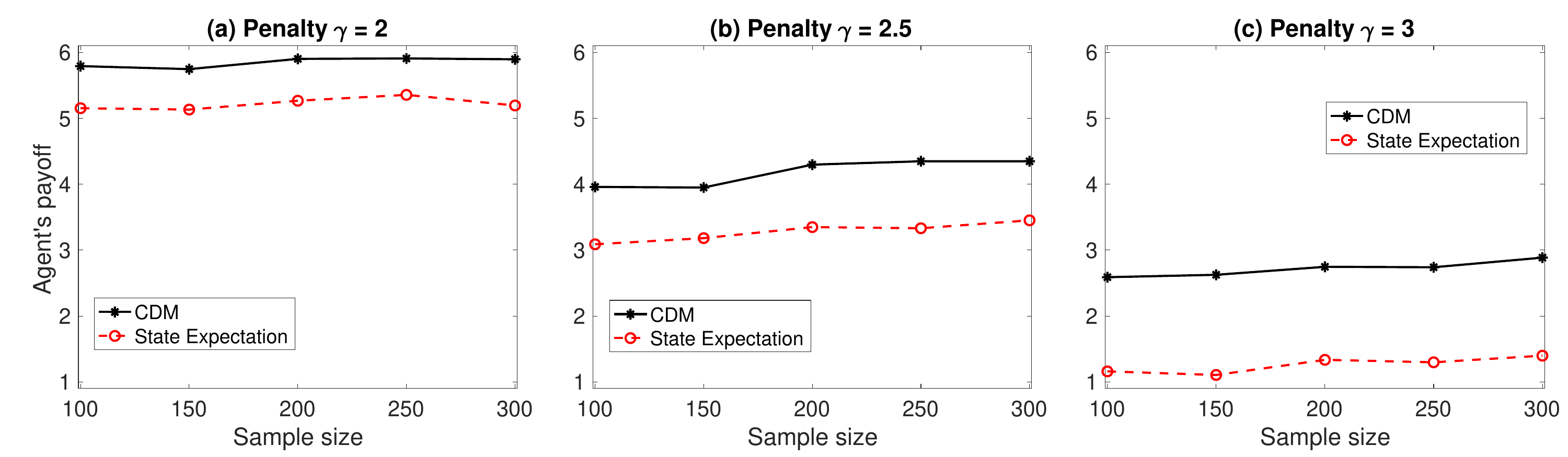}
\caption{Performance of two calibration methods with varying numbers of sample size. The results are averaged over $500$ data replications. Penalty levels (a) $\gamma=2$, (b) $\gamma=2.5$, (c) $\gamma=3$.}
\label{fig:sensitivity_2}
\end{figure}

We next carry out a sensitivity analysis of the CDM with respect to the sample size of training data. We adopt the same simulation setup as in Section \ref{sec:agentbetteroff} with $m=10$ agents and $n=150$ arms. 
Each arm has a score $v_j$ and fits $e_{ij}$ drawn uniformly from $[0,1]$.
The agents' preferences are determined by arms' latent utilities as in Eq.~(\ref{eqn:defofutility}).
Each agent has the same quota $q=5$ and the same penalty $\gamma$ chosen from $\{2,2.5,3\}$. 
The simulation generates random arm preferences with $10$ different states from $\{s_1,\ldots,s_{10}\}\subset [0,1]$. The training data are simulated by having agents pull random numbers of arms according to their latent utilities. 
We vary the sample size of training data $T=\{100,150,200,250,300\}$, where 
the training data consists of $T/10$ rounds of random proposals under each of the arms' preference structures. 
The testing data draws a random state from $\{s_1,\ldots,s_{10}\}$ and generates the corresponding arms' preferences. 
We compare the CDM with alternative methods: (i) the \emph{simple cutoff strategy} where the agent chooses $q$ best arms according to the latent utility; (ii) the \emph{greedy action} where the agent chooses arms with maximum expected  utilities and a total expected acceptance up to $q$; (iii) the \emph{state expectation} that calibrates the unknown state using the naive mean estimate of states.

Figures \ref{fig:sensitivity_1} and \ref{fig:sensitivity_2} report the agent $P_1$'s averaged payoffs with the varying sample sizes $T$ based on $500$ data replications. 
Here, all  agents other than $P_1$ use the CDM with mean calibration in Algorithm \ref{alg:decentralizedcdm}. On the other hand,  $P_1$ uses one of the four methods: CDM, simple cutoff strategy, greedy action, and state expectation.
It is seen that the CDM consistently outperforms alternative methods.
Moreover, the CDM performs relatively robust with respect to the sample size $T$.


\section{Technical Proofs}
\subsection{ANOVA Decomposition for General Utility Function}
\label{sec:anovautility}

Suppose that each arm $A_j\in\mathcal A$ is described by $x_j$ and  $(\epsilon_{1j}\trans,\epsilon_{2j}\trans,\ldots,\epsilon_{mj}\trans)\trans$, where $x_j$ is a multidimensional vector available to all agents and $\epsilon_{ij}$ is a multidimensional vector that is only available to agent $P_i$. 
In college admissions,  $x_j$ can be student $A_j$'s high school record and test score on a nationwide exam such as SAT/ACT, and $\epsilon_{ij}$ represents student $A_j$'s college-specific essays or test scores. 

\begin{proposition}
For any agent-specific latent utility function $U_i$, the ANOVA decomposition allows the separable form
\begin{equation*}
U_i(A_j) = v_j + e_{ij}, \quad\forall  i\in[m]\text{ and }j\in[n].
\end{equation*}
Here, $v_j\in\R$ is a function of $x_j$ and is common to all agents. The agent-specific $e_{ij}\in\R$  is a function of $x_j$ and $\epsilon_{ij}$ and is considered only by agent $P_i$. Thus, the separable structure of the utility function in Eq.~(\ref{eqn:defofutility}) can be assumed without loss of generality.
\end{proposition}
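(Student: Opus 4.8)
The plan is to obtain the separable form through a functional ANOVA decomposition of each utility $U_i$, viewed as a scalar-valued map of the commonly observed attribute $x_j$ and the idiosyncratic attribute $\epsilon_{ij}$. First I would fix a reference product probability measure $d\mu(x)\,d\nu(\epsilon)$ on the attribute space; the product structure is natural here because $\epsilon_{ij}$ is agent-specific and independent of the commonly observed $x_j$, and independent across agents. I would assume each $U_i$ is square-integrable with respect to this measure, so that the $L^2$ ANOVA expansion is well defined.

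Next I would write the two-factor functional ANOVA decomposition $U_i(x,\epsilon) = \bar{U}_i + A_i(x) + B_i(\epsilon) + C_i(x,\epsilon)$, where $\bar{U}_i = \int U_i\,d\mu\,d\nu$ is the grand mean, $A_i(x) = \int U_i(x,\epsilon)\,d\nu(\epsilon) - \bar{U}_i$ is the main effect of the common attribute, $B_i$ is the main effect of the idiosyncratic attribute, and $C_i$ is the interaction, subject to the usual side conditions that $A_i$, $B_i$, and $C_i$ integrate to zero against the reference measure in each of their own arguments. Grouping the two terms that depend on $x$ alone gives $V_i(x) := \bar{U}_i + A_i(x) = \int U_i(x,\epsilon)\,d\nu(\epsilon)$, a scalar-valued function of the common attribute; evaluated at arm $A_j$ it produces a scalar $V_i(x_j)$. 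The remaining terms $B_i + C_i$ depend on $(x,\epsilon)$ and are the natural candidate for the fit.

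The substantive point, and the step I expect to be the main obstacle, is that the decomposition performed agent by agent yields an $x$-main-effect $V_i$ that still carries the index $i$, whereas the model requires a single score $v_j$ that is common to all agents (the competition term $\pi_i(s_i,v_j)$ feeds on the same scalar $v_j$ for every agent). I would resolve this by exploiting the freedom in the split: since the statement permits $e_{ij}$ to be a function of \emph{both} $x_j$ and $\epsilon_{ij}$, I fix one designated common function of the commonly observed attribute, $v_j := \phi(x_j)$, the canonical choice being a shared, pre-specified score such as the normalized nationwide test score in the running example, and I \emph{absorb} the agent-specific discrepancy $V_i(x_j) - v_j$ into the fit. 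Concretely I would set $e_{ij} := U_i(x_j,\epsilon_{ij}) - v_j$, so that $U_i(A_j) = v_j + e_{ij}$ holds by construction, $v_j$ is common across agents and a function of $x_j$ only, and $e_{ij}$ is an agent-specific function of $(x_j,\epsilon_{ij})$.

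Finally I would carry out the routine bookkeeping: check that $e_{ij}$ inherits square-integrability and, when an application calls for it, rescale $v_j$ and $e_{ij}$ by a common affine map so that they land in $[0,1]$ (or $e_{ij}\in[0,\bar{e}]$), which leaves the induced preference ordering unchanged. This establishes that the separable structure in Eq.~(\ref{eqn:defofutility}) can be assumed without loss of generality.
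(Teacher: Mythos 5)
Your proof is correct, and it reaches the same conclusion, but it resolves the central difficulty differently from the paper. You correctly identify the real obstacle: a per-agent functional ANOVA in $(x,\epsilon)$ produces an $x$-main-effect $V_i(x)=\E_{\epsilon}[U_i(x,\epsilon)]$ that still carries the agent index, whereas the model needs a single $v_j$ shared by all agents. You then remove the index by \emph{fiat}: you designate an exogenous common function $\phi(x_j)$ as the score and absorb the discrepancy $V_i(x_j)-\phi(x_j)$ into the fit. The paper instead removes the index by a second averaging, this time over the agents themselves: it sets $v_j=\frac{1}{m}\sum_{i=1}^m\E_{\epsilon_{ij}}[U_i(x_j,\epsilon_{ij})]$ and splits the fit as $e_{ij}=e^{\dagger}_{ij}+e^{\ddagger}_{ij}$ with $e^{\dagger}_{ij}=\E_{\epsilon_{ij}}[U_i]-v_j$ (the agent's adjustment, summing to zero over $i$) and $e^{\ddagger}_{ij}=U_i-\E_{\epsilon_{ij}}[U_i]$ (the idiosyncratic part, mean zero over $\epsilon_{ij}$). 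Both constructions prove the proposition as stated, since the identity $U_i=v_j+e_{ij}$ holds by telescoping either way. What the paper's choice buys is a \emph{canonical} score: $v_j$ is genuinely the systematic, cross-agent component, and the monotonicity the paper later invokes (if each $U_i$ is strictly increasing in $x_j$, then so is $v_j$, and $e_{ij}$ is nondecreasing in $\epsilon_{ij}$) follows automatically, whereas with an arbitrary $\phi$ these properties must be imposed on $\phi$ separately. What your choice buys is a closer match to applications where the score is a pre-specified observable (a normalized test score); the cost is that your construction makes the ANOVA machinery largely decorative, since the final step is just the subtraction $e_{ij}:=U_i(x_j,\epsilon_{ij})-\phi(x_j)$, and the resulting $v_j$ need not coincide with the quantity the acceptance probability $\pi_i(s_i,v_j)$ is meant to depend on.
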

\begin{proof}
Denote the utility function $U_i(A_j) \equiv U_i(x_j,\epsilon_{ij})\in\R$.
By the analysis of variance (ANOVA)  decomposition,  we have that for $P_i\in\mathcal P, A_j\in\mathcal A$, 
\begin{equation}
\label{eqn:anovautility}
\begin{aligned}
& v_j \equiv \frac{1}{m}\sum_{i=1}^m\E_{\epsilon_{ij}}[U_i(x_j,\epsilon_{ij})]\quad \text{ and }\\
& e^\dagger_{ij} \equiv \E_{\epsilon_{ij}}[U_i(x_j,\epsilon_{ij})] -  \frac{1}{m}\sum_{i=1}^m\E_{\epsilon_{ij}}[U_i(x_j,\epsilon_{ij})], \quad e^\ddagger_{ij} \equiv U_i(x_j, \epsilon_{ij}) - \E_{\epsilon_{ij}}[U_i(x_j, \epsilon_{ij})].
\end{aligned}
\end{equation}
Here, $v_j$ represents the average utility of  $x_j$ and is common to all agents.
The $e_{ij}^{\dagger}$ is agent $P_i$'s adjustment for the utility of  $x_j$. The $e_{ij}^{\ddagger}$ is the
utility of  $\epsilon_{ij}$ received by agent $P_i$. Thus,  $e_{ij}^{\dagger}$ and $e_{ij}^{\ddagger}$ are agent-specific  and they are only known to agent $P_i$.
Moreover, letting  $e_{ij}\equiv e_{ij}^{\dagger} + e_{ij}^{\ddagger}$, then Eq.~(\ref{eqn:anovautility}) implies the desired result that  $U_i(x_j,\epsilon_{ij}) = v_j + e_{ij}$.
We refer to Figure \ref{fig:scorefitutility} for an illustration on the ANOVA decomposition.
\begin{figure}[!ht]
    \centering
    \includegraphics[width=0.7\textwidth]{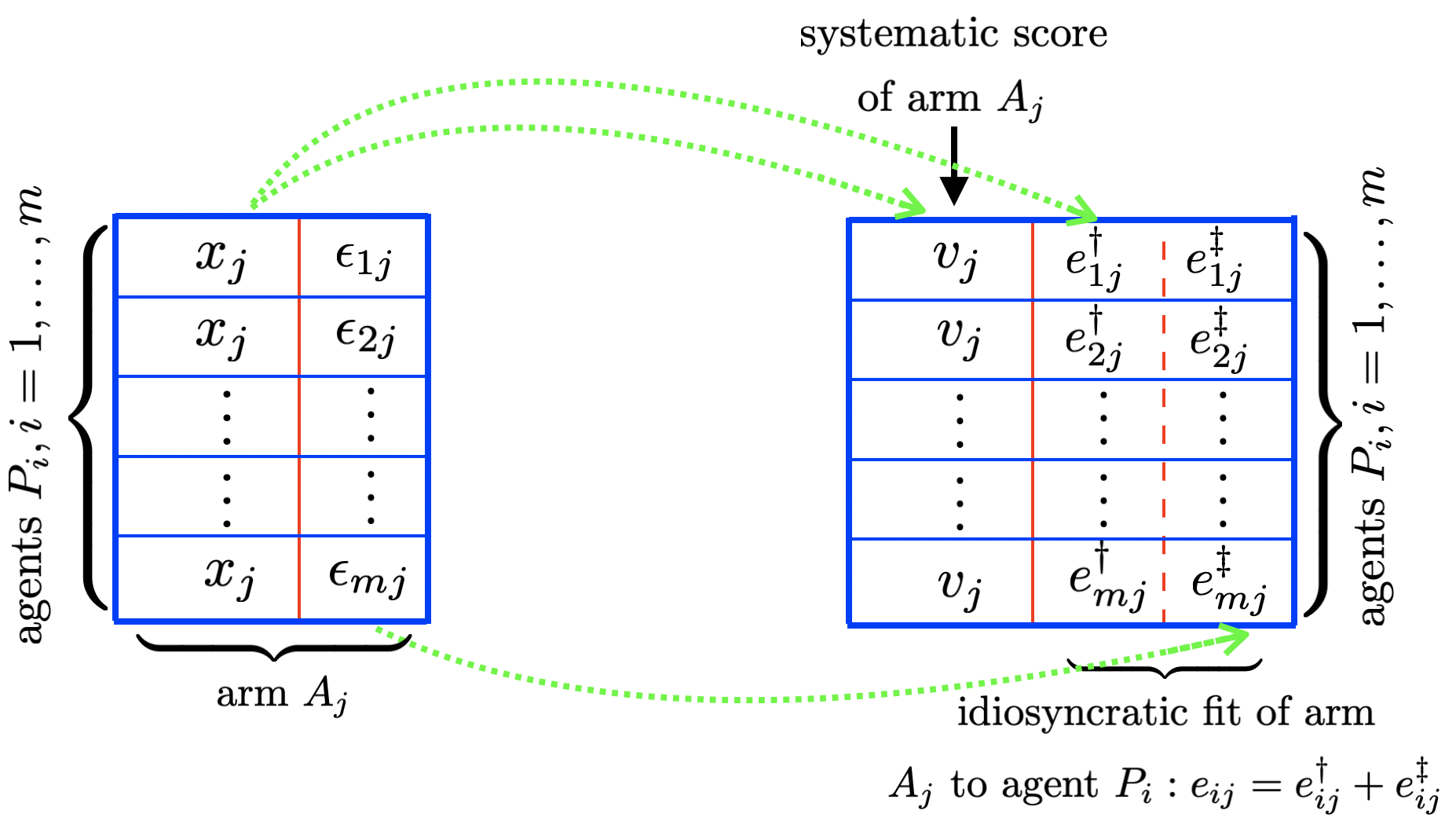}
    \caption{ANOVA decomposition of the utility function corresponding to Eq.~(\ref{eqn:anovautility}).}
    \label{fig:scorefitutility}
\end{figure}

The utility function $U_i(x_j,\epsilon_{ij})$ is generally assumed to be strictly increasing in $x_j$ and nondecreasing in $\epsilon_{ij}$ 
\citep{che2016,lee2016}.
Then $v_j$ in Eq.~(\ref{eqn:anovautility}) is strictly increasing in $x_j$, and $e_{ij}$ is nondecreasing in $\epsilon_{ij}$, for any $i\in[m]$.
\end{proof}

We demonstrate the ANOVA decomposition  in Eq.~(\ref{eqn:anovautility})  through the college admission example. The score
 $v_j$ in  Eq.~(\ref{eqn:anovautility})  represents the ``public valuation" of a high school's quality, high school GPA, and SAT/ACT score. 
 Although most colleges place  the highest importance on academic achievement in evaluating applications, each factor's specific weight can differ from college to college. Moreover, ``holistic admission"---such that a high SAT/ACT score and a high GPA is no guarantee of admission---is not rare in college admissions, especially for top colleges. 
The decomposition in Eq.~(\ref{eqn:anovautility}) incorporates the college-specific weight for students' academic performance and extracurricular activity to terms $e_{ij}^\dagger$ and $e_{ij}^{\ddagger}$. 
The term $e_{ij}^\dagger$ in  Eq.~(\ref{eqn:anovautility}) can represent college-specific adjustment, which is  ``private valuation" of a student's high school record and SAT/ACT score, in addition to special talent, grades of challenging college preparatory curriculum and work experience. For example, some colleges may place a larger weight on the SAT/ACT score  than other colleges.
The term $e_{ij}^{\ddagger}$  in Eq.~(\ref{eqn:anovautility}) is a college-specific ``private valuation" of students' writing skills and compelling personal stories.

\subsection{Proof of Theorem \ref{prop:accepwelldefine}: Acceptance Probability}
\begin{proof}
The uncertainty of the arm's acceptance mainly consists of two parts: the state of the world and the agents' strategies. 
We employ a four-step construction approach to show that there exists a probability mass function (PMF) $\pi_i(s_i,v_j)$ characterizing the probability of $A_j$ accepting $P_i$.
First, Nature draws the state $\omega$ and arms' preferences for agents that are characterized by $\omega$. Then agent-specific states are determined as a function of $\omega$: $s_i = s_i(\omega)$ for $i\in[m]$. 

Second, we derive an arm's acceptance probability.
Suppose now agent $P_i$ pulls  arm $A_j$. Since $A_j$ would accept its most preferred agent among those who have pulled it, $A_j$'s acceptance of $P_i$ depends on other agents' strategies.
Let $I\subseteq [n]$ and  $P_{I} \equiv \{P_k, k\in I\}$, which is the set of agents that pull arm $A_j$.
We define
\begin{equation*}
\mu_{k,I\cup\{i\}}(\omega, v_j,\mathbf e_j)\equiv\mathbb P(A_j\text{ accepts }P_k\ | \ P_{I\cup\{i\}} \text{ pulls } A_j),\quad k\in I\cup\{i\}.
\end{equation*} 
That is, $\mu_{k,I\cup\{i\}}(\omega, v_j,\mathbf e_j)$ is the PMF that an arm with the same score and fits as $A_j$ would accept $P_k$ conditional on the event that agents $P_{I\cup\{i\}}$ have pulled the arm. The  $\mu_{i,I\cup\{i\}}(\omega, v_j,\mathbf e_j)$ depends on arms' preferences as characterized by $\omega$, and also on agents' strategies determined by the score $v_j$ and the vector of fits $\mathbf e_j = (e_{1j},\ldots,e_{mj})$.
Moreover, the PMF $\mu_{k,I\cup\{i\}}(\omega, v_j,\mathbf e_j)$ satisfies that,
\begin{equation}
\label{eqn:jointprobmu}
\sum_{k\in I\cup\{i\}}\mu_{k,I\cup\{i\}}(\omega,v_j, \mathbf e_j) = 1,\quad \forall j\in[n].
\end{equation}

Third, we derive an arm's acceptance probability from an agent $P_i$'s perspective. Suppose again that agent $P_i$ pulls  arm $A_j$. Define the PMF $\pi_i(s_i,v_j)$ as,
\begin{equation*}
\pi_i(s_i,v_j) \equiv \mathbb P(A_j\text{ accepts }P_i\ |\ P_i \text{ pulls }A_j).
\end{equation*}
Let $\mathcal I_{-i}\equiv \{I: I\subseteq [n]\setminus\{i\}\}$ be the family of subsets of $[n]\setminus\{i\}$. Then by Tonelli's theorem, we can write
\begin{equation}
\label{eqn:probajacceptspi}
\begin{aligned}
& \pi_i(s_i,v_j)\\
& =\mathbb E\left[\mathbb P(A_j\text{ accepts }P_i\ |\ P_{I\cup\{i\}} \text{ pulls }A_j, P_{[n]\setminus (I\cup\{i\})} \text{ do not pull }A_j)\right]\\
& = \mathbb E_{I\in\mathcal I_{-i}}\left[\mathbb E_{\mathbf e_{-i,j}}[\mu_{i,I\cup\{i\}}(\omega,v_j,\mathbf e_j)\cdot \mathbf 1(P_{I}\text{ pulls }A_j)\cdot \mathbf 1(P_{[n]\setminus(I\cup\{i\})}\text{ do not pull } A_j)]\right],
\end{aligned}
\end{equation}
where $\mathbf e_{-i,j}=(e_{1j},\ldots,e_{i-1,j},e_{i+1,j},\ldots,e_{n,j})$. When conditional on $v_j$, $\mathbf 1(P_I\text{ pulls }A_j)$ only depends on $\{e_k,k\in I\}$, and similarly, $\mathbf 1(P_{[n]\setminus(I\cup\{i\})}\text{ do not pull }A_j)$ only depends on $\{e_k,k\in [n]\setminus(I\cup\{i\})\}$.
Then the defined PMF $\pi_i(s_i,v_j) $ represents the PMF $\mu_{i,I\cup\{i\}}(\omega,v_j,\mathbf e_j)$ being averaged over agents $P_I$' strategies. Since $\pi_i$ is defined by conditioning on $P_i$'s strategy (i.e., $P_i$ pulls $A_j$), the probability $\pi_i(s_i,v_j)$ does not depend on the fit $e_{ij}$. Moreover, if knowing that agents $P_{I\cup\{i\}}$ pull arm $A_j$, then by Eqs.~(\ref{eqn:jointprobmu}) and (\ref{eqn:probajacceptspi}), the PMF $\pi_i(s_i,v_j)$ satisfies that,
\begin{equation*}
\sum_{k\in I\cup\{i\}}\pi_k(s_k,v_j) =1,\quad \forall j\in[n].    
\end{equation*}

Fourth, we note that agent-specific states $s_i = s_i(\omega)$ is chosen by agent $P_i$ for $i\in[m]$. Hence without loss of generality, let $s_i$ is parameterized such that the PMF $\pi_i(s_i,v_j)$ is increasing in $s_i$.

Finally, the expected utility  that agent $P_i$ receives from pulling arm $A_j\in\mathcal A$ is
\begin{equation*}
\begin{aligned}
\E[\text{utility}] & = \E[\text{utility} \ | \ \text{successful pulling}]\cdot \mathbb P(\text{successful pulling}) \\
& = (v_j+e_{ij}) \cdot \pi_{i}(s_i, v_j).
\end{aligned}
\end{equation*}
This completes the proof.
\end{proof}

\begin{remark}
We now compare the functions $\mu_{i,I\cup\{i\}}(\omega, v_j,\mathbf e_j)$ and $\pi_i(s_i,v_j)$ from a learning perspective. 
It is shown in the above proof that  the function $\mu_{i,I\cup\{i\}}(\omega, v_j,\mathbf e_j)$ captures the distribution of arms' preferences. However, $\mu_{i,I\cup\{i\}}(\omega, v_j,\mathbf e_j)$ is imperfect for learning using data due to two reasons. First, if the agent $P_i$ wants to estimate $\mu_{i,I\cup\{i\}}(\omega, v_j,\mathbf e_j)$, $P_i$ has to identify other agents who are also pulling $A_j$, that is, to identify the set $I$.
However, thanks to the nature of decentralized markets, there is little communication among agents, and $P_ I$ cannot learn about other agents' strategies. Second, note that $\mu_{i,I\cup\{i\}}(\omega, v_j,\mathbf e_j)$ depends on the fit vector $\mathbf e_{-i,j}=(e_{1j},\ldots,e_{i-1,j},e_{i+1,j},\ldots,e_{n,j})$.
However, since the fits are private information, the vector $\mathbf e_{-i,j}$ is also unknown  to agent $P_i$.

In contrast, the learning of function $\pi_i(s_i, v_j)$ does not require $P_i$ knowing other agents' strategies. Moreover,  $\pi_i(s_i, v_j)$ does not depend on the  fit vector $\mathbf e_{-i,j}$. Therefore, the function $\pi_i(s_i, v_j) $ is favorable than the function $\mu_{i,I\cup\{i\}}(\omega, v_j,\mathbf e_j)$ in the sense that agent $P_i$ can use historical data to learn $\pi_i(s_i, v_j)$.
\end{remark}

\subsection{Proof of Theorem  \ref{thm:optestpi}: Optimal Estimation of Acceptance}
\label{appen:pfthmoae}
\begin{proof}
Let $\widetilde{f}_i$ be the minimizer of Eq.~(\ref{eqn:minklr}), that is,
\begin{equation*}
\widetilde{f}_i =\underset{f_i\in \mathcal H_{K_i}}{\arg\min}\left\{\sum_{t=1}^T\sum_{j\in \mathcal B_{i}^t}\left[-y_{ij}^tf_i(s_i^t,v^t_j)+\log\left(1+\exp\left(f_i(s_i^t,v^t_j)\right)\right)\right]+\frac{1}{2}\sum_{t=1}^Tn_{it}\lambda_i \|f_i\|_{\mathcal H_{K_i}}^2\right\}.
\end{equation*}
By  the results in  Chapters 5  of \citet{lin1998}, we obtain that
\begin{equation*}
\E_{s_i,v}[(\widetilde{f}_i - f_i)^2]\leq c_1 \left[T(\log T)^{-1}\right]^{-2r/(2r+1)}\quad \text{ as } T\to\infty,
\end{equation*}
where $\lambda_i\leq c_{\lambda}[T(\log T)^{-1}]^{-2r/(2r+1)}$. Here, $ c_{\lambda}, c_1>0$ are constants independent of $T$. Moreover, the estimate $\widetilde{f}_i$ is minimax rate-optimal. 
By the generalization properties of random features  \citep{rudi2017}, it is known that if the number of random features satisfies
\begin{equation*}
p\geq c_p[T(\log T)^{-1}]^{-2r/(2r+1)},
\end{equation*} 
then there exists some constant $c_2>0$ such that
\begin{equation*}
\E_{s_i,v}[(\widehat{f}_i - \widetilde{f}_i)^2]\leq c_2 \left[T(\log T)^{-1}\right]^{-2r/(2r+1)} \quad\text{ as } T\to\infty.
\end{equation*}
By the triangle inequality, there exists some constant $c_f>0$ such that
\begin{equation*}
\begin{aligned}
\E_{s_i,v}[(\widehat{f}_i - f_i)^2]& \leq  \E_{s_i,v}[(\widehat{f}_i - \widetilde{f}_i)^2] + \E_{s_i,v}[(\widetilde{f}_i -f_i)^2] \\
& \leq c_f\left[T(\log T)^{-1}\right]^{-2r/(2r+1)} \quad\text{ as } T\to\infty.
\end{aligned}
\end{equation*}
 Therefore, the estimate $\widehat{f}_i $  in Eq.~(\ref{eqn:predfisve}) is minimax rate-optimal. 
\end{proof}

\subsection{Proof of Theorem \ref{thm:cutoffaystate}: Cutoff Strategy} 
\label{subsec:pfcutstr}
\begin{proof}
We divide the proof of this theorem into five steps.
\paragraph{Step 1.} We show that the cutoff strategy with respect to the fits is optimal. Suppose that arms $A_{j_1}, A_{j_2}\in\mathcal A^{T+1}$ have the same score $v_{j_1}=v_{j_2}$, but $A_{j_1}$ has a worse fit than $A_{j_2}$ for agent $P_i$. Now assume that $A_{j_1}$ was pulled by $P_i$ but $A_{j_2}$ was not, that is, $A_{j_1}\in \widehat{\mathcal B}_i(s_i), A_{j_2}\not\in \widehat{\mathcal B}_i(s_i)$. 
Then the expected number of arms accepting $P_i$ is unchanged if $P_i$ replaces $A_{j_1}$ with $A_{j_2}$ in $\widehat{\mathcal B}_i(s_i)$. On the other hand, 
since the $P_i$'s expected payoff in Eq.~(\ref{eqn:totalutility}) is increasing in fit $e_{ij}$, $P_i$  has a strictly larger expected payoff if $P_i$ replaces $A_{j_1}$ with $A_{j_2}$. Hence, $P_i$ should pull $A_{j_2}$ instead $A_{j_1}$. This argument holds  regardless of strategies  of other agents. 

\paragraph{Step 2.} We prove that the cutoff $\widehat{e}_i(s_i,v)$  in Eq.~(\ref{eqn:choiceofeim}) is well-defined.   If the boundary $\{\mathcal B^{+}_i(s_i)\setminus\mathcal B^{-}_i(s_i)\}$ is not empty, the expected penalty due to exceeding the quota is
\begin{equation*}
\gamma_i\sum_{j\in\mathcal B^{+}_i(s_i)}\pi_i(s_i,v_j)-\gamma_iq_i.
\end{equation*}
The expected utility of pulling arms on the boundary is
\begin{equation*}
\sum_{j\in \mathcal B^{+}_i(s_i)\setminus\mathcal B^{-}_i(s_i)}(v_j+e_{ij})\cdot\pi_i(s_i,v_j).
\end{equation*}
Agent $P_i$ pulls an arm if the expected utility is at least the expected penalty, which justifies the condition specified by Eq.~(\ref{eqn:jbim1+}). 
Moreover, since $\widehat{e}_i(s_i,v)\in[0,1]$, we conclude  that the cutoff is well-defined.

\paragraph{Step 3.} We show that the cutoff strategy of pulling arms from the set $\widehat{\mathcal B}_{i}(s_{i})$ is near-optimal. 
Let $\widetilde{\mathcal B}_{i}(s_{i})\subseteq \mathcal A^{T+1}$ be any other arm set. Define the following mixed strategy:
\begin{equation*}
\sigma_{i,k}(s_{i,k},v, e_i;t) \equiv t\cdot\mathbf{1}\{(v,e_i) \in \widetilde{\mathcal B}_{i}(s_{i})\}+(1-t)\cdot\mathbf{1}\{(v,e_i) \in \widehat{\mathcal B}_{i}(s_{i})\},\quad\text{for } t\in[0,1].
\end{equation*}
The expected payoff of using the mixed strategy $\sigma_i$ is
\begin{equation*}
\begin{aligned}
\mathbb U_i(t)  \equiv & \sum_{j\in\mathcal A^{T+1}}(v_j+e_{ij})\cdot \pi_i(s_i,v_j)\cdot\sigma_i(s_i,v_j, e_{ij};t) \\
&\quad\quad\quad\quad\quad - \gamma_i\cdot\max\left\{\sum_{j\in\mathcal A^{T+1}}\pi_i(s_i,v_j)\cdot\sigma_i(s_i,v_j,e_{ij};t)-q_i,0\right\}.
\end{aligned}
\end{equation*}
It is clear that $\mathbb U_i(t)$ is concave in $t$.
We discuss the local change $d \mathbb U_i(0)/dt$ in three cases.

Case (I). Consider removing a single arm from $\widehat{\mathcal B}_i(s_i)$. If the arm is from the non-empty boundary $\{\mathcal B^{+}_i(s_i)\setminus\mathcal B^{-}_i(s_i)\}$,  the condition specified in Eq.~(\ref{eqn:jbim1+}) implies that $P_i$'s expected payoff will decrease if not pulling the arm. Moreover, since removing any other arms from $\widehat{\mathcal B}_i(s_i)$ results in a greater loss of $P_i$'s expected payoff compared to removing the arms on the non-empty boundary, we have $d \mathbb U_i(0)/dt< 0$ in this case.
By the concavity of $\mathbb U_i(t)$ in $t$,  we obtain
\begin{equation*}
\mathbb U_i(1)= \mathbb U_i(0)+\frac{d\mathbb U_i(0)}{dt}(1-0)< \mathbb U_i(0),
\end{equation*}

Case (II). Consider  adding a new arm to $\widehat{\mathcal B}_i(s_i)$, where the new arm has attributes $\{v_{j'},e_{ij'}\}$ and it is not from the set $\mathcal B_i^{+}(s_i)$. Denote by $\mathcal B_i'(s_i)$ the new arm set with the added arm. Note that $P_i$ pulls a new arm only if the arm has a larger expected utility than the expected penalty due to exceeding the quota, that is,
\begin{equation}
\label{eqn:vj'eij'g}
(v_{j'}+e_{ij'})\cdot\pi(s_i,v_{j'})\geq \gamma_i\sum_{j\in\mathcal B'_i(s_i)}\pi_i(s_i,v_j)-\gamma_iq_i.
\end{equation}
Since the new arm is not in $\mathcal B_i^{+}(s_i)$,  we have
\begin{equation}
\label{eqn:sumjipisvjqii}
\begin{aligned}
& \sum_{j\in\mathcal B'_i(s_i)}\pi_i(s_i,v_j)-q_i \\
& = \sum_{j\in\mathcal B'_i(s_i)}\pi_i(s_i,v_j)-\sum_{j\in\mathcal B_i^{+}(s_i)}\pi_i(s_i,v_j)+\sum_{j\in\mathcal B_i^{+}(s_i)}\pi_i(s_i,v_j)-q_i\\
 &\geq \sum_{j\in\mathcal B'_i(s_i)}\pi_i(s_i,v_j)-\sum_{j\in\mathcal B_i^{+}(s_i)}\pi_i(s_i,v_j)\\
 &\geq \pi_i(s_i,v_{j'}).
\end{aligned}
\end{equation}
Because that $\gamma_i>\sup_{j\in\mathcal A^{T+1}}\{v_j+e_{ij}\}$,  Eq.~(\ref{eqn:sumjipisvjqii}) is contradictory to  Eq.~(\ref{eqn:vj'eij'g}). 
Hence, adding a new arm to $\widehat{\mathcal B}_i(s_i)$ induces a loss in $P_i$'s expected payoff. Hence, $d \mathbb U_i(0)/dt< 0$ in this case. By the concavity of $\mathbb U_i(t)$ in $t$,  we obtain
\begin{equation*}
\mathbb U_i(1)= \mathbb U_i(0)+\frac{d\mathbb U_i(0)}{dt}(1-0)< \mathbb U_i(0),
\end{equation*}

Case (III): Consider  removing an arm with attributes $(v_j,e_{ij})$ from  $\widehat{\mathcal B}_{i}(s_{i})$ and simultaneously adding  new arms to  $\widehat{\mathcal B}_{i}(s_{i})$. Suppose that the new arms are from $\mathcal B_{i}''(s_{i})$ with attributes $(v_{j''},e_{ij''})$.  If $\widehat{\mathcal B}_{i}(s_{i}) = \mathcal B_{i}^-(s_{i})$, then the new arms are not in $\mathcal B_{i}^{-}(s_{i})$. By definition of $\mathcal B_{i}^{-}(s_{i})$, we have
$v_{j''} + e_{ij''}\leq \min_{j\in \mathcal B^-_{i}(s_{i})}(v_{j} + e_{ij}).$
Hence, 
\begin{equation}
\label{eqn:bdonb-}
\begin{aligned}
& \mathbb U_i(1)  -\mathcal U_i[\mathcal B^-_{i}(s_{i})] \leq 
 \sum_{j''\in\mathcal B_{i}''(s_{i})}(v_{j''}+e_{ij''}) \pi_{i}(s_{i},v_{j''})\\
&\leq \left[\min_{j\in \mathcal B^-_{i}(s_{i})}(v_j+e_{ij})\right]\cdot \left[q_i-\sum_{j\in \mathcal B^-_{i,k}(s_{i,k})}\pi_{i,k}(s_{i,k},v_j)\right]\\
& = \text{UE}^\dagger.
\end{aligned}
\end{equation}
If $\widehat{\mathcal B}_{i,k}(s_{i,k}) = \mathcal B_{i,k}^+(s_{i,k})$, by definition of $\mathcal B_{i,k}^+(s_{i,k})$
\begin{equation*}
\begin{aligned}
&\mathbb U_i(1)-\mathcal U_i[\mathcal B^+_{i}(s_{i})] \leq \mathbb U_i(1)-\mathcal U_i[\mathcal B^-_{i}(s_{i})]  \leq  \text{UE}^\dagger.
\end{aligned}
\end{equation*}
where the last inequality is by Eq.~(\ref{eqn:bdonb-}).
Hence,
\begin{equation*}
    \mathbb U_i(1) - \mathbb U_i(0) \leq  \text{UE}^\dagger.
\end{equation*}
Therefore, exchanging an arm in  $\widehat{\mathcal B}_{i,k}(s_{i,k})$  with arms not in  $\widehat{\mathcal B}_{i,k}(s_{i,k})$ could result in a decrease in the expected payoff $\mathbb U_i(t)$ by at most $\text{UE}^\dagger$. 
Combining the cases (I), (II), (III), we obtain that 
\begin{equation*}
\max_{\mathcal B_i\subseteq \mathcal A^{T+1}}\mathcal U_i[\mathcal B_i] - \mathcal U_i[\widehat{\mathcal B}_i(s_i)]\leq \text{UE}^\dagger.
\end{equation*}

\paragraph{Step 4.} We prove the other direction of the inequality. Since $\widehat{\mathcal B}_{i}(s_{i})\subseteq \mathcal A^{T+1}$,
\begin{equation*}
\max_{\mathcal B_i\subseteq \mathcal A^{T+1}}\mathcal U_i[\mathcal B_i] - \mathcal U_i[\widehat{\mathcal B}_i(s_i)]\geq 0.
\end{equation*}
\paragraph{Step 5.} If there is a continuum of arms and $\pi_{i}(\cdot,v)$ is continuous in $v$, then there exists $b_{i}\geq 0$
such that
$\Pi_{i,k}(b_{i,k}) = q_i$, where $\Pi_{i,k}(b_{i,k})$ is defined in Section \ref{sec:agentsexpectedpayoff}.
By definition of $\widehat{\mathcal B}_{i}(s_{i})$ we have, $\widehat{\mathcal B}_{i}(s_{i})=\mathcal B^+_{i}(s_{i}) = \mathcal B^-_{i}(s_{i})$,  and $\sum_{j\in \mathcal B^-_{i}(s_{i})}\pi_{i}(s_{i},v_j)=q_i$.
Hence $\text{UE}^\dagger = 0$. This completes the proof.
\end{proof}

\subsection{Proof of Theorem \ref{thm:optimalstates}: Mean Calibration for CDM}
\label{subsec:pfoffindstaave}
\begin{figure}[!ht]
    \centering
    \includegraphics[width=0.6\textwidth]{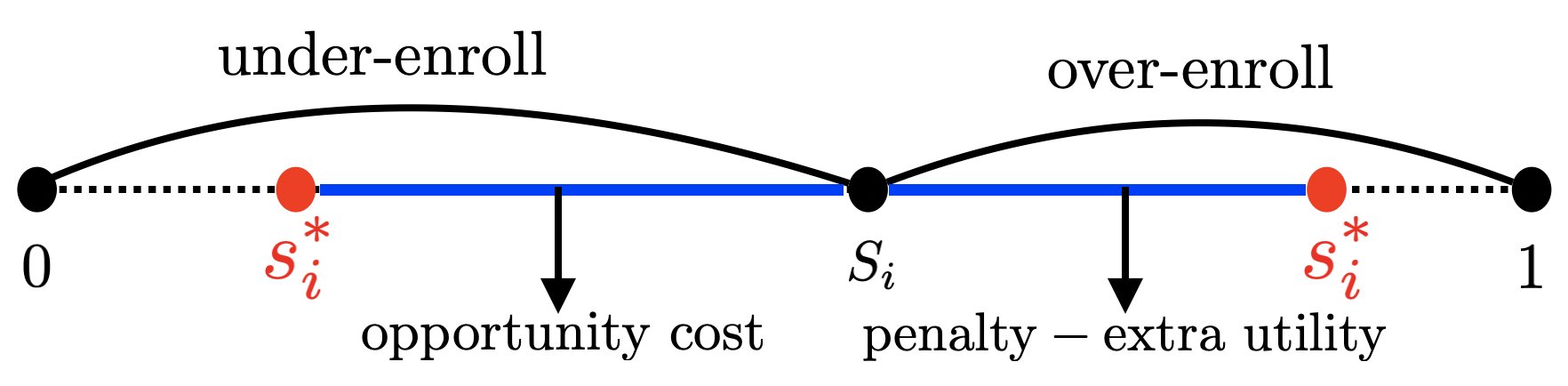}
    \caption{Cost of a strategy in the face of uncertain true state $s^*_i$.}
    \label{fig:cost1}
\end{figure}
\begin{proof}
By the proof of  Theorem \ref{thm:cutoffaystate},  $\widehat{\mathcal B}_i(s_i)\subseteq \widehat{\mathcal B}_i(s_i-\delta s_i)$ for any $\delta s_i\in(0,s_i)$. Hence, the marginal set $\partial \widehat{\mathcal B}_i(s_i)$  is well-defined. 
Let $V_i(s_i^*, \widehat{\mathcal B}_i(s_i))$ be the expected utility that $P_i$ receives by pulling arms from  $\widehat{\mathcal B}_i(s_i)$ and under the true state $s_i^*$. That is, 
\begin{equation*}
V_i(s_i^*,\widehat{\mathcal B}_i(s_i)) \equiv \sum_{j\in\widehat{\mathcal B}_i(s_i)}(v_j+e_{ij})\cdot\pi_i(s_i^*,v_j).
\end{equation*} 
Let $\mathcal N_i(s_i^*,\widehat{\mathcal B}_i(s_i))$ be the expected number of arms in $\widehat{\mathcal B}_i(s_i)$ accepting $P_i$ under $s_i^*$. That is,
\begin{equation*}
\mathcal N_i(s_i^*,\widehat{\mathcal B}_i(s_i)) \equiv \sum_{j\in\widehat{\mathcal B}_i(s_i)}\pi_i(s_i^*,v_j).
\end{equation*}
Similarly, we define $V_i(s_i^*,\partial\widehat{\mathcal B}_i(s_i)) $ and $\mathcal N_i(s_i^*,\partial\widehat{\mathcal B}_i(s_i))$ for the marginal  set $\partial\widehat{\mathcal B}_i(s_i)$.  
Let the marginal utility  be
\begin{equation*}
u_i(s_i^*, \partial\widehat{\mathcal B}_i(s_i)) \equiv\frac{V_i(s_i^*,\partial\widehat{\mathcal B}_i(s_i))}{\mathcal N_i(s_i^*,\partial\widehat{\mathcal B}_i(s_i))}.
\end{equation*}
Since the true state $s_i^*$ is unknown, the cost of pulling the arm set $\widehat{\mathcal B}_i(s_i)$
consists of two parts. See an illustration in Figure \ref{fig:cost1}.   The first  is the \emph{over-enrollment} cost (OE), which occurs if the calibration parameter $s_i<s_i^*$. Then the realized number of arms in  $\widehat{\mathcal B}_i(s_i)$ accepting $P_i$ will be greater than the expected number of arms in  $\widehat{\mathcal B}_i(s_i)$ accepting $P_i$. That is, $\mathcal N_i(s_i^*, \widehat{\mathcal B}_i(s_i))>\mathcal N_i(s_i,\widehat{\mathcal B}_i(s_i))=\mathcal N_i(s^*,\widehat{\mathcal B}_i(s_i^*))=q_i$. Thus, OE depends on $s_i$ and can be written as
\begin{equation*}
\begin{aligned}
&\quad \text{OE}(s_i) \\
&  \equiv \E_{s_i^*}\left[\gamma_i\left\{\left.\mathcal N_i(s_i^*,\widehat{\mathcal B}_i(s_i)) - \mathcal N_i(s_i^*,\widehat{\mathcal B}_i(s_i^*))\right\} - \left\{V_i(s_i^*,\widehat{\mathcal B}_i(s_i)) - V_i(s_i^*,\widehat{\mathcal B}_i(s_i^*))\right\}\ \right| \ s_i< s_i^*\leq 1\right]\\
& = \E_{s^*}\left[\left.\int_{t=s_i}^{s_i^*}[\gamma_i-u_i(s_i^*,\partial\widehat{\mathcal B}_i(t))]\cdot\mathcal N_i(s_i^*,\partial\widehat{\mathcal B}_i(t))
dt\ \right| \ s_i< s_i^*\leq 1\right].
\end{aligned}
\end{equation*}
Here, OE$(s_i)$ equals the penalty of the arms in  $\widehat{\mathcal B}_i(s_i)$ which would accept $P_i$ exceeding the quota, and deducts the utility of these arms.

The second part of the cost is \emph{under-enrollment} (UE), which occurs if the calibration parameter  $s_i>s_i^*$. Then $\mathcal N_i(s^*,\widehat{\mathcal B}_i(s_i))<\mathcal N_i(s_i^*,\widehat{\mathcal B}_i(s_i^*))=q_i$.
The UE$(s_i)$ equals the opportunity cost in the sennse that $P_i$ could have successfully pulled more arms:
\begin{equation*}
\begin{aligned}
\text{UE}(s_i)& \equiv \E_{s_i^*}[V_i(s_i^*,\widehat{\mathcal B}_i(s_i^*)) - V_i(s_i^*,\widehat{\mathcal B}_i(s_i)) \ | \ 0\leq s_i^*< s_i] \\
& = \E_{s_i^*}\left[\left.\int_{t=s_i^*}^{s_i}u_i(s_i^*,\partial\widehat{\mathcal B}_i(t))\cdot\mathcal N_i(s_i^*,\partial\widehat{\mathcal B}_i(t))
dt\ \right|\ 0\leq s_i^*< s_i\right].
\end{aligned}
\end{equation*}
Therefore, the goal of  finding $s_i$ to maximize the $P_i$'s average-case expected payoff can be written as:
\begin{equation*}
\begin{aligned}
 \underset{s_i\in(0,1)}{\arg\max}\left\{\E_{s_i^*}\left[V_i(s_i^*,\widehat{\mathcal B}_i(s_i))-\gamma_i\max\{\mathcal N_i(s_i^*,\widehat{\mathcal B}_i(s_i))-q_i,0\} \right]\right\}.
 \end{aligned}
\end{equation*}
This goal is equivalent to finding $s_i$ to minimize the  weighted sum of  OE$(s_i)$ and UE$(s_i)$ with the occurrence probabilities as the weights:
\begin{equation*}
\begin{aligned}
\underset{s_i\in(0,1)}{\arg\min} \left\{(1-F_{s_i^*}(s_i))\text{OE}(s_i)+(F_{s_i^*}(s_i)-\mathbb P(s_i^*=s_i))\text{UE}(s_i)\right\}.
\end{aligned}
\end{equation*}
By the first-order condition, the minimizer $s_i\in(0,1)$ satisfies
\begin{equation}
\label{eqn:firstorder}
\begin{aligned}
& [1-\mathbb P(s_i^*=s_i)]\E_{s_i^*}[V_i(s_i^*,\partial\widehat{\mathcal B}_i(s_i)) \ | \ s_i^*\neq s_i] \\
&\quad\quad = \gamma_i(1-F_{s_i^*}(s_i))\E_{s_i^*}[\mathcal N_i(s_i^*,\partial\widehat{\mathcal B}_i(s_i)) \ |\ s_i<s_i^*\leq 1],
\end{aligned}
\end{equation}
where $F_{s_i^*}(\cdot)$ is  the cumulative distribution function of $s_i^*\in[0,1]$.
This result proves Eq.~(\ref{eqn:quans}). 
Note that there always exists a solution to Eq.~(\ref{eqn:quans}) since when $s_i\to 0_+$, $F_{s_i^*}(s_i) \to 0, \gamma_i>u_i(s_i^*, \partial\widehat{\mathcal B}_i(s_i))$; and when $s_i\to 1_-$, $F_{s_i^*}(s_i) \to 1, V_i(s_i^*, \partial\widehat{\mathcal B}_i(s_i))>0$; and the right side of  Eq.~(\ref{eqn:quans})  is strictly decreasing in $s_i$.  
Moreover, the solution to Eq.~(\ref{eqn:quans}) is the maximizer of the thee average case expected payoff  $\E_{s_i^*}\{\mathcal U_i[\widehat{\mathcal B}_i(s_i)]\}$. This is because the  strategy corresponding to the solution of Eq.~(\ref{eqn:quans}) has a larger value of  $\E_{s_i^*}\{\mathcal U_i[\widehat{\mathcal B}_i(s_i)]\}$ than the strategy corresponding to the boundary $s_i=1$ or $s_i=0$. 

Here, we assume the calibration parameter $s_i\in(0,1)$ in the definitions of $\text{OE}(s_i)$ and $\text{UE}(s_i)$. We prove that this assumption is without less of generality by showing that if $s_i=1$, $P_i$ can pull more arms to obtain a larger expected payoff, and if $s_i=0$, $P_i$ can pull less arms to obtain a larger expected payoff.  Consider that if $s_i=1$ and $P_i$ pulls an additional arm $A$ which is not pulled currently, that is,  $A\not\in\widehat{\mathcal B}_i(1)$. Then the expected number of arms that accept agent $P_i$ is
\begin{equation*}
\mathcal N_i(s_i^*,\widehat{\mathcal B}_i(1)\cup\{A\}) = \mathcal N_i(s_i^*,\widehat{\mathcal B}_i(1)) +  \mathcal N_i(s_i^*,\{A\})
\end{equation*}
Let $\widetilde{s}_i$ satisfy $\mathcal N_i(\widetilde{s}_i,\widehat{\mathcal B}_i(1)\cup\{A\}) = q_i$. Since $\mathcal N_i(s_i^*,\widehat{\mathcal B}_i(1)\cup\{A\}) > \mathcal N_i(s_i^*,\widehat{\mathcal B}_i(1)) $, we have $\widetilde{s}_i<1$.  Let $A$ be the arm such that $\widetilde{s}_i>1-\epsilon_s$ for some sufficiently small $\epsilon_s>0$. 
Then the difference of average-case expected payoffs from pulling two arm sets $\widehat{\mathcal B}_i(1)\cup\{A\}$ and $\widehat{\mathcal B}_i(1)$  is
\begin{equation*}
\begin{aligned}
& \E_{s_i^*}\left[V_i(s_i^*,\{A\})\right] - \gamma_i\E_{s_i^*}\left.\left[q_i - \mathcal N_i(s_i^*,\widehat{\mathcal B}_i(1)\cup\{A\})\ \right| \ \widetilde{s}_i<s_i^*\leq 1\right]\\
& = \E_{s_i^*}\left[V_i(s_i^*,\{A\})\right]  - \gamma_i\E_{s_i^*}\left[\mathcal N_i(s_i^*,\{A\})\ | \ \widetilde{s}_i<s_i^*\leq 1\right] \\
& \quad\quad\quad \quad\quad\quad \quad\quad\quad \quad\quad\quad \quad\quad \quad\quad\quad + \gamma_i\E_{s_i^*}\left.\left[q_i - \mathcal N_i(s_i^*,\widehat{\mathcal B}_i(1))\ \right| \ \widetilde{s}_i<s_i^*\leq 1\right]\\
& >  \E_{s_i^*}\left[V_i(s_i^*,\{A\})\right]  - \gamma_i\E_{s_i^*}\left[\mathcal N_i(s_i^*,\{A\})\ | \ \widetilde{s}_i<s_i^*\leq 1\right] \\
& = U_i(A) \E_{s_i^*}[\mathcal N_i(s_i^*,\{A\})\ | \ 0\leq s_i^*\leq \widetilde{s}_i] - [\gamma_i - U_i(A)]\E_{s_i^*}[\mathcal N_i(s_i^*,\{A\})\ | \ \widetilde{s}_i < s_i^*\leq 1]\\
& >0,
\end{aligned}
\end{equation*}
where $U_i(A)$ is the latent utility of arm $A$ defined in Eq.~(\ref{eqn:defofutility}). The last step holds for sufficiently small $\epsilon_s>0$. Similarly, if $s_i=0$, $P_i$ can benefit by pulling less arms. Thus, the assumption that $s_i\in(0,1)$ is without less of generality.
 
If $F_{s_i^*}(\cdot)$ has discrete support, we require that  UE$(s_i)$ is at least OE$(s_i)$. By the first-order condition similar to Eq.~(\ref{eqn:firstorder}), we find  the minimal $s_i\in[0,1]$ such that
 \begin{equation}
 \label{eqn:conscaldisc}
 \begin{aligned}
& [1-\mathbb P(s_i^*=s_i)]\E_{s_i^*}[V_i(s_i^*,\partial\widehat{\mathcal B}_i(s_i))\ | \ s_i^*\neq s_i] \\
&\quad\quad \geq \gamma_i(1-F_{s_i^*}(s_i))\E_{s_i^*}[\mathcal N_i(s_i^*,\partial\widehat{\mathcal B}_i(s_i))\ |\ s_i<s_i^*\leq 1],
\end{aligned}
\end{equation}
where the search of $s_i$ starts from the maximum value in the support to the minimal value.
We note that the calibration in  Eq.~(\ref{eqn:conscaldisc}) is a \emph{conservative} counterpart as compared with the calibration such that  OE$(s_i)$ is at least UE$(s_i)$: 
\begin{equation}
 \label{eqn:noconscaldisc}
 \begin{aligned}
& [1-\mathbb P(s_i^*=s_i)]\E_{s_i^*}[V_i(s_i^*,\partial\widehat{\mathcal B}_i(s_i))\ | \ s_i^*\neq s_i] \\
&\quad\quad \leq \gamma_i(1-F_{s_i^*}(s_i))\E_{s_i^*}[\mathcal N_i(s_i^*,\partial\widehat{\mathcal B}_i(s_i))\ |\ s_i<s_i^*\leq 1].
\end{aligned}
\end{equation}
The calibration in Eq.~(\ref{eqn:conscaldisc}) is preferred to that in Eq.~(\ref{eqn:noconscaldisc}) since we want the calibration to be sensitive to the penalty $\gamma_i$.
This completes the proof.
\end{proof}

\subsection{Proof of Theorem \ref{cor:introofvalueV}: Maximin Calibration for CDM} 
\label{subsec:pfoffinoptparworst}
\begin{proof}
We use the notations $V_i(s_i^*,\widehat{\mathcal B}_i(s_i))$ and $\mathcal N_i(s_i^*,\widehat{\mathcal B}_i(s_i))$ defined in Appendix \ref{subsec:pfoffindstaave}. 
The maximum over-enrollment cost  for any  $s_i\in[0,1]$ is
\begin{equation*}
\max_{s_i^*\in[0,1]}\left\{\text{OE}(s_i) \right\} =  \gamma_i\{\mathcal N_i(1,\widehat{\mathcal B}_i(s_i)) - \mathcal N_i(1,\widehat{\mathcal B}_i(1))\} - \{V_i(1,\widehat{\mathcal B}_i(s_i)) - V_i(1,\widehat{\mathcal B}_i(1))\}.
\end{equation*}
Since $\gamma_i>\sup_{j\in\mathcal A^{T+1}}\{v_j+e_{ij}\}$, $\max_{s_i^*}\left\{\text{OE}(s_i) \right\} $ is strictly decreasing in $s_i$.  
The maximum under-enrollment cost  for any  $s_i\in[0,1]$ is
\begin{equation*}
\max_{s_i^*\in[0,1]}\left\{\text{UE}(s_i) \right\} = V_i(0,\widehat{\mathcal B}_i(0)) - V_i(0,\widehat{\mathcal B}_i(s_i)),
\end{equation*}
which is strictly increasing in $s_i$.
Hence, the goal of maximizing the minimal  total  expected payoff  $\max_{s_i}\min_{s_i^*}\{U^\text{S}_i(\widehat{\mathcal B}_i(s_i))\}$ is equivalent to minimizing the larger one between OE$(s_i)$ and UE$(s_i)$:
\begin{equation*}
\begin{aligned}
\min_{s_i\in[0,1]}\max\left\{\max_{s_i^*}\{\text{OE}(s_i)\}, \max_{s^*}\{\text{UE}(s_i)\} \right\}.
\end{aligned}
\end{equation*}
This objective amounts to finding $s_i$ such that
\begin{equation}
\label{eqn:maxsoeue}
\max_{s_i^*\in[0,1]}\{\text{OE}(s_i)\}= \max_{s_i^*\in[0,1]}\{\text{UE}(s_i)\}.
\end{equation} 
This proves Theorem \ref{cor:introofvalueV}.
Moreover, there exists a unique solution to Eq.~(\ref{eqn:maxsoeue}) since when $s_i=0$, 
$\max_{s_i^*}\left\{\text{OE}(0) \right\} >\max_{s_i^*}\left\{\text{UE}(0) \right\}=0$, and when $s_i=1$, 
$\max_{s_i^*}\left\{\text{UE}(1) \right\} > 0 = \max_{s_i^*}\left\{\text{OE}(1) \right\}$, and together with the fact that 
$\max_{s_i^*}\left\{\text{OE}(s_i) \right\} $ and $\max_{s_i^*}\left\{\text{UE}(s_i) \right\} $ are monotonic continuous functions of $s_i$.

If $F_{s_i^*}(\cdot)$ has discrete support, we requires that the maximal UE$(s_i)$ is at least the maximal OE$(s_i)$. Hence, we need to change the goal in Eq.~(\ref{eqn:maxsoeue}) to finding the minimal $s_i\in[0,1]$ such that $ \max_{s_i^*}\{\text{UE}(s_i)\}\geq \max_{s_i^*}\{\text{OE}(s_i)\}$.
This completes the proof.
\end{proof}

\subsection{Proof of Theorem \ref{cor:cdmagentincentive}: Incentives of Agents} 
\label{subsec:pfofincentives}

\begin{proof}
First, by Theorem \ref{thm:cutoffaystate}, the CDM takes the cutoff strategy with respect to the arms' latent utilities  given by Eq.~(\ref{eqn:defofutility}). Since the agents’ true preferences are determined by the arms’ latent utilities, the agents would act according to their true preferences under the CDM.

Second, Theorem \ref{thm:optestpi} proves the consistency of the estimated acceptance probability using historical data. Theorems \ref{thm:optimalstates} and  \ref{cor:introofvalueV} show that the CDM maximizes the agent's expected payoff, given the population acceptance probability, where the expected payoff  is measured in either average-case or worst-case given the uncertain true state.  Thus, it is a dominant strategy for each agent to act according to the CDM. 

Combining these two observations, we conclude that as $T\to\infty$, the CDM is a procedure that gives agents the incentives to act according to true preferences. 
\end{proof}

\subsection{Proof of Theorem \ref{thm:stabcutoff}: Stability of CDM} 
\label{subsec:pfofstacutff}
\begin{proof}
Suppose that an agent-arm pair $(P_i,A_j)$ forms a blocking pair. Then it implies one of the following two cases:
\begin{itemize}
\item[(I).] $P_i$ prefers $A_j$ to some of its matched arms. 
\item[(II).]$P_i$ has unfilled quota and $A_j$ is unmatched. 
\end{itemize}

For case (I), $P_i$ must have pulled $A_j$ according to the cutoff strategy of CDM  by Theorem \ref{thm:cutoffaystate}.
However, $P_i$ must have been  subsequently rejected by $A_j$ in favor of some agent that $A_j$ liked better. 
Hence, $A_j$ must prefer its currently matched agent to $P_i$ and there is no instability. 

For case (II), $P_i$ did not pull $A_j$ since otherwise, $A_j$ would have accepted $P_i$. 
Note that Theorem \ref{thm:optestpi} proves the consistency of acceptance probability estimate using historical data.
The individual rationality in Eq.~(\ref{eqn:accepcond}) implies that  $P_i$ would find $A_j$ unacceptable in the decentralized market, given the population acceptance probability. Thus, there is  no instability.  

Combining these two cases, we conclude that as $T\to\infty$, CDM yields a stable matching in decentralized markets. 
\end{proof}

\subsection{Proof of  Theorem \ref{prof:faircdm}: Fairness of CDM}
\label{sec:profoffaircdm}
\begin{proof}
By Theorem \ref{cor:cdmagentincentive}, the CDM gives agents the incentives to act on their true preferences, as $T\to\infty$. Hence, each agent pulls arms according to its true preference for arms. If an arm $A_j$ prefers an agent $P_{i'}$ to another agent $P_i$ that pulls $A_j$, then all arms pulled by $P_{i'}$ must rank above $A_j$ according to the true preference of $P_{i'}$.
By definition, we conclude that the matching process according to CDM is fair.
\end{proof}

\subsection{Proof of Theorem \ref{thm:unfairnessoracle}: Unfairness of the Oracle Set}
\label{sec:appproofofunfair}
\begin{proof}
Denote by $e_i^*(v)$ the cutoff curve corresponding to the oracle arm set $\mathcal B_i^*$ in  Eq.~(\ref{eqn:oracleset}):
\begin{equation*}
\begin{aligned}
e_i^*(v_j)  = \min\left\{\max\left\{ \gamma_i\cdot\mathbb P(s_i^*\in O_{\mathcal B^*_i})\frac{\E_{s_i^*}[\pi_i(s_i^*,v_j)\ |\ s_i^*\in O_{\mathcal B^*_i}]}{\E_{s_i^*}[\pi_i(s_i^*,v_j)]} - v_j, 0\right\}, 1\right\},\quad\forall j.
\end{aligned}
\end{equation*}
The average-case expected utility of an arm from the cutoff curve is
\begin{equation*}
\mathcal U_i(v_j,e_i^*(v_j)) = (v_j+e^*_{i}(v_j))\cdot\E_{s_i^*}[\pi_i(s_i^*,v_j)] - \gamma_i\cdot\mathbb P(s_i^*\in O_{\mathcal B^*_i})\E_{s_i^*}[\pi_i(s_i^*,v)\ |\ s_i^*\in O_{\mathcal B^*_i}].
\end{equation*}
Hence, for $e_i^*(v_j)\in(0,1)$,
\begin{equation*}
    \mathcal U_i(v_j,e_i^*(v_j)) = 0.
\end{equation*}

Since the acceptance probability $\pi_i(s_i,v)$ is  increasing in $s_i$, 
$\E_{s_i^*}[\pi_i(s_i^*,v)\ |\ s_i^*\in O_{\mathcal B^*_i}] > \E_{s_i^*}[\pi_i(s_i^*,v)]$. Thus, for $e_i^*(v_j)\in(0,1)$,
\begin{equation}
\label{eqn:defofuistarvei}
\begin{aligned}
v_j+e^*_{i}(v_j) & = \gamma_i\cdot\mathbb P(s_i^*\in O_{\mathcal B^*_i})\frac{\E_{s_i^*}[\pi_i(s_i^*,v_j)\ |\ s_i^*\in O_{\mathcal B^*_i}]}{\E_{s_i^*}[\pi_i(s_i^*,v_j)]} \\
& > \gamma_i\cdot\mathbb P(s_i^*\in O_{\mathcal B^*_i}).
\end{aligned}
\end{equation}
By Eq.~(\ref{eqn:changeparpisvj}),  we can derive that
\begin{equation*}
\E_{s_i^*}\left[\frac{\partial\pi_i(s_i^*,v_j)}{\partial v}\right] < \E_{s_i^*}\left[\left.\frac{\partial\pi_i(s_i^*,v_j)}{\partial v}\ \right|\ s_i^*\in O_{\mathcal B^*_i}\right]<0,\quad\forall v_j\in(v',v'').
\end{equation*}
This inequality together with Eq.~(\ref{eqn:defofuistarvei}) yield that for $v_j\in(v',v'')$ and $e_i^*(v_j)\in(0,1)$,
\begin{equation*}
\begin{aligned}
\left.\frac{\partial \mathcal U_i(v,e_i^*(v))}{\partial v}\right\vert_{v=v_j} & = \E_{s_i^*}[\pi_i(s_i^*,v_j)] +  (v_j+e^*_{i}(v_j))\E_{s_i^*}\left[\frac{\partial\pi_i(s_i^*,v_j)}{\partial v}\right]\\
&\quad\quad\quad\quad - \gamma_i\cdot\mathbb P(s_i^*\in O_{\mathcal B^*_i})\E_{s_i^*}\left[\left.\frac{\partial\pi_i(s_i^*,v_j)}{\partial v}\ \right|\ s_i^*\in O_{\mathcal B^*_i}\right]\\
&<\E_{s_i^*}[\pi_i(s_i^*,v_j)].
\end{aligned}
\end{equation*}
Thus, by the implicit function theorem,
\begin{equation*}
\begin{aligned}
\left|\frac{de^*_i(v_j)}{dv}\right|  = \left|\frac{\partial \mathcal U_i(v_j,e_i^*(v_j))/\partial v}{\partial \mathcal U_i(v_j,e_i^*(v_j))/\partial e_i^*}\right|<\frac{\E_{s_i^*}[\pi_i(s_i^*,v_j)]}{\E_{s_i^*}[\pi_i(s_i^*,v_j)]}=1,\quad\forall v_j\in(v',v''),
\end{aligned}
\end{equation*}
Therefore, we can  find two arm sets $\mathcal B_i^{(1)},\mathcal B_i^{(2)}\subseteq (v',v'')\times [0,1]$ and a constant $c_0$ such that, for all $(v_j,e_{ij})\in \mathcal B_i^{(1)}, e_{ij}>e^*_i(v_j)$ and $v_j+e_{ij}<c_0$, and for all $(v_j,e_{ij})\in \mathcal B_i^{(2)}, e_{ij}<e^*_i(v_j)$ and $v_j+e_{ij}>c_0$. Hence, the arms in $\mathcal B_i^{(2)}$ have justified envy toward arms in $\mathcal B_i^{(1)}$. We refer to  Figure \ref{fig:faircdm} for an illustration.
By definition of fairness, the strategy corresponding to the oracle set  in  Eq.~(\ref{eqn:oracleset}) is unfair.
\end{proof}

\end{document}